\newtheorem{theo}{Theorem}[section]
\newtheorem{prop}[theo]{Proposition}
\newtheorem{lemm}[theo]{Lemma}
\newtheorem{ex}[theo]{Example}
\theoremstyle{definition}
\newtheorem{def1}[theo]{Definition}
\theoremstyle{remark}
\newtheorem{rema}[theo]{Remark}
\newcommand{\Op}{\operatorname{Op}}
\newcommand{\nwc}{\newcommand}
\nwc{\eps}{\epsilon}
\nwc{\ep}{\epsilon}
\nwc{\vareps}{\varepsilon}
\nwc{\Oph}{\operatorname{Op}_\hbar}
\nwc{\la}{\langle}
\nwc{\ra}{\rangle}
\nwc{\mf}{\mathbf} 
\nwc{\blds}{\boldsymbol} 
\nwc{\ml}{\mathcal} 
\nwc{\defeq}{\stackrel{\rm{def}}{=}}
\nwc{\cE}{\ml{E}}
\nwc{\cN}{\ml{N}}
\nwc{\cO}{\ml{O}}
\nwc{\cP}{\ml{P}}
\nwc{\cU}{\ml{U}}
\nwc{\cV}{\ml{V}}
\nwc{\cW}{\ml{W}}
\nwc{\tU}{\widetilde{U}}
\nwc{\IN}{\mathbb{N}}
\nwc{\IR}{\mathbb{R}}
\nwc{\IZ}{\mathbb{Z}}
\nwc{\IC}{\mathbb{C}}
\nwc{\IT}{\mathbb{T}}
\nwc{\IS}{\mathbb{S}}
\nwc{\tP}{\widetilde{P}}
\nwc{\tPi}{\widetilde{\Pi}}
\nwc{\tV}{\widetilde{V}}
\nwc{\supp}{\operatorname{supp}}
\nwc{\rest}{\restriction}
\begin{document}

\title[Spectral analysis of Morse-Smale flows I]{Spectral analysis of Morse-Smale flows I:\\
Construction of the anisotropic spaces}

\subjclass[2010]{37D15, 58J50}

\author[Nguyen Viet Dang]{Nguyen Viet Dang}

\address{Institut Camille Jordan (U.M.R. CNRS 5208), Universit\'e Claude Bernard Lyon 1, B\^atiment Braconnier, 43, boulevard du 11 novembre 1918, 
69622 Villeurbanne Cedex }

\email{dang@math.univ-lyon1.fr}

\author[Gabriel Rivi\`ere]{Gabriel Rivi\`ere}

\address{Laboratoire Paul Painlev\'e (U.M.R. CNRS 8524), U.F.R. de Math\'ematiques, Universit\'e Lille 1, 59655 Villeneuve d'Ascq Cedex, France}

\email{gabriel.riviere@math.univ-lille1.fr}

\begin{abstract} 
We prove the existence of a discrete correlation spectrum for Morse-Smale flows acting on smooth forms on a compact manifold. This is done by constructing spaces of currents with anisotropic Sobolev regularity on which the Lie derivative has a discrete spectrum.  
\end{abstract}

\maketitle

\section{Introduction}

Given a smooth ($\ml{C}^{\infty}$), compact, oriented, boundaryless manifold $M$ of dimension $n\geq 1$ and a smooth flow $\varphi^t:M\rightarrow M$, a basic question from dynamical systems is to understand the 
long time behaviour of the flow. There are many ways to approach this problem. For instance, one can define 
the \emph{correlation function}~:
\begin{equation}\label{e:correlation}
\boxed{C_{\psi_1,\psi_2}(t):=\int_M\varphi^{-t*}(\psi_1)\wedge\psi_2,}
\end{equation}
with $\psi_1\in\Omega^k(M)$ and $\psi_2\in\Omega^{n-k}(M)$. Then, if one can describe the limit of this quantity as $t\rightarrow+\infty$, then it gives some information 
on the weak limit of $\varphi^{-t*}(\psi_1)$ in the sense of currents. 
Studying directly the limit of $C_{\psi_1,\psi_2}(t)$ as $t\rightarrow +\infty$ is not often possible and one may first introduce its Laplace transform~:
\begin{equation}\label{e:Laplacecorrel}
\boxed{ \hat{C}_{\psi_1,\psi_2}(z):=\int_0^{+\infty}e^{-tz}C_{\psi_1,\psi_2}(t)dt.}
\end{equation}

Note that this is well defined for $\text{Re}(z)>c$ with $c>0$ depending only on the flow $\varphi^t$. Instead of studying the long time limit, one could then try to understand if 
this holomorphic function has a meromorphic extension to a larger half-plane. If so, the poles and their residues also give some informations on the long time dynamics of the flow. Therefore, in the sequel, the set of poles
and residues of the Laplace transformed correlation functions will be called
\textbf{correlation (or Pollicott-Ruelle) spectrum} of the flow.  
These kind 
of questions were for instance considered by Pollicott~\cite{Po85} and Ruelle~\cite{Ru87a} in the framework of Axiom $A$ flows that we shall now discuss.

In fact, this type of problem is very hard at this level of generality and some assumptions on the nature of the flow should be made to obtain some nontrivial results. 
A natural situation where one may expect some answer is when some hyperbolicity is involved in the system, e.g. for Axiom A flows in 
the sense of Smale~\cite{Sm67}. In that framework, one can decompose the nonwandering set of the flow into finitely many invariant hyperbolic 
subsets $(\Lambda_j)_{j=1}^K$, which are called the basic sets of the flow. Examples of such flows are geodesic flows on negatively curved 
manifolds or gradient flows associated with a Morse function. As most of the time on a given orbit
is spent in some neighborhood of these basic sets, it is natural to first restrict to test forms which are supported in a small neighborhood of a given $\Lambda_j$. For a slightly 
different correlation function associated with a Gibbs measure of $\Lambda_i$, Pollicott~\cite{Po85} and Ruelle~\cite{Ru87a} proved the meromorphic extension 
of the Laplace transform to some half-plane $\text{Re}(z)\geq -\delta$ with $\delta>0$. Their proof relies on the symbolic coding by Markov partitions of such flows that was constructed by Bowen~\cite{Bo73}. 
In the last fifteen years, many progresses have been made towards this problem by adopting a slightly different point of view. Namely, 
one can observe that, for
a $k$-form $\psi_1$, its pull--back 
$\varphi^{-t*}(\psi_1)$ by the flow solves 
the following partial differential equation:
$$\boxed{\partial_t\psi=-\ml{L}_V^{(k)}\psi,\quad\psi(t=0)=\psi_1,}$$
where $\ml{L}_V^{(k)}=(d+\iota_V)^2$ is the Lie derivative along the vector field $V$ associated with the flow $\varphi^t$.
In particular, if one can find an appropriate Banach space on which 
$-\ml{L}_V$ has a discrete spectrum (with finite multiplicity) on the half-plane $\text{Re}(z)\geq-\delta$ for some positive $\delta$, then one can verify that $\hat{C}_{\psi_1,\psi_2}(z)$ 
has a meromorphic extension to the same half-plane. This approach has been initiated by Liverani~\cite{Li04} in the context of contact Anosov flows and it was further developed 
in~\cite{BuLi07, GiLiPo13} where it is proved, among other things,
that $\hat{C}_{\psi_1,\psi_2}(z)$ has a meromorphic extension to the entire complex plane for Anosov flows. In these references, one of the 
key ingredient is the construction of Banach spaces with anisotropic H\"older regularity on which $-\ml{L}_V$ has good spectral properties. 
Alternative spaces based on microlocal 
tools were developed by Dyatlov, Faure, Sj\"ostrand, Tsujii and Zworski~\cite{Ts10, FaSj11, Ts12, DyZw13, FaTs13}. This complementary approach 
allowed to bring new perspectives on the fine structure of this correlation spectrum in the Anosov case. Coming back to the case of Axiom A flows, Dyatlov and Guillarmou proved that 
$\hat{C}_{\psi_1,\psi_2}(z)$ admits a meromorphic extension to $\IC$ provided that we only consider test forms which are compactly supported in a neighborhood of a fixed basic 
set $\Lambda_j$~\cite{DyGu14}. We should point that progress 
for flows follow from 
earlier results for hyperbolic diffeomorphisms 
that we will not discuss here. We refer the reader 
to the book of Baladi for a recent detailed account on that case~\cite{Ba16}.

If we look at the general framework of Axiom A flows, we already noticed that these different results do not say at 
first sight much things on the \emph{global dynamics} of the 
flow. In fact, in all the results we mentioned so far, it is important that we restrict ourselves to test forms which are compactly 
supported near a fixed basic set $\Lambda_j$ 
of the flow. In the case of geodesic flows on negatively curved manifolds, this restriction is of course artificial as there is only one basic set which 
covers the entire manifold. However, in general, there may be several basic sets that are far from covering the entire manifold and, if we 
remember that Axiom A flows arised as far-reaching generalizations of gradient flows associated with a Morse function~\cite{Sm67}, then understanding 
the global dynamics sounds also important as it provides some informations on the topology of the manifold~\cite{Sm60, Fr82}. Several difficulties appear 
if we want to consider this global question and let us mention at least two of them: (1) the flow is not topologically transitive on $M$ (while 
it is on a fixed $\Lambda_j$), (2) hyperbolicity only holds on the basic sets. 

\textbf{From Morse-Smale gradient flows to Morse--Smale flows.}
In the case of certain Morse-Smale gradient flows, the fact that $C_{\psi_1,\psi_2}(t)$ admits 
a limit as $t\rightarrow +\infty$ for \emph{any} choice of $\psi_1$ and $\psi_2$ was 
proved by Harvey and Lawson~\cite{HaLa00, HaLa01} -- see also~\cite{Lau92} for earlier related results. 
While the proof of Harvey and Lawson was based on the theory of currents \`a la Federer, we 
recently showed how to develop an appropriate \emph{global} spectral theory for such gradient flows~\cite{DaRi16} and 
to derive a complete asymptotic expansion of the correlation function -- see 
also~\cite{FrenkelLosevNekrasov1} for related results in the context of quantum field theory. 
In particular, this shows that $\hat{C}_{\psi_1,\psi_2}$ has a meromorphic extension to $\mathbb{C}$. Moreover, as a byproduct of our spectral analysis, 
we obtained a new spectral interpretation of the Thom-Smale-Witten complex as the kernel of the operator $-\ml{L}_V$ 
acting on certain anisotropic 
spaces of currents from which one can easily deduce the finiteness of Betti numbers, the Poincar\'e duality and the classical Morse 
inequalities. 

The goal of the present work is to continue to \emph{explore the global dynamics} of Axiom A flows by focusing on the particular case of 
Morse-Smale flows~\cite{Sm60, PaDeMe82} for which the basic sets are either closed orbits or fixed points. 
These flows 
also satisfy 
some transversality assumptions necessary to develop proper topological applications~\cite{Sm60} -- see section~\ref{s:smale} for more details.  
These flows are more general than the families of gradient flows we considered in~\cite{DaRi16} as 
they may have closed periodic orbits. 
They also form a natural subfamily of simple Axiom A flows~\cite{Sm67} 
where one may expect to develop a proper \emph{global} spectral theory. Recall that Peixoto proved that, 
in dimension $2$, these flows form an open and dense (in the $\ml{C}^{\infty}$ topology) family of all smooth vector fields~\cite{Pe62} 
while in higher dimension, Palis showed that they form an open subset of all smooth vector fields~\cite{Pa68}.

This article is the first in a series. Here, we develop a convenient \emph{global} functional framework for Morse-Smale flows in order to prove 
that the Laplace 
transformed correlator $\hat{C}_{\psi_1,\psi_2}(z)$
defined by equation (\ref{e:Laplacecorrel}) has a meromorphic extension to the entire complex plane. In~\cite{DaRi17b}, 
we will elaborate more on Morse-Smale flows and show how to give an explicit description of the poles and of 
residues of $\hat{C}_{\psi_1,\psi_2}(z)$ provided certain non resonance assumptions are satisfied. 
Finally, in~\cite{DaRi17c}, 
we will 
explain how to give topological interpretations for the correlation spectrum of
a class of flows which have a proper \emph{global} spectral theory that we call
microlocally tame. 
This class contains Anosov and Morse--Smale flows. 
We will show how to extract 
Morse inequalities for Pollicott-Ruelle resonant states
in the kernel of $\mathcal{L}_V$ and we will also
give some new identities 
relating regularized products of
Pollicott-Ruelle resonances on the imaginary
axis with a torsion function introduced by Fried which coincides with Reidemeister torsion
when $V$ is non singular~\cite{F87}.

\section{Statement of the main results}

In the article, $M$ will denote a smooth ($\ml{C}^{\infty}$), compact, oriented manifold without boundary and of dimension $n\geq 1$. We will 
also endow this manifold with a smooth ($\ml{C}^{\infty}$) Riemannian metric that will be denoted by $g$.

\subsection{Discrete correlation spectrum}
Our main result shows the existence of a meromorphic extension to $\IC$ of $\hat{C}_{\psi_1,\psi_2}(z)$~:

\fbox{
\begin{minipage}{0,94\textwidth} 
\begin{theo}[Resonances]\label{t:maintheo1} 
Let $\varphi^t$ be a $\ml{C}^{\infty}$ Morse-Smale flow which is $\ml{C}^1$ linearizable. 
Denote by $V$ the corresponding vector field and let $0 \leq k\leq n$. 

Then, there exists a minimal\footnote{We just mean that, for every $z_0\in\ml{R}_k(V)$, one can find $(\psi_1,\psi_2)$ such that there is indeed a pole at 
$z_0$.} 
discrete subset 
$\ml{R}_k(V)\subset\IC$ such that, given any $(\psi_1,\psi_2)\in\Omega^k(M)\times\Omega^{n-k}(M)$, the map
 $$z\mapsto \hat{C}_{\psi_1,\psi_2}(z)$$
 has a meromorphic extension from $\{\operatorname{Re}(z)>c\}$ to $\IC$ whose poles are of finite order and contained inside $\ml{R}_k(V)$.
\end{theo}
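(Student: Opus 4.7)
The plan is to reduce the meromorphic continuation of $\hat{C}_{\psi_1,\psi_2}(z)$ to a spectral statement about $-\ml{L}_V^{(k)}$ on a suitable Banach space of currents, and then to construct such a space by a microlocal escape-function procedure adapted to the Morse--Smale geometry. Since $\varphi^{-t*}\psi_1=e^{-t\ml{L}_V^{(k)}}\psi_1$, one has, for $\Re(z)$ large,
\[
\hat{C}_{\psi_1,\psi_2}(z)=\int_M\bigl((z+\ml{L}_V^{(k)})^{-1}\psi_1\bigr)\wedge\psi_2,
\]
so it is enough to extend the resolvent $(z+\ml{L}_V^{(k)})^{-1}$ meromorphically as a bounded operator on some space $\ml{H}_m^k$ containing $\Omega^k(M)$ densely and whose dual contains $\Omega^{n-k}(M)$.

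Following the Faure--Sjöstrand--Dyatlov--Zworski microlocal strategy, I would set $\ml{H}_m^k:=\Op(e^{-mG})L^2\Omega^k$, where $G\in S^0_{1,0}(T^*M)$ is an order function whose Hamiltonian derivative along the principal symbol $p(x,\xi)=\xi(V(x))$ of $-i\ml{L}_V^{(k)}$ satisfies $H_p G\leq -c\la\xi\ra$ outside an arbitrarily small conic neighborhood of a ``trapped set'' $K\subset T^*M$ attached to the basic sets of the flow. Standard conjugation then turns $-\ml{L}_V^{(k)}$ into the sum of the Lie derivative and a pseudodifferential operator whose principal symbol has large negative real part outside $K$, so that on $\ml{H}_m^k$ the operator $-\ml{L}_V^{(k)}$ is closed, generates a strongly continuous semigroup, and has discrete spectrum of finite multiplicity in a half-plane $\{\Re(z)>-mc\}$; analytic Fredholm theory then extends the resolvent meromorphically there. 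Letting $m\to+\infty$ covers all of $\IC$, while a standard argument showing that eigenvalues and generalized eigenforms do not depend on the weight produces the single discrete set $\ml{R}_k(V)$ of the statement.

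The main obstacle is constructing $G$ \emph{globally}, because hyperbolicity is not uniform on $M$ but concentrated on the basic sets, which for Morse--Smale flows reduce to a finite collection of hyperbolic fixed points and hyperbolic closed orbits. My plan here has three ingredients. First, near each critical element I would use the $\ml{C}^1$ linearizability hypothesis to pass to charts in which the flow is linear (respectively linear in a normal frame around each periodic orbit); the lifted flow on $T^*M$ then has an explicit hyperbolic structure on the fibers, and a local escape function of the form $m_s\log\la\xi_s\ra^2-m_u\log\la\xi_u\ra^2$, with $\xi_u,\xi_s$ dual to the unstable/stable directions, satisfies the pointwise decay estimate. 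Second, on the wandering part of $M$ every orbit flows from one basic set to another, and one can choose a smooth Lyapunov function $f:M\to\IR$ strictly decreasing along $V$, associated with the Morse--Smale filtration $\emptyset=M_0\subset M_1\subset\cdots\subset M_N=M$ by invariant sets; in this region a $G$ built from $f$ alone, modulated in the fiber variables, suffices. Third, these pieces have to be patched in the transition regions using a partition of unity subordinate to the filtration; the Smale transversality assumption, which guarantees that the unstable manifold of an ``upstream'' basic set meets the stable manifold of any ``downstream'' one transversally, is precisely what ensures that the unstable-fiber weight $m_u$ at the source and the stable-fiber weight $m_s$ at the target can be chosen compatibly wherever their supports overlap. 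Verifying this compatibility and checking that the resulting $G$ remains a genuine escape function on the whole of $T^*M$ is the new technical point compared to the Anosov setting and will be the main work of the construction.
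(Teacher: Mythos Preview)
Your overall strategy---reducing to meromorphy of the resolvent on an anisotropic Sobolev space built from an escape function, following Faure--Sj\"ostrand---is exactly the paper's, and your identification of the \emph{global} escape function as the main new obstacle is correct.

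Where your proposal diverges from the paper, and where it is least convincing, is the construction of $G$ itself. You suggest building local escape functions in the linearizing charts near each critical element, using a Lyapunov function on the wandering set, and then \emph{patching with a partition of unity} subordinate to the Smale filtration. The danger is that derivatives of cutoff functions contribute terms of the wrong sign to $X_H G$, and ``compatibility of the weights $m_u,m_s$ on overlaps'' is not by itself enough to control these: one needs a global notion of which cotangent directions are stable/unstable at \emph{every} point of $M$, not just near the basic sets, and your sketch does not supply one.

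The paper avoids patching altogether by first carrying out a global analysis of the lifted dynamics on $S^*M$. The relevant ``trapped directions'' are the conormal sets $\Sigma_{ss}=\bigcup_j N^*(W^u(\Lambda_j))\cap S^*M$, $\Sigma_{uu}=\bigcup_j N^*(W^s(\Lambda_j))\cap S^*M$ (and weak versions $\Sigma_s,\Sigma_u$). The main technical work of the paper (Theorems~\ref{t:compactness} and~\ref{t:attractor}) is to prove that these are \emph{compact} subsets of $S^*M$ and are attractors/repellers for $\tilde\Phi^t$; this is where the $\ml{C}^1$-linearization is actually used, via an inductive argument along Smale's partial order. Smale transversality enters simply as the disjointness $\Sigma_{uu}\cap\Sigma_{ss}=\emptyset$. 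Once these dynamical facts are in hand, the order function is built \emph{globally} by the same time-averaging trick as in the Anosov case (Lemma~\ref{l:faure-sjostrand}), combined \emph{additively} with Meyer's energy function $E$ to force decay on the wandering region: $\tilde m=-E+s+(n_0-s)\tilde m_1+(u-n_0)\tilde m_2$. No cutoff in the base is needed.

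So your high-level plan is right, but the route to $G$ you outline is not the one the paper takes and would require substantial additional work to make rigorous; the paper's approach trades the patching problem for the compactness/attractor theorems on the conormal sets, after which the Faure--Sj\"ostrand machinery runs essentially unchanged.
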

\end{minipage}
}
\\
\\
We refer to section~\ref{s:smale} for a precise definition of a Morse-Smale flow. By discrete, we mean that $\ml{R}_k(V)$ has no accumulation points. In particular, it is at most countable.
Elements inside $\ml{R}_k(V)$ are often referred as 
\emph{Pollicott-Ruelle resonances} or as the \emph{correlation spectrum} 
of the flow. 
Besides the fact that the flow is Morse-Smale, 
we need to make an assumption on the fact that the flow is $\ml{C}^1$-linearizable -- see paragraph~\ref{ss:C1-lin} for the precise definition. 
Roughly speaking, it means that the flow is $\ml{C}^1$-conjugated to a linear flow in a neighborhood of each basic set of the flow. This may 
sound like a big constraint. Yet, thanks to the Sternberg-Chen Theorem~\cite{Ch63, Ne69, WWL08}, it is satisfied as soon as
a certain number 
of non resonance assumptions are made on the Lyapunov exponents of the basic set. In particular, they are satisfied for a generic 
choice of Morse--Smale flow. We refer to appendix~\ref{aa:Sternberg} for a brief account on these non resonance hypotheses. This 
assumption of being $\ml{C}^1$-linearizable may be artificial at this stage of our analysis but it does not look obvious to us how to remove 
it in an easy manner. 

Note that, if $\psi_1$ 
and $\psi_2$ were supported in 
a small neighborhood of some given basic set, then the existence of this discrete correlation spectrum 
could be deduced near critical points from~\cite{BaTs07, BaTs08, GoLi08} and near closed orbits from~\cite{DyGu14}. Here, 
the main novelty is that the result holds \textbf{globally} on the manifold, i.e. without any restriction on the supports of $\psi_1$ and $\psi_2$. It also 
generalizes our previous results from~\cite{DaRi16} which were only valid for Morse-Smale \textbf{gradient flows which are not allowed 
to have periodic orbits}. Observe that, even if the nonwandering set is the union of finitely many basic sets, it is not obvious that the 
\emph{global correlation spectrum} should be the union of the correlation spectra associated with each individual basic set. We shall see 
in~\cite{DaRi17b} that this is indeed the case if enough nonresonance conditions are satisfied by the Lyapunov exponents. Without these assumptions, 
it is not completely obvious if some unexpected phenomenon may occur in the correlation spectrum.

In fact, thanks to its spectral nature, our proof will not only give the meromorphic extension of $\hat{C}_{\psi_1,\psi_2}(z)$ but also some information on its residues~:
\begin{theo}[Resonant states]\label{t:maintheo2}  Let $\varphi^t$ be a Morse-Smale flow which is $\ml{C}^1$ linearizable. 
Denote by $V$ the corresponding vector field 
and let $0 \leq k\leq n$. 

Then, for every $z_0\in\ml{R}_k(V)$, there exists an integer $m_k(z_0)\geq 1$ and a linear map of finite rank
$$\pi_{z_0}^{(k)}:\Omega^k(M)\rightarrow\ml{D}^{\prime k}(M)$$
 such that, given any $(\psi_1,\psi_2)\in\Omega^k(M)\times\Omega^{n-k}(M)$, one has, in a small neighborhood of $z_0$,
 $$\boxed{\hat{C}_{\psi_1,\psi_2}(z)=\sum_{l=1}^{m_k(z_0)}(-1)^{l-1}\frac{\left\la(\ml{L}_V^{(k)}+z_0)^{l-1}\pi_{z_0}^{(k)}(\psi_1),\psi_2\right\ra}{(z-z_0)^l}+
 R_{\psi_1,\psi_2}(z),}$$
 where $R_{\psi_1,\psi_2}(z)$ is a holomorphic function.
Moreover, any element $u$ in the range of $\pi_{z_0}^{(k)}$ satisfies the generalized eigenvalue equation
 $$\boxed{(\ml{L}_V^{(k)}+z_0)^{m_k(z_0)}(u)=0.}$$
\end{theo}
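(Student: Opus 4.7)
The plan is to deduce Theorem~\ref{t:maintheo2} directly from the spectral structure of $-\ml{L}_V^{(k)}$ acting on the anisotropic space of currents that will be constructed in the course of proving Theorem~\ref{t:maintheo1}. Call this space $\ml{H}^k$. The key input I will use from that construction is that $-\ml{L}_V^{(k)}:\ml{H}^k\to\ml{H}^k$ generates a strongly continuous semigroup $e^{-t\ml{L}_V^{(k)}}$, with the property that $\Omega^k(M)\hookrightarrow\ml{H}^k\hookrightarrow\ml{D}^{\prime k}(M)$ continuously, and that the resolvent $(\ml{L}_V^{(k)}+z)^{-1}:\ml{H}^k\to\ml{H}^k$ is meromorphic on $\IC$ with poles of finite rank. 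The candidate $\pi_{z_0}^{(k)}$ will be the spectral projector
\begin{equation*}
\pi_{z_0}^{(k)}:=\frac{1}{2i\pi}\oint_{\gamma_{z_0}}(\ml{L}_V^{(k)}+z)^{-1}\,dz,
\end{equation*}
where $\gamma_{z_0}$ is a small positively oriented loop around $z_0$ enclosing no other element of $\ml{R}_k(V)$.

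First I would rewrite the correlation function in semigroup form. For $\operatorname{Re}(z)>c$ and $\psi_1\in\Omega^k(M)$, $\psi_2\in\Omega^{n-k}(M)$, the identity $\partial_t\varphi^{-t*}\psi_1=-\ml{L}_V^{(k)}\varphi^{-t*}\psi_1$ together with the pairing $\la\alpha,\beta\ra:=\int_M\alpha\wedge\beta$ gives
\begin{equation*}
\hat{C}_{\psi_1,\psi_2}(z)=\int_0^{+\infty}e^{-tz}\la e^{-t\ml{L}_V^{(k)}}\psi_1,\psi_2\ra\,dt=\la(\ml{L}_V^{(k)}+z)^{-1}\psi_1,\psi_2\ra,
\end{equation*}
the last equality being the standard Laplace formula for the resolvent of a semigroup generator, valid on $\ml{H}^k$ provided $c$ is larger than the growth bound. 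This identity, established first for $\operatorname{Re}(z)>c$, then propagates by unique analytic continuation to the meromorphic extension furnished by Theorem~\ref{t:maintheo1}.

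Next I would expand the resolvent in Laurent series around each pole. Since $(\ml{L}_V^{(k)}+z)^{-1}$ is a meromorphic family of bounded operators on $\ml{H}^k$ with poles of finite order, near $z_0\in\ml{R}_k(V)$ standard spectral calculus yields an integer $m_k(z_0)\geq 1$ and operators $A_l$ such that
\begin{equation*}
(\ml{L}_V^{(k)}+z)^{-1}=\sum_{l=1}^{m_k(z_0)}\frac{A_l}{(z-z_0)^l}+H(z),
\end{equation*}
with $H(z)$ holomorphic near $z_0$. The relations $A_1=\pi_{z_0}^{(k)}$ and $A_l=(-1)^{l-1}(\ml{L}_V^{(k)}+z_0)^{l-1}\pi_{z_0}^{(k)}$ come from the resolvent equation $(\ml{L}_V^{(k)}+z)^{-1}(\ml{L}_V^{(k)}+z_0)^{l-1}\pi_{z_0}^{(k)}$ expanded in powers of $(z-z_0)$; this is classical Kato-type material that I would just record. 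Pairing with $\psi_2$ then gives exactly the boxed identity in the theorem.

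Finally, the generalized eigenvalue equation is a direct consequence of the nilpotency of the spectral projector on its range: from the Laurent expansion one obtains $(\ml{L}_V^{(k)}+z_0)^{m_k(z_0)}\pi_{z_0}^{(k)}=0$ as operators, hence any $u$ in the range of $\pi_{z_0}^{(k)}$ satisfies $(\ml{L}_V^{(k)}+z_0)^{m_k(z_0)}u=0$. Since the embedding $\ml{H}^k\hookrightarrow\ml{D}^{\prime k}(M)$ is continuous, $\pi_{z_0}^{(k)}$ factors as a map $\Omega^k(M)\to\ml{D}^{\prime k}(M)$ of finite rank, as announced. The main obstacle in this strategy is not the functional-analytic manipulation, which is essentially formal once the framework is fixed, but rather ensuring that the space $\ml{H}^k$ constructed for Theorem~\ref{t:maintheo1} really does realise $-\ml{L}_V^{(k)}$ as a semigroup generator with compact resolvent (or at least meromorphic resolvent of finite rank), so that the Laurent expansion above is legitimate; this is precisely the content of the technical work carried out elsewhere in the paper, and Theorem~\ref{t:maintheo2} is then an immediate corollary.
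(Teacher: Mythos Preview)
Your proposal is correct and follows essentially the same route as the paper. The paper constructs the anisotropic Sobolev spaces $\ml{H}_k^m(M)$, shows via Hille--Yosida that $-\ml{L}_V^{(k)}$ generates a strongly continuous semigroup there, identifies $\hat{C}_{\psi_1,\psi_2}(z)=\la(\ml{L}_V^{(k)}+z)^{-1}\psi_1,\psi_2\ra$ for $\operatorname{Re}(z)$ large, and then reads off the Laurent expansion of the resolvent around each pole using the spectral projector $\pi_{z_0}^{(k)}=\frac{1}{2i\pi}\int_{\gamma_{z_0}}(z+\ml{L}_V^{(k)})^{-1}dz$---exactly as you outline.
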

Here, $\ml{D}^{\prime k}(M)$ denotes the 
De Rham currents of degree $k$, i.e. the topological dual of $\Omega^{n-k}(M)$. Elements inside the 
range of $\pi_{z_0}^{(k)}$ are called the \emph{Pollicott-Ruelle resonant states}. Our proof will say more on the Sobolev 
regularity of these currents along stable and unstable directions. In~\cite{DaRi17b}, we will show how to exploit this Sobolev regularity to give a rather precise description of these 
resonant states in a neighborhood of the basic sets of the flow. In~\cite{DaRi17c}, we will show that some of these 
states have a deep topological meaning related to the De Rham 
complex~\cite{dRh80} and to its Reidemeister torsion~\cite{F87}. 
Finally, we emphasize that Dyatlov and Guillarmou characterized elements inside the range of $\pi_{z_0}^{(k)}$ in terms of currents solving the 
generalized eigenvalue equation and verifying some support and wavefront assumptions~\cite{DyGu14}. This was valid for general Axiom A flows but only 
near a fixed basic set. It is plausible that a similar property holds globally for Morse-Smale flows but we shall not discuss this question here.

\subsection{About the proofs of Theorems~\ref{t:maintheo1} and~\ref{t:maintheo2}}

We will consider a slightly more general framework than the one we described so far. Fix a complex vector bundle 
$\ml{E}\rightarrow M$ of rank $N$ and some connection $\nabla:\Omega^0(M,\ml{E})\rightarrow\Omega^1(M,\ml{E})$~\cite{Lee09} -- see also 
paragraph~\ref{ss:connection} for a brief reminder. Then, 
one can define a covariant derivative $d^{\nabla}:\Omega^{\bullet}(M,\ml{E})\rightarrow\Omega^{\bullet+1}(M,\ml{E})$ and introduce the operator
$$\forall 0\leq k\leq n,\ \ml{L}_{V,\nabla}^{(k)}:=d^{\nabla}\circ\iota_V+\iota_V\circ d^{\nabla}:\Omega^{k}(M,\ml{E})\rightarrow\Omega^{k}(M,\ml{E}).$$
Note that, in the present article, we will not make the assumption that $\nabla$ is flat, i.e. that $d^{\nabla}\circ d^{\nabla}=0$. Our goal is to 
introduce anisotropic Sobolev spaces of currents adapted to the dynamics of the Morse-Smale vector field $V$ in the sense that $-\ml{L}_{V,\nabla}^{(k)}$ has 
a discrete spectrum on this space at least for $\text{Re}(z)\geq -\sigma$ with $\sigma>0$. 
Introducing this algebraic framework is in fact important for the applications to topology we have in mind~\cite{DaRi17c}. In particular, it will allow us to 
 formulate generalizations of the Morse inequalities for general vector bundles and to relate the Pollicott-Ruelle spectrum to the Reidemeister 
 torsion appearing in the works of Fried~\cite{F87}.

The proofs of Theorems~\ref{t:maintheo1} and~\ref{t:maintheo2} will follow the 
microlocal approach of Faure and Sj\"ostrand~\cite{FaSj11}. Recall that their construction is based on the fact that the 
operator\footnote{They do not actually deal with this generalized geometric framework but their construction can be adapted as 
the operators under consideration have scalar principal symbols.} $-\ml{L}_{V,\nabla}^{(k)}$ is a differential operator whose principal symbol is $H(x;\xi)\textbf{Id}_{\Lambda^k(T^*M)\otimes\ml{E}}$ where
\begin{equation}\label{e:hamiltonian-lie}
 \forall (x;\xi)\in T^*M,\quad H(x;\xi):=\xi(V(x)).
\end{equation}
Then, they show that the spectrum of the operator $\ml{L}_{V,\nabla}^{(k)}$ (hence the correlation spectrum) can be obtained in a 
similar manner as in the theory of semiclassical resonances~\cite{HeSj86, DyZw16}. In particular, this requires to understand the dynamical 
properties of the Hamiltonian flow induced by $H$ on $T^*M$, equivalently of the symplectic lift of $\varphi^t$ to the cotangent space. Precisely, it 
requires to describe the topological and dynamical properties of the set of points which are trapped by 
the Hamiltonian dynamics either in the future or in the past. 
Due to the fact that we want to deal with the global correlation spectrum, 
we need to analyze precisely 
the global Hamiltonian dynamics which is the content of section~\ref{s:cotangent} and which is 
probably the main new difficulty compared to the Anosov case 
treated in~\cite{FaSj11}. We emphasize that this part of the proof 
is implicitely related to the classical results of Smale~\cite{Sm60} on Morse-Smale flows -- see also~\cite{Web06} for a formulation closer to ours 
in the case of gradient flows. The major difference is that we are interested here in the Hamiltonian dynamics
on $T^*M$ rather than the dynamics 
on the base space $M$. 
Note that we already had to deal with similar difficulties in the context of gradient flows~\cite{DaRi16} 
and we give here a more systematic approach which allows to deal with closed orbits. We should also point out that the results we obtain in that direction 
are in some sense related to some results of Laudenbach who gave a very precise description of the  
closure 
of unstable manifolds for Morse--Smale gradient flows~\cite{Lau92}. 
Once this is well understood, we construct in Lemma~\ref{l:escape-function} an appropriate 
escape (or Lyapunov) function for the Hamiltonian dynamics. 
This construction combines our analysis of 
the global dynamics with some results due to Meyer on the existence of energy functions for Morse-Smale flows~\cite{Me68}. Given this 
escape function, we can follow the spectral construction of Faure and Sj\"ostrand whose crucial ingredient is in fact the existence 
of such a function -- see section~\ref{s:sobolev} for more details.

\subsection{Organization of the article}
In section~\ref{s:smale}, we review some classical facts on Morse-Smale flows and introduce some conventions that we will use all along the article. 
Then, in section~\ref{s:cotangent}, we explain how to define the symplectic lift of a flow and we introduce coordinates systems adapted to 
our problem. Along the way, we collect a few facts from Floquet theory. In section~\ref{s:conormal}, we introduce the conormal bundle to 
the unstable manifolds and we state the main new dynamical results of the article, namely Theorems 
\ref{t:attractor} and \ref{t:compactness}. Sections~\ref{s:proofcompact} and~\ref{s:proof-attractor} are 
devoted to the proof of these results. In section~\ref{s:escape}, we use these results to construct an escape function for the 
symplectic lift of our flow. Finally, we prove Theorems~\ref{t:maintheo1} and~\ref{t:maintheo2} by 
constructing anisotropic Sobolev spaces adapted to the vector field $V$. 
In the
appendix~\ref{a:hyperbolic}, we collect some classical results 
on hyperbolic fixed points and closed orbits that we use in our proofs and, in appendix~\ref{a:proof-smale}, we briefly recall the proofs of some results due 
to Smale~\cite{Sm60} which may be helpful to understand the proofs of our results on the Hamiltonian dynamics.

\subsection*{Acknowledgements} We warmly thank Fr\'ed\'eric Faure for many explanations on his works with Johannes Sj\"ostrand and Masato Tsujii. We 
also acknowledge 
useful discussions related to this article and its companion articles~\cite{DaRi17b, DaRi17c} 
with Yannick Bonthonneau, Livio Flaminio, Colin Guillarmou, Benoit Merlet, Fr\'ed\'eric Naud and Patrick Popescu Pampu. The presentation 
of the article highly benefited from a detailed and careful report of an anonymous referee that we kindly acknowledge. The second author 
is partially supported by the Agence Nationale de la Recherche through the Labex CEMPI (ANR-11-LABX-0007-01) and the 
ANR project GERASIC (ANR-13-BS01-0007-01).

\section{Review on Morse-Smale flows}\label{s:smale}

The purpose of this preliminary section is to collect some well-known facts on the so-called Morse-Smale flows which were introduced by Smale 
in~\cite{Sm60} as a generalization of gradient flows induced by a Morse function. Besides the seminal work of Smale, good references 
on the subject are~\cite[Ch.~8]{Fr82} and~\cite[Ch.~4]{PaDeMe82}. 

\begin{rema}
All along the article, we implicitely assume that $M$ is endowed 
with a $\ml{C}^{\infty}$ Riemannian structure $g$ which plays an auxiliary role, all the results being independent of the choice of $g$. 
\end{rema}

\subsection{Definition and examples} We say that $\Lambda\subset M$ is an elementary critical element if $\Lambda$ is either a fixed point or a closed orbit of $\varphi^t$. Such an element is said to be hyperbolic if 
the fixed point or the closed orbit is hyperbolic -- see appendix~\ref{a:hyperbolic} for a brief reminder. Following~\cite[p.~798]{Sm67}~:

\fbox{
\begin{minipage}{0,94\textwidth} 
\begin{def1}
\label{d:morsesmale}
A flow $\varphi^t$ is called \textbf{a Morse-Smale flow} if the following properties hold~:

\begin{enumerate}
 \item the non-wandering set $\operatorname{NW}(\varphi^t)$ is the 
union of finitely many elementary critical elements 
$\Lambda_1,\ldots,\Lambda_K$ which are hyperbolic,
 \item for every $(i,j)\in \{1,\dots,K\}^2$ and for every $x$ in $W^u(\Lambda_i)\cap W^s(\Lambda_j)$, one has
\begin{equation}\label{e:transversality0}
 T_xM=T_xW^u(\Lambda_i)+T_xW^s(\Lambda_j)
\end{equation}  
where $W^{u}(\Lambda_i)$ (resp $W^s(\Lambda_j)$) denotes the
unstable (resp stable) manifold of $\Lambda_*$.
\footnote{See the appendix~\ref{a:hyperbolic} for the definition of the stable/unstable manifolds $W^{s/u}(\Lambda)$.} 
\end{enumerate}
\end{def1}
\end{minipage}
}
\\
\\
The second assumption should be understood as a transversality property between stable and unstable manifolds. This hypothesis is crucial in our analysis as 
it was already the case in the works of Smale~\cite{Sm60}. Note that these two submanifolds may in fact be equal at some points.
For instance, if we consider the gradient flow associated with the height function on $\mathbb{S}^2$
endowed with the canonical induced metric, any points outside the poles $\mathbf{n}$ and $\mathbf{s}$ belongs to $W^u(\mathbf{s})$ and $W^s(\mathbf{n})$ 
which are both two-dimensional.

As was already mentioned, these flows generalize the so-called Morse-Smale gradient flows, 
and they are the simplest examples of Axiom A flows in 
the sense of Smale~\cite[p.~803]{Sm67}. Before giving some remarkable properties of these flows, let us start with the following observation (which follows only from the first part of the definition):
\begin{lemm}\label{l:partition}
For every $x$ in $M$, there exists a unique pair 
$(i,j)\in \{1,\dots,K\}^2$ 
such that $x\in W^u(\Lambda_i)\cap W^s(\Lambda_j)$.
\end{lemm}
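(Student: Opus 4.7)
The plan is to show that for any point $x\in M$, its $\alpha$-limit set $\alpha(x)$ and $\omega$-limit set $\omega(x)$ are each exactly one of the critical elements $\Lambda_i$. Uniqueness is then automatic from the disjointness of the $\Lambda_i$'s, and membership in the corresponding stable/unstable manifolds follows from the definition of $W^{s/u}(\Lambda)$ recalled in appendix~\ref{a:hyperbolic}.

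First I would recall the classical facts about limit sets for a continuous flow on a compact metric space: for every $x\in M$, both $\alpha(x)$ and $\omega(x)$ are non-empty, compact, connected, flow-invariant subsets of $M$. Connectedness is the key topological ingredient and follows from $M$ being compact (so the orbit closure is compact). By the classical characterization of the non-wandering set, any point in $\omega(x)$ or $\alpha(x)$ is non-wandering, hence
\[
\omega(x)\cup \alpha(x)\subseteq \operatorname{NW}(\varphi^t)=\bigsqcup_{i=1}^K\Lambda_i.
\]
Here the union is disjoint because each $\Lambda_i$ is a compact hyperbolic critical element and the definition of a Morse--Smale flow requires only finitely many such elements; distinct elementary critical elements are automatically disjoint as subsets of $M$.

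Next I would use connectedness: since the $\Lambda_i$ are pairwise disjoint compact sets and $\omega(x)$ is connected, $\omega(x)$ must be entirely contained in some $\Lambda_j$. But $\omega(x)$ is a non-empty invariant subset of $\Lambda_j$, and $\Lambda_j$ is minimal under $\varphi^t$ (it is either a single fixed point, or a closed orbit with no proper non-empty invariant subset since it contains no fixed point by hyperbolicity). Therefore $\omega(x)=\Lambda_j$, which by the definition of the stable manifold of a hyperbolic critical element means $x\in W^s(\Lambda_j)$. The same reasoning applied to the reversed flow yields a unique $\Lambda_i$ with $\alpha(x)=\Lambda_i$, hence $x\in W^u(\Lambda_i)$.

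Finally, uniqueness of the pair $(i,j)$ is clear: if $x\in W^u(\Lambda_i)\cap W^u(\Lambda_{i'})$ then $\Lambda_i=\alpha(x)=\Lambda_{i'}$ and similarly for the stable manifolds. I do not anticipate any serious obstacle here; the only point requiring a moment of care is checking the connectedness of $\omega(x)$ (standard) and invoking the minimality of a hyperbolic closed orbit, both of which are textbook. Note that the transversality condition \eqref{e:transversality0} in Definition~\ref{d:morsesmale} plays no role in this lemma, as already signaled in the statement.
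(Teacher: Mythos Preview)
Your proof is correct and, in essence, follows the same line as the paper: both arguments reduce the question to showing that $\alpha(x)$ (resp.\ $\omega(x)$) meets exactly one $\Lambda_i$. The paper argues this by contradiction---if $x\in W^u(\Lambda_j)$ for two distinct $j$'s, then the backward orbit must oscillate between disjoint neighborhoods of these $\Lambda_j$'s, so it leaves any neighborhood of $\operatorname{NW}(\varphi^t)$ along a sequence of times, producing a non-wandering limit point outside $\operatorname{NW}(\varphi^t)$. This is precisely a bare-hands proof of the connectedness of $\alpha(x)$ in this specific setting. You instead invoke connectedness of limit sets as a standard fact and combine it with minimality of the $\Lambda_j$; this is cleaner and also makes the existence part (non-emptiness of $\omega(x)$) explicit, which the paper's proof leaves implicit.
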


In particular, we have~:
\begin{lemm}\label{l:partitionprop}
The unstable manifolds $(W^u(\Lambda_j))_{j=1,\ldots, K}$ form a partition of $M$, i.e.
$$M=\bigcup_{j=1}^KW^u(\Lambda_j),\ \ \text{and}\ \ \forall i\neq j,\ W^u(\Lambda_i)\cap W^u(\Lambda_j)=\emptyset.$$
\end{lemm}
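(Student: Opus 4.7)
The plan is to deduce Lemma~\ref{l:partitionprop} as an essentially immediate corollary of Lemma~\ref{l:partition}, which has just been stated. The two assertions to establish are that the unstable manifolds cover $M$ and that they are pairwise disjoint; both follow from the existence and uniqueness halves of Lemma~\ref{l:partition}, respectively.

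For the covering statement, I would pick an arbitrary $x\in M$ and invoke Lemma~\ref{l:partition} to produce the pair $(i,j)\in\{1,\dots,K\}^2$ with $x\in W^u(\Lambda_i)\cap W^s(\Lambda_j)$. In particular $x\in W^u(\Lambda_i)$, so $M=\bigcup_{j=1}^K W^u(\Lambda_j)$.

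For disjointness, I would argue by contradiction: suppose $x\in W^u(\Lambda_i)\cap W^u(\Lambda_j)$ with $i\neq j$. Apply Lemma~\ref{l:partition} once more to extract an index $k\in\{1,\dots,K\}$ such that $x\in W^s(\Lambda_k)$. Then both pairs $(i,k)$ and $(j,k)$ satisfy $x\in W^u(\Lambda_\cdot)\cap W^s(\Lambda_k)$, which contradicts the uniqueness clause of Lemma~\ref{l:partition}. Hence $i=j$, and $W^u(\Lambda_i)\cap W^u(\Lambda_j)=\emptyset$ whenever $i\neq j$.

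There is no real obstacle at this step; the content is genuinely packaged inside Lemma~\ref{l:partition}, whose own proof rests on the fact that for a Morse--Smale flow the $\alpha$- and $\omega$-limit sets of any $x\in M$ lie in $\operatorname{NW}(\varphi^t)$ and, by the hyperbolicity and finiteness of the critical elements, must coincide with exactly one $\Lambda_i$ and one $\Lambda_j$ respectively. Once this local-to-global structure of limit sets is in hand, the partition property is purely formal.
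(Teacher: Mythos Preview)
Your deduction is correct and matches the paper's intention: the text introduces Lemma~\ref{l:partitionprop} with ``In particular, we have'', signalling exactly the formal implication from Lemma~\ref{l:partition} that you spell out. The proof block that follows in the paper is really a proof of the substantive claim behind Lemma~\ref{l:partition} (existence and uniqueness of the index $i$ with $x\in W^u(\Lambda_i)$, via a connectedness/continuity argument on the backward orbit), rather than a separate argument for Lemma~\ref{l:partitionprop}; your last paragraph correctly identifies this as where the actual content lies.
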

The same property holds for stable manifolds.
\begin{proof} Fix some element $x$ in $M$. We aim at 
proving that there exists a unique $i$ such that $x\in W^{u}(\Lambda_i)$. 
By reversing the time, we would get the same conclusion for 
stable manifolds. Note that the hyperbolicity assumption 
ensures that the sets $(\Lambda_j)_j$ are disjoint. Hence, we can assume that $x\notin \cup_{j=1}^K\Lambda_j$.
Set $J$ to be the subset of $\{1,\dots,K\}$ such that $\forall j\in J, x\in W^u(\Lambda_j)$ and
$\forall j\notin J$, $x\notin W^u(\Lambda_j)$.
Assume that $J$ contains more than two elements. Then, we pick a small enough open neighborhood $O$ of 
$\cup_{j\in J}\Lambda_j$ having at least two connected components. By continuity of
$\varphi^{-t}(x)$ in $t$, we can extract a sequence $(t_n)_n$ of times tending to $+\infty$ such that
$\varphi^{-t_n}(x)\in M\setminus \tilde{O}$ ($\forall n$) where $\tilde{O} \supset O$ is some open neighborhood 
of the compact set $NW(\varphi^t)$. Therefore, by extracting again, we get
a sequence which converges to some point $y\in M$ belonging to $NW(\varphi^t)$ 
as a limit point of $\varphi^{-t_n}(x)$. On the 
other, $y\notin \tilde{O}$ by construction which yields the contradiction. Hence, $|J|=1$. 
\end{proof}

\subsection{Classical results on Morse-Smale flows} 
We shall now collect some other useful facts on Morse-Smale flows following the seminal article of Smale~\cite{Sm60}. 
As the proofs are not very long and as they are instructive for the arguments of the upcoming sections, we briefly 
recall how to prove most of them in appendix~\ref{a:proof-smale}.

\subsubsection{First properties}
We start with the following direct consequence from the definition -- see appendix~\ref{a:proof-smale} for details.

\begin{lemm}\label{l:index-increase} Let $x$ be an element in $W^u(\Lambda_i)\cap W^s(\Lambda_j)$. Then, one has
 $$\operatorname{dim} W^u(\Lambda_i)\geq \operatorname{dim}W^u(\Lambda_j).$$
 Moreover, if $x\notin\Lambda_j$, equality can occur only if $\Lambda_j$ is a closed orbit.

\end{lemm}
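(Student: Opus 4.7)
The proof will be a short dimension count that combines the transversality assumption from Definition \ref{d:morsesmale} with the fact that the flow direction provides a distinguished vector lying in both stable and unstable tangent spaces whenever $x$ is not a fixed point.

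First I would write out what transversality gives at $x\in W^u(\Lambda_i)\cap W^s(\Lambda_j)$:
\[
n \;=\; \dim W^u(\Lambda_i) + \dim W^s(\Lambda_j) - \dim\bigl(T_xW^u(\Lambda_i)\cap T_xW^s(\Lambda_j)\bigr).
\]
From the material recalled in Appendix~\ref{a:hyperbolic}, the invariant manifolds of a hyperbolic critical element $\Lambda$ satisfy $\dim W^u(\Lambda)+\dim W^s(\Lambda)=n+\varepsilon(\Lambda)$, where $\varepsilon(\Lambda)=0$ when $\Lambda$ is a fixed point and $\varepsilon(\Lambda)=1$ when $\Lambda$ is a closed orbit (because in the latter case the orbit itself lies in both $W^u$ and $W^s$). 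Substituting gives the key identity
\[
\dim W^u(\Lambda_i)-\dim W^u(\Lambda_j) \;=\; \dim\bigl(T_xW^u(\Lambda_i)\cap T_xW^s(\Lambda_j)\bigr) - \varepsilon(\Lambda_j).
\]

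Next I would feed in dynamical information. Both $W^u(\Lambda_i)$ and $W^s(\Lambda_j)$ are $\varphi^t$-invariant smooth submanifolds containing $x$, so $V(x)$ lies in the intersection $T_xW^u(\Lambda_i)\cap T_xW^s(\Lambda_j)$. Provided $x\notin\Lambda_j$ the vector $V(x)$ is nonzero: otherwise $x$ would be a fixed point in $W^s(\Lambda_j)$, forcing $x\in\Lambda_j$ since hyperbolic fixed points are isolated and $\varphi^t(x)\to\Lambda_j$. Consequently the intersection has dimension at least $1$. Plugging this into the displayed identity,
\[
\dim W^u(\Lambda_i)-\dim W^u(\Lambda_j)\;\geq\; 1-\varepsilon(\Lambda_j),
\]
which is $\geq 1$ if $\Lambda_j$ is a fixed point and $\geq 0$ if $\Lambda_j$ is a closed orbit. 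This gives the stated inequality and rules out equality in the fixed-point case when $x\notin \Lambda_j$.

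Finally, the remaining case $x\in\Lambda_j$ is handled by Lemma~\ref{l:partitionprop}: since $\Lambda_j\subset W^u(\Lambda_j)$ and $x\in W^u(\Lambda_i)$, the disjointness of distinct unstable manifolds forces $i=j$, and the dimensions trivially agree. There is no serious obstacle here; the only point requiring a little care is justifying $V(x)\neq 0$ outside $\Lambda_j$, which amounts to the elementary observation about fixed points in $W^s(\Lambda_j)$ made above.
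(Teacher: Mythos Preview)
Your proof is correct and follows essentially the same approach as the paper's: both use the transversality hypothesis together with the fact that the flow line through $x$ forces $\dim\bigl(T_xW^u(\Lambda_i)\cap T_xW^s(\Lambda_j)\bigr)\geq 1$ when $x\notin\Lambda_j$, and then split on whether $\Lambda_j$ is a fixed point or a closed orbit via the relation $\dim W^u(\Lambda_j)+\dim W^s(\Lambda_j)=n+\varepsilon(\Lambda_j)$. The only cosmetic difference is that you package the case split into the single parameter $\varepsilon(\Lambda_j)$, whereas the paper treats the two cases separately.
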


It roughly means that the dimension of the unstable manifolds must decrease along the limit sets of the flow. 
We continue our description of the properties of Morse-Smale flow with the no-cycle property.

\begin{lemm}[No-cycle]\label{l:nocycle}
 If $x$ belongs to $W^u(\Lambda_j)\cap W^s(\Lambda_j)$, then $x\in\Lambda_j$.
\end{lemm}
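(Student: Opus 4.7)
The plan is to argue by contradiction: I assume there exists $x \in W^u(\Lambda_j) \cap W^s(\Lambda_j)$ with $x \notin \Lambda_j$ and derive a contradiction with the Morse--Smale axiom that $\operatorname{NW}(\varphi^t)$ is a finite union of elementary critical elements.

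\textbf{Step 1 (reduction to a closed orbit).} I apply Lemma~\ref{l:index-increase} with $i = j$. The inequality $\dim W^u(\Lambda_j) \geq \dim W^u(\Lambda_j)$ is tautological, so the second part of the lemma forces $\Lambda_j$ to be a closed orbit, which I call $\gamma$; the fixed-point case is thus immediately excluded. I set $u = \dim W^u(\gamma)$ and $s = \dim W^s(\gamma)$, noting that by Floquet theory for a closed hyperbolic orbit one has $u + s = n + 1$, the extra dimension coming from the flow direction which is tangent to both $W^u(\gamma)$ and $W^s(\gamma)$.

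\textbf{Step 2 (producing a transverse homoclinic point for the Poincar\'e map).} Since $x \notin \gamma$, we have $V(x) \neq 0$, and $V(x)$ lies in $T_xW^u(\gamma) \cap T_xW^s(\gamma)$ by flow invariance. The orbit $\mathcal{O}(x)$ has $\gamma$ as both its $\alpha$- and $\omega$-limit set, so picking $q \in \gamma$ with a small smooth transversal $\Sigma$ to $V$ at $q$ and the associated Poincar\'e return map $P$, the orbit $\mathcal{O}(x)$ crosses $\Sigma$ at some point $y$ satisfying $P^{\pm n}(y) \to q$. The submanifolds $W^u(\gamma) \cap \Sigma$ and $W^s(\gamma) \cap \Sigma$ are smooth near $y$ of dimensions $u-1$ and $s-1$, and a short dimension count using that the transversality $T_x W^u(\gamma) + T_x W^s(\gamma) = T_x M$ forces $T_y W^u(\gamma) \cap T_y W^s(\gamma) = \mathbb{R} V(y)$ gives
\[
 T_y\bigl[W^u(\gamma) \cap \Sigma\bigr] + T_y\bigl[W^s(\gamma) \cap \Sigma\bigr] = T_y\Sigma,
\]
so $y$ is a transverse homoclinic point of the hyperbolic fixed point $q$ of $P$.

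\textbf{Step 3 (contradiction via Smale--Birkhoff).} The Smale--Birkhoff homoclinic theorem, applied to the smooth diffeomorphism $P$ at the hyperbolic fixed point $q$, then produces in any neighborhood of the homoclinic orbit a full Smale horseshoe for some iterate $P^N$. This horseshoe contains infinitely many hyperbolic periodic points of $P$, and each of them corresponds to a distinct hyperbolic closed orbit of $\varphi^t$ contained in $\operatorname{NW}(\varphi^t)$. This contradicts the Morse--Smale assumption and completes the proof.

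The main obstacle is the invocation of Smale--Birkhoff: while it is a classical result, it is the deep input that converts the geometric homoclinic picture into an abundance of periodic orbits. The transversality verification of Step 2 is, by comparison, routine once one remembers that $\mathbb{R} V$ is the common line in $T W^u(\gamma)\cap T W^s(\gamma)$, and Step 1 is an immediate application of the dimension lemma already proved.
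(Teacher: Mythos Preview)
Your proof is correct, but it takes a genuinely different route from the paper's. Both arguments begin identically, using Lemma~\ref{l:index-increase} to reduce to the case where $\Lambda_j$ is a closed orbit, and both pass to a Poincar\'e section $\Sigma$ where the putative point $x$ gives rise to a homoclinic point for the return map. The divergence is in how the contradiction is extracted. The paper applies the $\lambda$-Lemma (Theorem~\ref{t:palis}) directly: starting from small disks $D^u\subset W^u(\Lambda_j)$ and $D^s\subset W^s(\Lambda_j)$ through $x_0$, it flows them forward and backward until they are $\ml{C}^1$-close to the local invariant manifolds of the Poincar\'e map, and then a dimension count forces an intersection, producing a point of $U\cap\varphi^{-(t+t')}(U)$ for arbitrarily large $t+t'$. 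This shows that $x_0$ itself is nonwandering, contradicting $x_0\notin\cup_i\Lambda_i$. Your argument instead verifies that the homoclinic intersection is \emph{transverse} (a clean computation using $u+s=n+1$ and $T_xW^u\cap T_xW^s=\IR V(x)$) and then invokes the Smale--Birkhoff theorem to obtain a horseshoe, hence infinitely many closed orbits, contradicting the finiteness of the nonwandering set.

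The trade-off is this: the paper's argument is more self-contained, relying only on the $\lambda$-Lemma that is already recalled in the appendix and used elsewhere in the paper, and it pinpoints a single extra nonwandering point. Your argument imports a substantially heavier external result, but the transversality step you isolate is conceptually clarifying, and the conclusion (a full horseshoe) is of course much stronger than what is needed. Both are standard and valid ways to reach the contradiction.
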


Note that, in his original article~\cite{Sm60}, Smale took this property as one of the axioms satisfied by his flows. Yet, in~\cite{Sm67}, 
this assumption was removed as it can be deduced from the other axioms using the so-called $\lambda$-Lemma (or inclination Lemma) recalled in the 
appendix~\ref{a:proof-smale}.

\subsubsection{Ordering unstable manifolds.}

Let us now turn to the main feature of these flows which was proved by Smale in~\cite{Sm60} -- see also~\cite[p.752]{Sm67}:

\fbox{
\begin{minipage}{0,94\textwidth} 
\begin{theo}[Smale]\label{t:smale} Suppose that $\varphi^t$ is a Morse-Smale flow. Then, for every $1\leq j\leq K$, 
the closure of $W^u(\Lambda_j)$ is the union of certain $W^u(\Lambda_{j'})$. 

Moreover, if we define a relation $\preceq$ as follows~: 
$W^u(\Lambda_{j'})\preceq 
 W^u(\Lambda_j)$ whenever $W^u(\Lambda_{j'})$ is contained in the closure of $W^u(\Lambda_{j})$, then $\preceq$ 
 is a \textbf{partial order} on the set $W^u(\Lambda_j)_{j=1}^K$.
 
 Finally, if $W^u(\Lambda_{j'})\preceq
 W^u(\Lambda_j)$, then $\operatorname{dim}W^u(\Lambda_{j'})\leq \operatorname{dim}W^u(\Lambda_{j}).$
 \end{theo}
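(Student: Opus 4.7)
The plan is to treat the three assertions in order, with the inclination ($\lambda$-)lemma for hyperbolic critical elements (recalled in Appendix~\ref{a:proof-smale}), together with the Morse--Smale transversality, as the main technical ingredients.

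For the closure description, I would fix $x\in\overline{W^u(\Lambda_j)}$; by Lemma~\ref{l:partitionprop} it lies in a unique $W^u(\Lambda_{j'})$, and the goal is $W^u(\Lambda_{j'})\subseteq\overline{W^u(\Lambda_j)}$. Since $\overline{W^u(\Lambda_j)}$ is $\varphi^t$-invariant (as the closure of a $\varphi^t$-invariant set), applying $\varphi^{-t}$ to $x$ and sending $t\to+\infty$ already gives $\Lambda_{j'}\subseteq\overline{W^u(\Lambda_j)}$. Hence some sequence $(x_n)\subset W^u(\Lambda_j)$ accumulates at a point $p\in\Lambda_{j'}$, and by the local product structure in a Hartman--Grobman neighborhood of $\Lambda_{j'}$ combined with the transversality~\eqref{e:transversality0}, one can arrange that $W^u(\Lambda_j)$ carries, near each $x_n$, a small disk transverse to $W^s(\Lambda_{j'})$. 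The inclination lemma at $\Lambda_{j'}$ then guarantees that for every compact piece $K$ of $W^u(\Lambda_{j'})$, the forward iterates $\varphi^{t_n}(W^u(\Lambda_j))$ contain disks that $\ml{C}^1$-approximate $K$, so $K\subseteq\overline{W^u(\Lambda_j)}$ by flow-invariance; exhausting $W^u(\Lambda_{j'})$ by compacts finishes the first step.

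For the ordering, reflexivity is trivial and transitivity reduces to the set-theoretic fact that $\overline{W^u(\Lambda_{j'})}\subseteq\overline{W^u(\Lambda_j)}$ whenever $W^u(\Lambda_{j'})\subseteq\overline{W^u(\Lambda_j)}$. Antisymmetry is where Lemma~\ref{l:nocycle} enters: assuming $W^u(\Lambda_{j'})\preceq W^u(\Lambda_j)$ and $W^u(\Lambda_j)\preceq W^u(\Lambda_{j'})$ with $j\ne j'$, the first step produces a point $y\in W^u(\Lambda_j)\cap W^s(\Lambda_{j'})$ and, symmetrically, $y'\in W^u(\Lambda_{j'})\cap W^s(\Lambda_j)$; concatenating the $\ml{C}^1$-local accumulation of $W^u(\Lambda_j)$ on $W^u(\Lambda_{j'})$ along the orbit of $y$ with that of $W^u(\Lambda_{j'})$ on $W^u(\Lambda_j)$ along the orbit of $y'$ produces a point of $W^u(\Lambda_j)\cap W^s(\Lambda_j)$ lying outside $\Lambda_j$, contradicting Lemma~\ref{l:nocycle}. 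The dimension inequality is then an immediate consequence of Lemma~\ref{l:index-increase} applied to the point $y$ produced above.

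The main obstacle I anticipate is the careful application of the inclination lemma when $\Lambda_{j'}$ is a closed periodic orbit rather than a fixed point: the hyperbolic splitting is only transverse to the orbit direction, so one first passes to a Poincar\'e section, applies the inclination lemma for the return diffeomorphism, and then suspends the resulting $\ml{C}^1$-approximation back to the flow while keeping track of the monodromy along $\Lambda_{j'}$. A secondary but conceptually important subtlety is extracting a genuine transverse intersection point $y\in W^u(\Lambda_j)\cap W^s(\Lambda_{j'})$ from the mere accumulation $\Lambda_{j'}\subseteq\overline{W^u(\Lambda_j)}$; this rests on the local product structure around a hyperbolic critical element together with the Morse--Smale transversality, and it is precisely what makes the antisymmetry argument close up.
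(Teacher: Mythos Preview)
Your outline is close to the paper's proof in spirit---both hinge on the $\lambda$-lemma and Morse--Smale transversality---but there is a real gap in your first step, and it is precisely the point you flag as a ``secondary'' subtlety.

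From $\Lambda_{j'}\subset\overline{W^u(\Lambda_j)}$ you obtain $x_n\in W^u(\Lambda_j)$ with $x_n\to p\in\Lambda_{j'}$, and then assert that $W^u(\Lambda_j)$ carries near each $x_n$ a disk transverse to $W^s(\Lambda_{j'})$, so that the inclination lemma applies. But the inclination lemma (Theorem~\ref{t:palis}) requires a disk centred at a point \emph{of} $W^s(\Lambda_{j'})$, and you have not produced one: the Morse--Smale condition~\eqref{e:transversality0} speaks only at actual intersection points and gives no a priori control on $T_{x_n}W^u(\Lambda_j)$ as $x_n\to p$. What actually happens is this: flowing $x_n$ backward until it exits the linearizing box through the ``stable side'' yields a new sequence in $W^u(\Lambda_j)$ accumulating on some $q\in W^s(\Lambda_{j'})\setminus\Lambda_{j'}$; however $q$ lies only in $\overline{W^u(\Lambda_j)}$, not necessarily in $W^u(\Lambda_j)$ itself---it may belong to some intermediate $W^u(\Lambda_k)$. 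One must then iterate, producing a chain $j'=i_1,i_2,\dots,i_q=j$ with $W^u(\Lambda_{i_{p+1}})\cap W^s(\Lambda_{i_p})\neq\emptyset$ at each step, and termination comes from finiteness of the critical elements together with the no-cycle property. This chain argument is exactly Lemma~\ref{l:boundary2}; once it is in place, Lemma~\ref{l:boundary} (the direct inclination-lemma step, which already \emph{assumes} a connecting orbit) handles each link, and Lemma~\ref{l:partialorder} shortens chains to single connections, yielding the closure statement, antisymmetry, and the dimension bound simultaneously.

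So the missing ingredient is not a refinement of local product structure or transversality but an inductive chain argument; without it, neither your closure description nor the extraction of the connecting orbit $y\in W^u(\Lambda_j)\cap W^s(\Lambda_{j'})$---on which your antisymmetry and dimension arguments also rest---is justified.
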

\end{minipage}
}
\\
\\
We shall also denote $W^u(\Lambda_{j'})\prec
 W^u(\Lambda_j)$ if $W^u(\Lambda_{j'})\preceq
 W^u(\Lambda_j)$ and $j^\prime\neq j$.
We refer to appendix~\ref{a:proof-smale} for a reminder on the proof of this result. This partial 
order relation was 
crucial for Smale to construct a filtration of the manifold in order to prove his
Morse inequalities~\cite{Sm60} and it is  
related to the concept of topological stratification. A stratum is less than a bigger stratum if it 
lies in the closure of the bigger stratum. Following~\cite[p.~753]{Sm67} -- see also~\cite[Ch.~4]{PaDeMe82}, we can set

\begin{def1}[Smale quiver.]
The partial order relation on the collection of subsets $W^u(\Lambda_j)_{j=1}^K$ defined above is called \textbf{Smale causality relation}. 
We define an oriented graph $D$, called \textbf{Smale quiver}, whose $K$ vertices are given by $W^u(\Lambda_j)_{j=1}^K$. Two  
vertices $W^u(\Lambda_j), W^u(\Lambda_i) $ are connected
by an oriented edge starting at $W^u(\Lambda_j)$ and ending at $W^u(\Lambda_i) $
iff $W^u(\Lambda_j)\succ W^u(\Lambda_i)$.
\end{def1}
The set $\left(W^u(\Lambda_j)_{j=1}^K,\preceq\right)$ being partially ordered by Smale's Theorem, 
the oriented graph $D$ is sometimes referred to as the Hasse 
diagram of this partially ordered set.

\subsection{$\ml{C}^1$ Linearizable flows}\label{ss:C1-lin}

In order to construct anisotropic Sobolev spaces adapted to a Morse-Smale flow, we will need to make some extra assumption that will roughly says that the flow is linearizable in a 
$\ml{C}^1$ chart near every critical element $\Lambda_j$. Such an assumption may sound quite restrictive. Yet, thanks to Sternberg-Chen's Theorem (see appendix~\ref{aa:Sternberg} for 
a brief reminder), it is automatically satisfied as soon as certain (generic) non resonance assumptions on the Lyapunov exponents are satisfied.

For a hyperbolic fixed point $\Lambda$, we say that $\varphi^t$ is \emph{$\ml{C}^{1}$-linearizable near $\Lambda$} if there exists a 
$\ml{C}^1$ diffeomorphism 
$h: B_n(0,r)\rightarrow O$ (where $O$ is a small open neighborhood of $\Lambda$ and $B_n(0,r)$ is a small ball of radius $r$ centered at $0$ in $\IR^n$) 
s.t. $h(0)=\Lambda$ and $h^*(V)(x)=D_xh(L(x))$ where $L(x)$ is a vector field on $B_n(0,r)$ defined as
$$L(x):=Ax.\partial_x,$$
for some $A\in M_n(\IR)$.

For a hyperbolic closed orbit $\Lambda$ of minimal period $\ml{P}_{\Lambda}$, we say that $\varphi^t$ is \emph{$\ml{C}^{1}$-linearizable 
near $\Lambda$} if there exists a $\ml{C}^1$ diffeomorphism $h: (z,\theta)\in B_{n-1}(0,r)\times(\IR/\ml{P}_{\Lambda}\IZ)\rightarrow O$ (where $O$ 
is a small open neighborhood of $\Lambda$ and $r>0$ is small) and a 
$\ml{C}^1$ map $A:\IR/\ml{P}_{\Lambda}\IZ\rightarrow M_{n-1}(\IR)$ such that
$$h\left(\{0\}\times(\IR/\ml{P}_{\Lambda}\IZ)\right)=\Lambda,$$
and such that $h^*(V)=d_{z,\theta}h (L(z,\theta))$ where $L(z,\theta)$ is 
a vector field on $B_{n-1}(0,r)\times(\IR/\ml{P}_{\Lambda}\IZ)$ defined as
$$L(z,\theta)=A(\theta)z.\partial_z+\partial_{\theta}.$$

Finally, we say that a Morse-Smale flow is \emph{$\ml{C}^{1}$-linearizable} if it is 
$\ml{C}^1$-linearizable near each  
critical element 
$(\Lambda_j)_{j=1,\ldots,K}$. More generally, for any $k\geq 1$, we will say that it is \emph{$\ml{C}^k$-linearizable} if the ``linearizing'' diffeomorphism can be chosen of class $\ml{C}^k$ for every critical element.

\subsection{Existence of an energy function}

To conclude our short review on Morse-Smale flows, we mention the following result due to Meyer~\cite[p.~1034]{Me68} that 
will be central in our construction of anisotropic Sobolev spaces:
\begin{theo}[Meyer]\label{t:meyer} Let $V$ be a vector field generating a $\ml{C}^{\infty}$ Morse-Smale flow. 
Then, there exists a smooth function $E:M\rightarrow \IR$ such that
$$\ml{L}_VE\geq 0\ \text{on}\ M,\ \text{and}\ \ml{L}_VE>0\ \text{on}\ M\setminus (\cup_{i=1}^K\Lambda_i).$$ 
Moreover, $E$ is constant on every connected component of $\cup_{j=1}^K\Lambda_j$ and one can choose $E$ 
such that $E\rceil_{\Lambda_j}>E\rceil_{\Lambda_i}$ whenever $W^u(\Lambda_j)\prec W^u(\Lambda_i)$.
\end{theo}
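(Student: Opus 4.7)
The plan is threefold: (i) assign to each critical element a real value compatible with Smale's partial order, (ii) construct a local Lyapunov function near each critical element attaining the prescribed value, and (iii) patch these into a global smooth $E$ with $\ml{L}_V E\geq 0$. For (i), Theorem~\ref{t:smale} ensures that $\preceq$ is a partial order on $\{W^u(\Lambda_j)\}_{j=1}^K$, so by topological sorting I can pick real numbers $e_1,\dots,e_K$ with $e_j>e_i$ whenever $W^u(\Lambda_j)\prec W^u(\Lambda_i)$; the gaps $|e_j-e_i|$ will be rescaled in the last step.

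For (ii), near a hyperbolic fixed point $\Lambda_j$, classical Lyapunov theory applied to $d_{\Lambda_j}V$ yields positive-definite quadratic forms $Q_j^s,\,Q_j^u$ on the stable and unstable eigenspaces such that, in any linearizing chart, the function $E_j := e_j + Q_j^s - Q_j^u$ satisfies $E_j|_{\Lambda_j}=e_j$ and $\ml{L}_V E_j>0$ on $U_j\setminus\Lambda_j$ for a small open neighborhood $U_j$ of $\Lambda_j$. For a hyperbolic closed orbit, the same construction applied fiberwise using a Floquet frame on the normal bundle produces $E_j$ on a tubular neighborhood, constant on $\Lambda_j$ with value $e_j$ and satisfying $\ml{L}_V E_j>0$ off $\Lambda_j$ (the $\theta$-direction contributes trivially since $E_j$ is $\theta$-constant on the orbit).

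For (iii), proceed by induction on the Smale order, starting from the minimal elements of $\preceq$ (the sinks, carrying the largest $e_j$) and adding critical elements in increasing $\preceq$-order. At each stage $E$ is defined smoothly on a forward-invariant open set $O$ containing the treated critical elements and (pieces of) their stable manifolds, and satisfies $\ml{L}_V E\geq 0$ on $O$ with equality exactly on the treated critical elements. To incorporate the next $\Lambda_j$, glue $E_j$ on a neighborhood of $\Lambda_j$ to the existing $E$ on $O$ along the orbits in $W^u(\Lambda_j)\cap W^s(\Lambda_{j'})$ for already-treated $\Lambda_{j'}\prec\Lambda_j$, using a smooth cutoff whose level sets are nearly transverse to $V$ in the transition region; the transversality~(\ref{e:transversality0}) of stable and unstable manifolds combined with the no-cycle Lemma~\ref{l:nocycle} guarantees that orbits escape any fixed neighborhood of $\Lambda_j$ in bounded time into $O$, so the transition region has bounded flow-traversal time.

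The main obstacle is Step (iii): the cutoff in a naive partition-of-unity patching generates an indefinite-sign contribution $E\cdot\ml{L}_V\chi$ in $\ml{L}_V E$. The resolution is to arrange the cutoffs so that their level sets are transverse to $V$ (using flow-box coordinates in the transition region, which forces the sign of the cross-term), and then to rescale the gaps $e_j-e_i$ large enough so that the strictly positive local contribution $\ml{L}_V E_j>0$ dominates the bounded cross-term. Once this inductive construction terminates, the resulting $E$ satisfies the required inequalities globally; the constancy of $E$ on each connected component of $\cup_j\Lambda_j$ is automatic from $\ml{L}_V E=0$ on $\Lambda_j$ combined with the connectedness of each $\Lambda_j$, and the ordering $E|_{\Lambda_j}>E|_{\Lambda_i}$ for $W^u(\Lambda_j)\prec W^u(\Lambda_i)$ is built in via Step (i).
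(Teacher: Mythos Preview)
The paper does not prove this theorem; it is stated as a result of Meyer~\cite{Me68} and used as a black box. So there is no ``paper's own proof'' to compare against, and your sketch should be judged on its own merits.

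Your outline (i)--(iii) is the standard architecture for building Lyapunov functions for Morse--Smale systems, and steps (i) and (ii) are fine. The genuine difficulty is in (iii), and your proposed fix does not quite work as stated. You write that after arranging the cutoff $\chi$ via flow-box coordinates, one can ``rescale the gaps $e_j-e_i$ large enough so that the strictly positive local contribution $\ml{L}_V E_j>0$ dominates the bounded cross-term.'' But the cross-term in $\ml{L}_V(\chi E_j+(1-\chi)\tilde E)$ is $(E_j-\tilde E)\,\ml{L}_V\chi$, and in the transition region $E_j-\tilde E$ is of size $e_j-e_{j'}+O(1)$, so it scales \emph{linearly} with the gap you are enlarging, while $\ml{L}_V E_j$ (coming from the quadratic forms $Q_j^{s/u}$) does not scale at all. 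Enlarging the gaps therefore makes the cross-term worse, not better.

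What actually makes the gluing work is one of two mechanisms, neither of which is a rescaling. Either (a) one builds a Smale filtration $\emptyset=M_0\subset M_1\subset\cdots\subset M_K=M$ by compact sets with $V$ pointing inward on each $\partial M_k$, and constructs $E$ layer by layer on $M_k\setminus\operatorname{int}M_{k-1}$ so that no partition-of-unity cross-term ever appears (this is closer to Meyer's original argument); or (b) one designs $\chi$ to be genuinely monotone along every orbit in the transition region (not merely to have level sets transverse to $V$, which only says $\ml{L}_V\chi\neq 0$, not that it has a fixed sign), so that $(E_j-\tilde E)\,\ml{L}_V\chi\geq 0$ outright. Your mention of flow-box coordinates is in the right direction for (b), but transversality alone is insufficient: near a saddle, a radially symmetric cutoff has $\ml{L}_V\chi$ of both signs. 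You need to replace ``transverse'' by ``monotone along orbits'' and drop the rescaling claim.
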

In the articles of Meyer~\cite{Me68} and Smale~\cite{Sm67}, such a function is called an energy function for the flow. 
It is also quite common to call $E$ a Lyapunov function for the flow.

\section{Lifted dynamics on the cotangent space}\label{s:cotangent}

The results of Smale described the structure of the closure of unstable manifolds for Morse-Smale flows. In this article, we aim at applying 
microlocal methods for the study of such flows following for instance the strategy developed in~\cite{FaSj11}. For that purpose, we 
need to understand more things about these flows, more specifically we need to describe the properties of their symplectic lift
to $T^*M$ in order to be able to construct the anisotropic Sobolev spaces.  

 Following the strategy initiated in~\cite{DaRi16} for Morse-Smale gradient flows, this includes describing 
the closure of the conormals to unstable manifolds. Before entering this delicate issue, we recall how to define this symplectic lift 
and we collect in this section a few tools from Floquet theory that will be used in our proofs.
Our main emphasis here is on giving explicit coordinates representation
of the flow near critical elements.

\subsection{Hamiltonian lift}\label{ss:hamiltonian}
A flow on $M$ can be lifted to the cotangent space $T^*M$ as follows.
In this paragraph, we shall denote by $(x;\xi)$ the elements of $T^*M$ where $x\in M, \xi\in T^*_xM$. 
We associate to the vector field $V$ a Hamiltonian function,
$$\forall (x;\xi)\in T^*M,\ H(x;\xi):=\xi\left(V(x)\right).$$
This Hamiltonian function also induces a Hamiltonian flow that we denote by $\Phi^t:T^*M\rightarrow T^*M$ and whose vector field will be denoted by $X_H$. 
We note that, by construction, 
\begin{equation}\label{e:basicliftcotangentflow}
\Phi^t(x;\xi):=\left(\varphi^t(x);\left(d\varphi^t(x)^T\right)^{-1}\xi\right),
\end{equation}
and that \emph{this flow induces a diffeomorphism between $T^*M\backslash 0$ 
and $T^*M\backslash 0$}. Observe that we denoted by $\varphi^{t*}$ the action 
of the flow by pull--back
on smooth differential form in the introduction. 
This lifted flow also induces a 
smooth flow on the unit cotangent bundle $S^*M$, i.e.
$$\forall t\in\IR,\ \forall(x;\xi)\in S^*M,\ \tilde{\Phi}^t(x;\xi)=\left(\varphi^t(x);\frac{\left(d\varphi^t(x)^T\right)^{-1}\xi}{\left\|\left(d\varphi^t(x)^T\right)^{-1}\xi\right\|_x}\right).$$
Here, $\|.\|_x$ represents the metric induced on $T_x^*M$ by $g_x$. 
Note that this is an auxiliary datum in our analysis~: 
introducing it allows us to work with compact subsets 
of $T^*M$ rather
than with conical subsets. 
We denote by $\tilde{X}_{H}$ the induced smooth vector field on $S^*M$.

\subsection{Writing the flow in local coordinates near critical points}\label{ss:coordinates}

Let us rewrite the Morse-Smale flow and its Hamiltonian lift near the critical elements of $\varphi^t$ in some well-chosen local coordinates. For that purpose, we now make 
the assumption that $\varphi^t$ is a Morse-Smale flow which is $\ml{C}^1$-linearizable. 
Near a \emph{fixed point} $\Lambda$, we can 
choose local $\ml{C}^1$ coordinates $(\tilde{x},\tilde{y})\in\IR^{n_s}\times\IR^{n_u}$ such that 
\begin{equation}\label{e:linearize-flow-critical-point}
 \varphi^t(\tilde{x},\tilde{y})=(e^{-t\Omega_s}\tilde{x},e^{t\Omega_u}\tilde{y}),
\end{equation}
with $\Omega_s$ (resp. $\Omega_u$) an element in $M_{n_s}(\IR)$ (resp. $M_{n_u}(\IR)$) all of whose eigenvalues have positive real part from 
the hyperbolicity assumption. In particular, there exist some positive constants 
$0<C_1<C_2$ and $0<\chi_-<\chi_+$ such that, for every $t\geq 0$ and for every $(\tilde{x},\tilde{y})$,
\begin{equation}\label{e:lyapunov-bound-fixed-point}C_1e^{-t\chi_+}\|(\tilde{x},\tilde{y})\|
\leq\|(e^{-t\Omega_{s}}\tilde{x},e^{-t\Omega_{u}}\tilde{y})\|\leq C_2e^{-t\chi_-}\|(\tilde{x},\tilde{y})\|
\end{equation}
where $\|.\|$ is the Euclidean norm on $\mathbb{R}^n$.
As the chart is of class $\ml{C}^1$, by equation 
(\ref{e:basicliftcotangentflow}) we can also write the
corresponding 
Hamiltonian flow (in the induced local coordinates $(\tilde{x},\tilde{y};\tilde{\xi},\tilde{\eta})$ on $T^*M$) under the form:
\begin{equation}\label{e:linearize-hamiltonian-flow-critical-point}
 \Phi^t(\tilde{x},\tilde{y};\tilde{\xi},\tilde{\eta})=(e^{-t\Omega_s}\tilde{x},e^{t\Omega_u}\tilde{y};e^{t\Omega_s^T}\tilde{\xi},e^{-t\Omega_u^T}\tilde{\eta}).
\end{equation}

\subsection{ODE with periodic coefficients}\label{ss:floquet}

We now turn to the case where $\Lambda$ 
is a \emph{closed orbit} of minimal period $\ml{P}_{\Lambda}>0$. 
For that purpose, we shall first recall a few 
facts from Floquet theory~\cite[p.~91]{Te12}. For every $\theta_0\in\IR/\ml{P}_{\Lambda}\IZ$, 
we consider the matrix valued ordinary differential equation:
 \begin{equation}\label{e:floquet-matrix}\frac{dU(\theta,\theta_0)}{d\theta}=A(\theta)U(\theta,\theta_0),\ \ \ U(\theta_0,\theta_0)=\text{Id},\end{equation}
 where $A(\theta)$ is given by the $\ml{C}^1$-linearization hypothesis near $\Lambda$. The solution to this ODE 
 satisfies
 the periodicity
 condition 
 $U(\theta+\ml{P}_\Lambda,\theta_0+\ml{P}_\Lambda)=U(\theta,\theta_0)$. 
 The matrix $M(\theta_0)=U(\theta_0+\ml{P}_\Lambda,\theta_0)$ 
is called the \textbf{monodromy matrix} of the system. Recall that $M(\theta_1)=U(\theta_1,\theta_0)M(\theta_0)U(\theta_1,\theta_0)^{-1}$. 
In particular, its eigenvalues are independent of 
$\theta_0$. We denote by $(\lambda_1,\ldots,\lambda_n)$ the eigenvalues of $M=M(0)$. We write these eigenvalues under an exponential form, 
i.e. $\lambda_j:=e^{\ml{P}_{\Lambda} \mu_j(\Lambda)}$. The $\mu_j(\Lambda)$ are called the \textbf{Floquet exponents of the closed orbit} 
$\{0\}\times(\IR/\ml{P}_{\Lambda}\IZ)$ while their real parts are the so-called \textbf{Lyapunov exponents}.

\subsection{Results from Floquet theory}\label{aaa:floquet} 
According to~\cite[Lemma~3.34, Corollary~3.16]{Te12}, the following Proposition
gives one of the main result from
Floquet theory concerning the
reduction of the solutions to 
the ODE with periodic coefficients
(\ref{e:floquet-matrix})~: 
\begin{prop}\label{p:floquet}
With the conventions of paragraph~\ref{ss:floquet},
the squared matrix $M(\theta_0)^2$ can be put under 
the real exponential form~: 
$$M(\theta_0)^2=e^{2\ml{P}_{\Lambda}\Omega(\theta_0)}$$ with 
$\Omega(\theta_0)$ real valued. 
Moreover, there exists a \emph{real valued} 
matrix $P(\theta_0+\theta,\theta_0)$ which is $2\ml{P}_{\Lambda}$-periodic in $\theta$ and such that, 
for every $\theta$ in $[0,2\ml{P}_\Lambda]$,
\begin{equation}\label{e:formulaU}
U(\theta,\theta_0)=P(\theta,\theta_0)e^{(\theta-\theta_0)\Omega(\theta_0)}.
\end{equation}
\end{prop}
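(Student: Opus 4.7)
My plan is to reduce the proposition to the classical fact that \emph{any real invertible matrix admits a real logarithm after squaring}. Given such a real logarithm $\Omega(\theta_0)$ with $M(\theta_0)^2 = e^{2\mathcal{P}_\Lambda \Omega(\theta_0)}$, I simply \emph{define}
\[
P(\theta,\theta_0) := U(\theta,\theta_0)\, e^{-(\theta-\theta_0)\Omega(\theta_0)},
\]
which makes the factorization (\ref{e:formulaU}) tautological. Everything then reduces to checking the two required properties of $P$: real-valuedness and $2\mathcal{P}_\Lambda$-periodicity in $\theta$.

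For the existence of the real logarithm, I would argue as follows. Since $A(\theta)$ is real and continuous, the solution $U(\theta,\theta_0)$ of (\ref{e:floquet-matrix}) is real, so $M(\theta_0)$ is a real invertible matrix. Using the real Jordan form of $M(\theta_0)$, the only obstruction to the existence of a real logarithm of a real invertible matrix is that Jordan blocks associated with strictly negative real eigenvalues must appear with even multiplicity (Culver's criterion). After squaring, real negative eigenvalues of $M(\theta_0)$ give rise to Jordan blocks with positive eigenvalues, while a conjugate pair of non-real complex eigenvalues $\mu,\bar\mu$ of $M(\theta_0)$ whose squares happen to be real and negative contributes two identical negative blocks to $M(\theta_0)^2$; hence any negative eigenvalue of $M(\theta_0)^2$ appears with even multiplicity, and a real $\Omega(\theta_0)$ with $M(\theta_0)^2=e^{2\mathcal{P}_\Lambda \Omega(\theta_0)}$ exists. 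This yields real-valuedness of $P$.

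For the periodicity, I would use the semigroup identity $U(\theta_2,\theta_0)=U(\theta_2,\theta_1)U(\theta_1,\theta_0)$ together with the translation invariance $U(\theta+\mathcal{P}_\Lambda,\theta_0+\mathcal{P}_\Lambda)=U(\theta,\theta_0)$ coming from the $\mathcal{P}_\Lambda$-periodicity of $A$. These yield
\[
U(\theta+2\mathcal{P}_\Lambda,\theta_0) = U(\theta+2\mathcal{P}_\Lambda,\theta_0+2\mathcal{P}_\Lambda)\, U(\theta_0+2\mathcal{P}_\Lambda,\theta_0+\mathcal{P}_\Lambda)\, U(\theta_0+\mathcal{P}_\Lambda,\theta_0) = U(\theta,\theta_0)\, M(\theta_0)^2,
\]
and substituting into the definition of $P$ together with $M(\theta_0)^2 = e^{2\mathcal{P}_\Lambda \Omega(\theta_0)}$ produces $P(\theta+2\mathcal{P}_\Lambda,\theta_0)=P(\theta,\theta_0)$ after the two exponentials telescope.

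The main obstacle is really the first step: one must explain why $M(\theta_0)$ itself may fail to have a real logarithm (a simple example is a rotation by $\pi$, or more generally any matrix conjugate to $-\mathrm{Id}$ on an odd-dimensional block), and why the squared monodromy always avoids this pathology. This is why the statement involves $M(\theta_0)^2$ and a factor $2\mathcal{P}_\Lambda$ rather than $\mathcal{P}_\Lambda$, and it is the only delicate algebraic point; the subsequent verification of periodicity and real-valuedness of $P$ is then purely formal from the cocycle and periodicity identities for $U$.
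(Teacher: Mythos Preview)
Your proposal is correct and is precisely the standard proof of the real Floquet normal form. The paper itself does not give a proof of this proposition: it simply states the result and refers to Teschl~\cite[Lemma~3.34, Corollary~3.16]{Te12}. So there is nothing to compare against beyond noting that your argument is exactly the one found in such references: define $P(\theta,\theta_0):=U(\theta,\theta_0)e^{-(\theta-\theta_0)\Omega(\theta_0)}$, use the cocycle identity and $\ml{P}_\Lambda$-periodicity of $A$ to get $U(\theta+2\ml{P}_\Lambda,\theta_0)=U(\theta,\theta_0)M(\theta_0)^2$, and conclude $2\ml{P}_\Lambda$-periodicity of $P$ from $M(\theta_0)^2=e^{2\ml{P}_\Lambda\Omega(\theta_0)}$.

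One small remark on the algebraic step: your invocation of Culver's criterion is fine, but the cleanest way to see that $M(\theta_0)^2$ has a real logarithm is to work in the \emph{real} Jordan form of $M(\theta_0)$. Real Jordan blocks $J_k(\lambda)$ with $\lambda\in\IR\setminus\{0\}$ square to blocks with eigenvalue $\lambda^2>0$; real $2k\times 2k$ blocks for a conjugate pair $\mu,\bar\mu$ square to either another such block (if $\mu^2\notin\IR$) or, when $\mu=ir$ is purely imaginary, to a pair of identical Jordan blocks $J_k(-r^2)$, giving the required even multiplicity. This makes the ``delicate algebraic point'' you flag entirely routine.
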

Note that we have to take 
the square of $M(\theta_0)$ in order to take into account the fact that $M(\theta_0)$ may have negative eigenvalues. 
We next use this Proposition to
find some explicit coordinate representation
of the flow near periodic orbits.

\subsection{Writing the flow in local coordinates near closed orbits}\label{ss:coordinates-orbits}

By the assumption that the periodic orbit
$\Lambda$ is hyperbolic and using the $\ml{C}^1$-linearization assumption,
we can fix from Proposition~\ref{p:floquet} a system of $\ml{C}^1$ local coordinates 
$(z,\theta)\in\IR^{n-1}\times(\IR/\ml{P}_{\Lambda}\IZ)$ such that
\begin{equation}\label{e:linearize-flow-periodic-orbit}
 \varphi^t(z,\theta)=\left(P(\theta+t,0)e^{t\Omega_{\Lambda}} P(\theta,0)^{-1}z,\theta+t\ \text{mod} \left(\ml{P}_{\Lambda}\right)\right),
\end{equation}
where $t\mapsto P(\theta+t,0)$ is $2\ml{P}_{\Lambda}$-periodic and where
the
eigenvalues of the matrix $\Omega_{\Lambda}=\Omega(0)$ have a nonzero real part.
 Equivalently, one has
\begin{equation}\label{e:linearize-flow-periodic-orbit-2}\varphi^t(z,\theta)=\left(U(t+\theta,\theta)z,t+\theta\ 
 \text{mod}\left(\ml{P}_{\Lambda}\right)\right),
\end{equation}
where 
\begin{equation}\label{e:UPomegafloquetrelation}
\boxed{U(t+\theta,\theta)=P(\theta+t,\theta)e^{t\Omega(\theta)}=P(\theta+t,0)e^{t\Omega(0)}P(\theta,0)^{-1}}
\end{equation}
is the fundamental solution for the Floquet problem associated with $A(t)$.

 Furthermore, 
we can split $\IR^{n-1}=E_{ss}\oplus E_{uu}$ where $E_{uu}$ 
(resp. $E_{ss}$) corresponds to the eigenvalues of $\Omega_{\Lambda}=\Omega(0)$ 
with positive (resp. negative) real part. 
This allows to rewrite the matrix $\Omega_{\Lambda}$ 
under a block diagonal form, i.e. $\Omega_{\Lambda}=\text{diag}(-\Omega_s,\Omega_u)\in M_{n_s+n_u}(\IR)$ with 
$\Omega_{u/s}$ of size $n_{u/s}$ and having eigenvalues with positive real parts. 
We set $z=(\tilde{x},\tilde{y})$ for the system of coordinates adapted to the above decomposition 
and $(\tilde{\xi},\tilde{\eta})$ the corresponding 
dual coordinates. 
In particular, there are some positive constants 
$0<C_1<C_2$ and $0<\chi_-<\chi_+$ such that, for every $t\geq 0$,
\begin{equation}\label{e:lyapunov-bound-closed-orbit}C_1e^{-t\chi_+}\|(\tilde{x},\tilde{y})\|\leq\|(e^{-t\Omega_{s}}\tilde{x},e^{-t\Omega_{u}}\tilde{y})\|
\leq C_2e^{-t\chi_-}\|(\tilde{x},\tilde{y})\|.
\end{equation}
In the next paragraph, we use the above Floquet coordinates so that, for each $\theta\in \mathbb{R}/\mathcal{P}_\Lambda\mathbb{Z}$, 
$\mathbb{R}^{n-1}$ decomposes as a sum of
stable and unstable subspaces of $M(\theta)$ which gives some coordinates 
representation of the stable and unstable manifolds near the periodic orbits.

\subsection{Coordinate representation of stable and unstable manifolds near periodic orbits}
 
From this expression, one can verify that the unstable and stable manifolds are given
in local coordinates 
\emph{near the closed orbit $\Lambda$}, by
\begin{equation}\label{e:unstablemfd}
W^u(\Lambda):=\left\{\left(P(\theta,0)(0,\tilde{y}),\theta\right):\theta\in\IR/\ml{P}_{\Lambda}\IZ,\ \tilde{y}\in \IR^{n_u}\right\}
\end{equation}
and
\begin{equation}\label{e:stablemfd}
W^s(\Lambda):=\left\{\left(P(\theta,0)(\tilde{x},0),\theta\right):\theta\in\IR/\ml{P}_{\Lambda}\IZ,\ \tilde{x}\in \IR^{n_s}\right\}.
\end{equation}
Similarly, we can verify that the strong unstable and stable manifolds at a given point $(0,0,\theta)$ of the closed orbit $\Lambda$ is
\begin{equation}\label{e:uumfd}
W^{uu}(0,0,\theta):=\left\{\left(P(\theta,0)(0,\tilde{y}),\theta\right):\tilde{y}\in \IR^{n_u}\right\}
\end{equation}
and
\begin{equation}\label{e:ssmfd}
W^{ss}(0,0,\theta):=\left\{\left(P(\theta,0)(\tilde{x},0),\theta\right):\tilde{x}\in \IR^{n_s}\right\}.
\end{equation}
We next lift the Floquet representation to $T^*M$.

\subsection{The Hamiltonian flow near closed orbits}
We used Floquet theory to write 
the flow in nice coordinates near closed orbits. The purpose of the present paragraph
is to lift the coordinate representation to
cotangent space.
Following the notations of the previous paragraph, we consider the system of coordinates
$(z,\theta;\zeta,\Theta)$ in $T^*\left( \IR^{n-1}\times(\IR/\ml{P}_{\Lambda}\IZ)\right)$.
From equation 
\eqref{e:basicliftcotangentflow}, we need to compute the
element $(d\varphi^t(x)^T)^{-1}(\zeta,\theta)$ in our coordinate system.
A straightforward calculation in the Floquet coordinates $(z,\theta)$ yields 
$$d\varphi^t(z,\theta)=\left(\begin{array}{cc}
U(t+\theta,\theta) & \partial_\theta U(t+\theta,\theta)z\\
0 & 1 
\end{array}  \right) \implies d\varphi^t(z,\theta)^T=\left(\begin{array}{cc}
U(t+\theta,\theta)^T & 0\\
(\partial_\theta U(t+\theta,\theta)z)^T & 1 
\end{array}  \right) $$
where we wrote the $((n-1)+1)\times ((n-1)+1)$ block decomposition of the differential $d\varphi^t$.
This
tells us that
$$((d\varphi^t)^T(z,\theta))^{-1}=\left(\begin{array}{cc}(U(\theta+t,\theta)^{-1})^T & 0\\
-\left(U(\theta+t,\theta)^{-1}\partial_{\theta}\left(U(\theta+t,\theta)z\right)\right)^{T} & 1
\end{array}\right).$$
Fix now a point $(z,\theta;\zeta,\Theta)=(P(\theta,0)(\tilde{x},\tilde{y}),\theta;(P(\theta,0)^T)^{-1}(\tilde{\xi},\tilde{\eta}),\Theta)$ in $T^*(\IR^{n-1}\times\IR/\ml{P}_{\Lambda}\IZ)$.
Applying the previous formula and~\eqref{e:UPomegafloquetrelation} to calculate $(d\varphi^t)^T(z,\theta))^{-1}(\zeta,\Theta)$, 
we find that~:
\begin{eqnarray*}
\left(\begin{array}{cc}(U(\theta+t,\theta)^{-1})^T & 0\\
-\left(U(\theta+t,\theta)^{-1}\partial_{\theta}\left(U(\theta+t,\theta)z\right)\right)^{T} & 1
\end{array}\right)_{(z=P(\theta,0)(\tilde{x},\tilde{y}),\theta)}((P(\theta,0)^T)^{-1}(\tilde{\xi},\tilde{\eta}),\Theta)\\
=\left(\begin{array}{cc}( (P(\theta+t,0)^{-1 })^T(e^{-t\Omega_\Lambda })^TP(\theta,0)^T   & 0\\
-\left(\partial_{\theta}\left(U(\theta+t,\theta)z\right)\right)^{T}\left(U(\theta+t,\theta)^{-1}\right)^T & 1
\end{array}\right)_{(z=P(\theta,0)(\tilde{x},\tilde{y}),\theta)}((P(\theta,0)^T)^{-1}(\tilde{\xi},\tilde{\eta}),\Theta).
\end{eqnarray*}
Using the block decomposition of $\Omega_\Lambda=\left(\begin{array}{cc} -\Omega_s & 0\\0 &\Omega_u \end{array} \right)$ with equation~\eqref{e:UPomegafloquetrelation},
this yields~:
\begin{eqnarray}\label{e:tangent-map}\left(\left(d_{P(\theta,0)(\tilde{x},\tilde{y}),\theta}\varphi^t\right)^T\right)^{-1}\left((P(\theta,0)^T)^{-1}(\tilde{\xi},\tilde{\eta}),\Theta\right)\\
\nonumber=\left((P(\theta+t,0)^T)^{-1}(e^{t\Omega_s^T}\tilde{\xi},e^{-t\Omega_u^T}\tilde{\eta}),\Theta+R(t,\tilde{x},\tilde{y},\theta,\tilde{\xi},\tilde{\eta})\right),\end{eqnarray}
where
\begin{equation}\label{e:remainderR}
R(t,\tilde{x},\tilde{y},\theta,\tilde{\xi},\tilde{\eta})=-\left\la\partial_{\theta}(U(\theta+t,\theta))P(\theta,0)(\tilde{x},\tilde{y}),
(P(\theta+t,0)^T)^{-1}(e^{t\Omega_s^T}\tilde{\xi},e^{-t\Omega_u^T}\tilde{\eta})\right\ra
\end{equation}
where $\langle .,. \rangle$ denotes the natural 
duality pairing between $\mathbb{R}^{n-1}$ and 
$\mathbb{R}^{n-1*}$.
We can now write the corresponding Hamiltonian flow:
\begin{equation}\label{e:linearize-hamiltonian-flow-periodic-orbit}
 \Phi^t(z,\theta;\zeta,\Theta)=\left(\varphi^t(z,\theta); (U(t+\theta,\theta)^T)^{-1}\zeta,\Theta+R(t,z,\theta,\zeta)\right),
\end{equation}
with $R(t,z;\theta,\zeta)$ which is defined above. Equivalently, one has
\begin{equation}\label{e:linearize-hamiltonian-flow-periodic-orbit-2}
 \Phi^t(z,\theta;\zeta,\Theta)=\left(\varphi^t(z,\theta); (P(\theta+t,0)^T)^{-1}e^{-t\Omega_{\Lambda}^T} P(\theta,0)^T(\zeta),\Theta+R(t,z,\theta,\zeta)\right),
\end{equation}
where $R$ can be split as follows~:
\begin{eqnarray*}
 R(t,\tilde{x},\tilde{y},\theta,\tilde{\xi},\tilde{\eta}) & = &-\left\la\partial_{\theta}P(\theta+t,0)(e^{-t\Omega_s}\tilde{x},e^{t\Omega_u}\tilde{y}),
(P(\theta+t,0)^T)^{-1}(e^{t\Omega_s^T}\tilde{\xi},e^{-t\Omega_u^T}\tilde{\eta})\right\ra\\
& + &\left\la\partial_{\theta}P(\theta,0)(\tilde{x},\tilde{y}),
(P(\theta,0)^T)^{-1}(\tilde{\xi},\tilde{\eta})\right\ra.
\end{eqnarray*}

\section{Conormals of stable and unstable manifolds}\label{s:conormal}

Smale's Theorem~\ref{t:smale} describes the closure of any unstable manifold $W^u(\Lambda)$ of a Morse-Smale flow. In particular, 
this theorem shows that each 
closure 
$\overline{W^u(\Lambda)}$ has a stratified structure whose strata are given by unstable manifolds. In the 
following, we need to understand the fine properties of the conormal $N^*(W^u(\Lambda))$,
in particular we need to describe precisely 
its closure $\overline{N^*(W^u(\Lambda))}$ inside 
$T^*M$. Analyzing the conormals of 
unstable manifolds is the core of this article and this is the content of this section. The main results of this section are 
Theorems~\ref{t:compactness} and~\ref{t:attractor}. Note that these results were already obtained in the case of Morse-Smale 
gradient flows in~\cite{DaRi16}. The main novelty here is that we allow closed orbits in our analysis.

\subsection{Conormals of the unstable/stable manifolds}\label{ss:conormal}

Given a smooth submanifold $S$ inside $M$, one can define its \emph{conormal} (bundle) as follows~:
$$N^*S:=\{(x;\xi)\in T^*M:\ x\in S,\ \xi\neq 0,\ \text{and}\ \forall v\in T_xS,\ \xi(v)=0\}.$$
Observe that, for a submanifold of dimension $n$, the conormal is empty. For instance, this is the case for the unstable manifold of an 
expanding fixed point. Here, the relevant sets for the lifted dynamics will be
$$\Sigma_{uu}:=\bigcup_{j=1}^K N^*(W^s(\Lambda_j))\cap S^*M\ \text{and}\ \Sigma_{ss}:=\bigcup_{j=1}^K N^*(W^u(\Lambda_j))\cap S^*M.$$
In the forthcoming Lemma \ref{l:limit-attractor}, we will
see that the set $\bigcup_{j=1}^K N^*(W^u(\Lambda_j))\cap S^*M$ is in fact an attractor
for the lifted dynamics $(\tilde{\Phi}^t)_t$, this is why it is denoted
by
$\Sigma_{ss}$ to emphasize it is the
stable set for the flow $(\tilde{\Phi}^t)_t$ in $S^*M$. From appendix~\ref{a:hyperbolic}, we also know that the unstable (resp. stable) manifolds are invariantly fibered by smooth submanifolds, i.e.
$$\forall 1\leq j\leq K,\ W^u(\Lambda_j):=\bigcup_{x\in\Lambda_j} W^{uu}(x),\ \text{and}\ W^s(\Lambda_j):=\bigcup_{x\in\Lambda_j} W^{ss}(x).$$
We then define 
$$\Sigma_{u}:=\bigcup_{j=1}^K \bigcup_{x\in\Lambda_j}N^*(W^{ss}(x))\cap S^*M\ \text{and}\ \Sigma_{s}:=\bigcup_{j=1}^K\bigcup_{x\in\Lambda_j} N^*(W^{uu}(x))\cap S^*M.$$
Note that, in the case of a fixed point $\Lambda_j=\{x\}$, $W^{ss/uu}(x)$ coincides with $W^{s/u}(\Lambda_j)$. Finally, we can translate the Smale transversality assumption from definition~\ref{d:morsesmale} in this
microlocal setting~:
\begin{equation}\label{e:transversality}
\Sigma_{uu}\cap\Sigma_{ss}=\Sigma_{uu}\cap\Sigma_s=\Sigma_{ss}\cap\Sigma_u=\emptyset.
\end{equation}

\subsection{Coordinate representation of $N^*(W^u(\Lambda_j))$ near $\Lambda_j$}
\label{ss:unstable-conormal-coord} We previously defined the set $\Sigma_{ss}$ as the union of conormals of
unstable manifolds intersected with $S^*M$. 
In order to have a concrete representation of these sets, let us write the conormal bundle
$N^*(W^u(\Lambda_j))$ in the local coordinates we have introduced in paragraphs~\ref{ss:coordinates},~\ref{ss:coordinates-orbits} 
and~\ref{aaa:floquet}. Note already that this does not represent the whole set $\Sigma_{ss}$ near $\Lambda_j$ as one can also find 
points inside $N^*(W^u(\Lambda_i))$ with $i\neq j$ arbitrarily close to $\Lambda_j$. Yet, we will only need to use the exact representation 
of $N^*(W^u(\Lambda_j))$ near $\Lambda_j$ in our proofs.

In the $(\tilde{x},\tilde{y})$ coordinates near a fixed point $\Lambda_j$, using the coordinates system from subsection 
\ref{ss:coordinates}, the set $N^*(W^u(\Lambda_j))$ can be represented as~:
\begin{equation}\label{e:conormunstable}
\left\{(0,\tilde{y};\tilde{\xi},0):\tilde{y}\in\IR^{n_u},\ \tilde{\xi}\in\IR^{n_s}\setminus\{0\}\right\}.
\end{equation}
Near a closed orbit $\Lambda_j$, using the coordinates system from paragraph~\ref{ss:coordinates-orbits}, $N^*(W^u(\Lambda_j))$ can be written as:

\begin{equation}\label{e:conormunstable-orbit}
\left\{\left(P(\theta,0)(0,\tilde{y}),\theta;(P(\theta,0)^T)^{-1}(\tilde{\xi},0),-\la \partial_\theta P(\theta,0)(0,\tilde{y}),
(P(\theta,0)^{T})^{-1}(\tilde{\xi},0)\ra\right):(*)\right\},
\end{equation}
where $(*)$ means $\tilde{y}\in\IR^{n_u}$, $\theta\in \IR/\ml{P}_{\Lambda}\IZ$ and $\tilde{\xi}\in\IR^{n_s}\setminus\{0\}$. To see this, recall that he weak unstable manifold reads~:
$$W^u(\Lambda_j)=\{ (P(\theta,0)(0,\tilde{y}),\theta) : \tilde{y}\in\mathbb{R}^{n_u}, \theta\in \mathbb{R}/\mathcal{P}_{\Lambda_j}\mathbb{Z} \}.$$
Hence tangent vectors to $W^u(\Lambda_j)$ are generated by vectors of the form 
$ \left(P(\theta,0)(0,Y),0 \right)$ and $(\partial_{\theta}P(\theta,0)(0,\tilde{y}),1)$ for some $Y\in\IR^{n_u}$.
So $(P(\theta,0)(0,\tilde{y}),\theta;\zeta,\Theta  ) \in N^*\left(W^u(\Lambda_j)\right)$ iff, for every $Y\in\IR^{n_u}$,
\begin{eqnarray*}
\la \partial_\theta P(\theta,0)(0,\tilde{y}),\zeta\ra+\Theta=0\\
\la P(\theta,0)(0,Y),\zeta \ra=0.
\end{eqnarray*} 
So $\zeta=\left((P(\theta,0)^{T})^{-1}(\tilde{\xi},0) \right)$ by the second equation and the first equation implies
$\Theta$ satisfies $\la \partial_\theta P(\theta,0)(0,\tilde{y}),(P(\theta,0)^{T})^{-1}(\tilde{\xi},0)\ra+\Theta=0$.
%

\subsection{Coordinate representation of $\cup_{x\in\Lambda_j}N^*(W^{uu}(x))$ near $\Lambda_j$}
\label{ss:unstable-conormal-coord-2} It is also useful to represent these objects in local coordinates. In the case of a fixed point, 
the expression is the same 
as in subsection~\ref{ss:unstable-conormal-coord}. Yet, in the case of a closed orbit, there is a small difference. Namely, 
$\cup_{\theta}N^*(W^{uu}(0,0,\theta))$ can be represented near $\Lambda_j$ as~:
$$\left\{\left(P(\theta,0)(0,\tilde{y}),\theta;(P(\theta,0)^T)^{-1}(\tilde{\xi},0),\Theta\right):
\tilde{y}\in\IR^{n_u},\ \theta\in \IR/\ml{P}_{\Lambda_j}\IZ,\ (\tilde{\xi},\Theta)\in\IR^{n_s+1}\setminus\{0\}\right\}.$$
Again, it only represents the contribution of $\Lambda$ as there may be points associated to the unstable manifold of $\Lambda_i$ 
(with $i\neq j$) which are arbitrarily close to $\Lambda_j$. Finally, observe that one has $\Sigma_{uu}\subset\Sigma_u$ and $\Sigma_{ss}\subset\Sigma_s$.

\subsection{The sets $\Sigma_u,\Sigma_{uu}$ as attractors for the backward flow on the unit cotangent bundle.}

Let us start our description of the dynamics in the cotangent bundle with the following Lemma~:
\begin{lemm}\label{l:limit-attractor} Let $\varphi^t$ be a Morse-Smale flow which is $\ml{C}^1$-linearizable. 
Then, one has~:
\begin{enumerate}
\item $\forall (x;\xi)\in S^*M\setminus\Sigma_{ss},\ \lim_{t\rightarrow-\infty}d_{S^*M}(\tilde{\Phi}^t(x;\xi),\Sigma_u)=0,$
\item $\forall (x;\xi)\in S^*M\setminus\Sigma_{s},\ \lim_{t\rightarrow-\infty}d_{S^*M}(\tilde{\Phi}^t(x;\xi),\Sigma_{uu})=0.$
\end{enumerate}
\end{lemm}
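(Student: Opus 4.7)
The plan is to reduce to the $\mathcal{C}^1$-linearization chart around the unique critical element $\Lambda_i$ to which $x$ flows in backward time, and then use conservation of the Hamiltonian $H(x;\xi) = \xi(V(x))$ together with the explicit formulas~\eqref{e:linearize-hamiltonian-flow-critical-point} and~\eqref{e:linearize-hamiltonian-flow-periodic-orbit-2} for the lifted flow. By Lemma~\ref{l:partition} there is a unique $i$ such that $x \in W^u(\Lambda_i)$, so $\varphi^{-s}(x) \to \Lambda_i$ as $s \to +\infty$ and eventually lies in the linearization chart. Since $\Sigma_{ss}, \Sigma_s, \Sigma_{uu}, \Sigma_u$ are all invariant under $\tilde{\Phi}^t$, I may assume without loss of generality that $(x;\xi)$ already sits inside the chart with base point on $W^u(\Lambda_i)$; distances in $S^*M$ are then controlled by those in the linearized coordinates via the bi-Lipschitz nature of the chart.

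If $\Lambda_i$ is a fixed point, paragraph~\ref{ss:unstable-conormal-coord} identifies $\Sigma_{ss} = \Sigma_s$ near $\Lambda_i$ with $\{\tilde{\eta} = 0\}$, so both hypotheses force $\tilde{\eta}_0 \neq 0$. By~\eqref{e:linearize-hamiltonian-flow-critical-point}, the covector at time $-s$ is $(e^{-s\Omega_s^T}\tilde{\xi}_0, e^{s\Omega_u^T}\tilde{\eta}_0)$, and the bounds~\eqref{e:lyapunov-bound-fixed-point} make its $\tilde{\xi}$-block exponentially smaller than its $\tilde{\eta}$-block. After normalization, the $\tilde{\xi}$-block vanishes in the limit while the base point $(0, e^{-s\Omega_u}\tilde{y}_0)$ tends to $\Lambda_i$, so the distance to $\{(0;0,\hat{\eta})\} = N^*(W^s(\Lambda_i)) \cap S^*M \subset \Sigma_u = \Sigma_{uu}$ tends to zero; both claims follow at once.

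For a closed orbit $\Lambda_i$, I work in the Floquet coordinates of paragraph~\ref{ss:coordinates-orbits}. A direct tangent-space computation using paragraphs~\ref{ss:unstable-conormal-coord}--\ref{ss:unstable-conormal-coord-2} shows that, near $\Lambda_i$, $\Sigma_{ss}$ corresponds to $\{\tilde{\eta} = 0,\ H = 0\}$ whereas $\Sigma_s$ corresponds to the weaker condition $\{\tilde{\eta} = 0\}$ (the $\Theta$-component being free), and symmetrically for the stable counterparts. Hence $(x;\xi) \notin \Sigma_{ss}$ splits into Case~A ($\tilde{\eta}_0 \neq 0$) or Case~B ($\tilde{\eta}_0 = 0$ and $H_0 \neq 0$), while $\notin \Sigma_s$ retains only Case~A. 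In Case~A, $\|\zeta(t)\| \to \infty$ by hyperbolicity, and the conservation identity $\Theta(t) = H_0 - \langle \zeta(t), A(\theta(t))z(t)\rangle$ combined with $z(t) \to 0$ gives $|\Theta(t)|/\|\zeta(t)\| \leq |H_0|/\|\zeta(t)\| + \|A\|\|z(t)\| \to 0$; after normalizing, the covector concentrates near $((P(\theta,0)^T)^{-1}(0,\hat{\eta}), 0)$ based on $\Lambda_i$, lying in $N^*(W^s(\Lambda_i)) \subset \Sigma_{uu}$, which proves the second claim. In Case~B, $\zeta(t) \to 0$ while $\Theta(t) \to H_0 \neq 0$, so the normalized covector accumulates on $(0;0,\pm 1)$ over $\Lambda_i$; such covectors satisfy only the constraint $\zeta = 0$ that defines $N^*(W^{ss}(x))$, hence lie in $\Sigma_u$ but not $\Sigma_{uu}$ (consistent with Case~B being excluded by the hypothesis of the second claim).

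The main technical hurdle is controlling the $\Theta$-coordinate in the closed-orbit case, since the remainder $R$ in~\eqref{e:remainderR} is not manifestly bounded as $t \to -\infty$; Hamiltonian conservation is the clean tool that both tames $\Theta(t)$ and identifies $\{H = 0\}$ as the exact condition distinguishing $\Sigma_{uu}$ from $\Sigma_u$.
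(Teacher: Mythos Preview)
Your proof is correct and follows the same overall architecture as the paper's: reduce to the linearizing chart around the unique $\Lambda_i$ with $x\in W^u(\Lambda_i)$, treat fixed points exactly as the paper does, and for closed orbits split according to whether $\tilde\eta_0\neq 0$ or $\tilde\eta_0=0$.

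Where you differ is in the device used to control the $\Theta$-coordinate in the closed-orbit case. The paper bounds $\Theta(t)$ by invoking the explicit form of the remainder $R$ in~\eqref{e:linearize-hamiltonian-flow-periodic-orbit-2}, estimating each term by hand. You instead observe that $H(x;\xi)=\xi(V(x))$ is conserved along $\Phi^t$, which gives the exact identity $\Theta(t)=H_0-\langle\zeta(t),A(\theta(t))z(t)\rangle$ and immediately yields $|\Theta(t)|/\|\zeta(t)\|\to 0$ in Case~A and $\Theta(t)\to H_0$ in Case~B. This is a cleaner argument: it replaces a coordinate computation by an invariant, and it simultaneously explains \emph{why} the extra constraint singling out $\Sigma_{ss}$ inside $\Sigma_s$ (the $\Theta$-condition in~\eqref{e:conormunstable-orbit}) is precisely $H=0$---indeed, since $V$ is tangent to $W^u(\Lambda_i)$, any conormal covector annihilates $V$. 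Your verification that, given $\tilde\eta=0$ on $W^u(\Lambda_i)$, the paper's $\Theta$-condition coincides with $H=0$ can be made explicit via the Floquet relation $\partial_\theta P(\theta,0)=A(\theta)P(\theta,0)-P(\theta,0)\Omega_\Lambda$, under which the $\Omega_\Lambda$-term pairs to zero against $(\tilde\xi,0)$. The paper's approach, by contrast, keeps everything in coordinates and does not invoke $H$; it is more hands-on but slightly heavier. Both reach the same conclusion, and your use of $H$-conservation is a genuine simplification worth keeping.
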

By reversing the time, one can verify that the same properties hold if we intertwine the roles of $s$ and $u$. 
We shall divide the proof of this Lemma in two parts and make use of the coordinate representation of 
paragraphs~\ref{ss:unstable-conormal-coord} and~\ref{ss:unstable-conormal-coord-2}.

\subsubsection{Proof of part $1$ of Lemma~\ref{l:limit-attractor}}

Let $\rho$ be an element in $S^*M\setminus\Sigma_{ss}$. According to Lemma~\ref{l:partition}, there exists a 
unique $1\leq i\leq K$ such that the projection 
of $\rho$ on $M$ belongs to $W^u(\Lambda_i)$. 
Up to applying the flow in backward 
times, we can suppose that $\rho\in S^*M\setminus\Sigma_{ss}$ belongs to the linearizing chart near $\Lambda_i$ which was defined in 
section \ref{s:cotangent}. 
Let us start with 
the case where $\Lambda_i$ is a fixed point of the flow. Using our local coordinates, we can then write $\rho=(0,\tilde{y};\tilde{\xi},\tilde{\eta})$ with 
$\tilde{\eta}\neq 0$ as $\rho\notin N^*(W^u(\Lambda_i))$ by definition. Recall now that the flow $\Phi^t$ reads in these local coordinates:
$$\Phi^{t}(0,\tilde{y};\tilde{\xi},\tilde{\eta})=(0,e^{t\Omega_u}\tilde{y};e^{t\Omega_s^T}\tilde{\xi},e^{-t\Omega_u^T}\tilde{\eta}),$$
where all the eigenvalues of $\Omega^{u/s}$ have positive real parts. In order to conclude, we need to normalize the cotangent vector, i.e. 
consider $(e^{t\Omega_s^T}\tilde{\xi},e^{-t\Omega_u^T}\tilde{\eta})/\|(e^{t\Omega_s^T}\tilde{\xi},e^{-t\Omega_u^T}\tilde{\eta})\|$. All the norms on 
$\IR^n$ being equivalent and by the inequality~\eqref{e:lyapunov-bound-fixed-point} 
and the fact that $\tilde{\eta}\neq 0$, we find that the $\tilde{\xi}$ 
component of this normalized covector tends to $0$ as $t\rightarrow-\infty$. In particular,
this implies that any accumulation point of $\tilde{\Phi}^{t}(\rho)$ (as $t\rightarrow-\infty$) 
is of the form $(0,0;0,\tilde{\eta}')\neq 0$. This implies that any accumulation point of 
$\tilde{\Phi}^{t}(\rho)$ belongs to $N^*(W^s(\Lambda_i))\subset\Sigma_u$ by the results from paragraph~\ref{ss:unstable-conormal-coord}.

Consider now the case where $\Lambda_i$ is a closed orbit. 
Again, up to applying the flow in backward times, 
we may suppose that $\rho$ belongs to the neighborhood 
of $\Lambda_i$ where we have our linearizing coordinates. 
In this chart, $\rho$ can be represented as
$$\rho=(P(\theta,0)(0,\tilde{y}),\theta;(P(\theta,0)^T)^{-1}(\tilde{\xi},\tilde{\eta}),\Theta),$$
with either $\tilde{\eta}\neq 0$, or $\tilde{\eta}=0$ and $\Theta+ \la \partial_\theta P(\theta,0)(0,\tilde{y}),(P(\theta,0)^{T})^{-1}(\tilde{\xi},0)\ra \neq 0$ 
as $\rho$ does not belong to $\Sigma_{ss}$ -- see~equation~\eqref{e:conormunstable-orbit}.

We begin with the case $\tilde{\eta}\neq 0$.
What we have to understand is the asymptotic behaviour of 
$$\left(\left(d_{P(\theta,0)(0,\tilde{y}),\theta}\varphi^t\right)^T\right)^{-1}\left((P(\theta,0)^T)^{-1}(\tilde{\xi},\tilde{\eta}),\Theta\right)$$
as $t\rightarrow -\infty$. This quantity corresponds to the evolution of the cotangent component of $\rho$ under the Hamiltonian flow~$\Phi^t$. Recall that an 
explicit expression for this quantity was given in equation~\eqref{e:tangent-map} which we reproduce here in the case
where $\tilde{x}=0$~:
\begin{eqnarray*}
\left(\left(d_{P(\theta,0)(0,\tilde{y}),\theta}\varphi^t\right)^T\right)^{-1}\left((P(\theta,0)^T)^{-1}(\tilde{\xi},\tilde{\eta}),\Theta\right)\\
=\left((P(\theta+t,0)^T)^{-1}(e^{t\Omega_s^T}\tilde{\xi},e^{-t\Omega_u^T}\tilde{\eta}),\Theta+R(t,0,\tilde{y},\theta,\tilde{\xi},\tilde{\eta})\right).\end{eqnarray*}

Combined with the explicit expression of the remainder $R$ given in~\eqref{e:linearize-hamiltonian-flow-periodic-orbit-2}, 
one finds that there exist some $C,\chi_0>0$ such that 
$$\forall t\leq 0,\ \vert R(t,0,\tilde{y},\theta,\tilde{\xi},\tilde{\eta})\vert \leqslant C\left(1+e^{t\chi_0}\Vert \tilde{y} \Vert 
\Vert e^{-t\Omega_u^T}\tilde{\eta} \Vert\right).$$ This expression tells us that the cotangent component is in fact 
of the form
$$\left((P(\theta+t,0)^T)^{-1}(e^{t\Omega_s^T}\tilde{\xi},e^{-t\Omega_u^T}\tilde{\eta}),\mathcal{O}(1)+o(1)\|e^{-t\Omega_u^T}\tilde{\eta}\|)\right),\quad \text{as}\quad t\rightarrow-\infty.$$
According to~\eqref{e:lyapunov-bound-closed-orbit}, one knows that $\|e^{-t\Omega_u^T}\tilde{\eta}\|$ is exponentially large as $t\rightarrow-\infty$ since $\tilde{\eta}\neq 0$ by assumption. 
From this explicit expression and from the fact that all the matrices have eigenvalues with positive real parts, 
we find that the leading contribution comes from the term $(P(\theta+t,0)^T)^{-1}(0,e^{-t\Omega_u^T}\tilde{\eta})$ as $t$ tends to $-\infty$. Hence, 
the normalized cotangent component of $\tilde{\Phi}^{t}(\rho)$ will approach the set $\cup_{\theta}\{((P(\theta,0)^T)^{-1}(0,\tilde{\eta}),0):\tilde{\eta}\neq 0\}$ 
as $t$ tends to $-\infty$. It implies that any accumulation point of $\tilde{\Phi}^{t}(\rho)$ (as $t\rightarrow-\infty$) 
is a point inside $\cup_{x\in\Lambda_i}N^*(W^{ss}(x))\cap S^*M\subset\Sigma_u$ by paragraph~\eqref{ss:unstable-conormal-coord-2}. 

In the case $\tilde{\eta}=0$, we must have $\Theta\neq - \la \partial_\theta P(\theta,0)(0,\tilde{y}),(P(\theta,0)^{T})^{-1}(\tilde{\xi},0)\ra$ and the cotangent 
component of $\Phi^{t}(\rho)$ can be expressed in local coordinates as follows, as $t\rightarrow-\infty$:
$$\left((P(\theta+t,0)^T)^{-1}(e^{t\Omega_s^T}\tilde{\xi},0),\Theta+ \la \partial_\theta P(\theta,0)(0,\tilde{y}),(P(\theta,0)^{T})^{-1}(\tilde{\xi},0)\ra
+o(1)\|e^{t\Omega_s^T}\tilde{\xi}\|\right),$$
where we used the expression given in~\eqref{e:linearize-hamiltonian-flow-periodic-orbit-2} one more time. 
Since all 
the eigenvalues of $\Omega_s^T$ have eigenvalues with positive real parts, 
we find that any accumulation point (as $t\rightarrow-\infty$) of the cotangent component will be of the form 
$(0,\Theta')\neq 0$. Again, if we consider the normalized version, the limit vector will be $(0,1)$. 
Using one more time the coordinate representation of $\Sigma_u$ from subsection~\ref{ss:unstable-conormal-coord-2}, we can conclude that $\tilde{\Phi}^{t}(\rho)$ 
tends to $\Sigma_u$ as $t\rightarrow-\infty$.

\subsubsection{Proof of part $2$ of Lemma~\ref{l:limit-attractor}}

In order to prove the second part of the Lemma, we follow the same strategy. 
We fix a point $\rho\in S^*M \setminus\Sigma_{s}$ whose projection on 
$M$ belongs to some unstable
manifold 
$W^u(\Lambda_i)$. Up to applying the flow in backward times, we can one more time suppose that $\rho$ belongs to the linearizing chart 
near $\Lambda_i$. In the case where $\Lambda_i$ is a fixed point, the proof is basically the same. In the case of a closed orbit, we now write $\rho$ as
$$\rho=(P(\theta,0)(0,\tilde{y}),\theta;
(P(\theta,0)^T)^{-1}(\tilde{\xi},\tilde{\eta}),\Theta),$$
with $\tilde{\eta}\neq 0$ 
and not only $(\tilde{\eta},\Theta+\la \partial_\theta P(\theta,0)(0,\tilde{y}),(P(\theta,0)^{T})^{-1}(\tilde{\xi},0)\ra)\neq 0$. 
Hence, the same argument
as above
allows to conclude one more time.

\subsection{Main results on compactness and stable neighborhoods.}

We now state the main results on the geometric and topological properties of the sets we have just defined. The proofs 
will be given in the next two sections. The first Theorem deals with
the compactness of the sets $\Sigma_*$ 
previously defined~:
\begin{theo}[Compactness Theorem]\label{t:compactness} Let $\varphi^t$ be a Morse-Smale flow which is $\ml{C}^1$-linearizable. 
Then $\Sigma_u$, $\Sigma_{uu}$, $\Sigma_s$ and $\Sigma_{ss}$ are compact subsets of $S^*M$.
\end{theo}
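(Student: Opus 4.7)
The plan is to reduce compactness to closedness, since $S^*M$ itself is compact. Moreover the sets $\Sigma_u$ and $\Sigma_{uu}$ coincide with $\Sigma_s$ and $\Sigma_{ss}$ for the reversed flow $\varphi^{-t}$, which is again a $\ml{C}^1$-linearizable Morse-Smale flow, so it is enough to handle $\Sigma_{ss}$ and $\Sigma_s$.

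First I would treat $\Sigma_{ss}$. Take a convergent sequence $\rho_n=(x_n;\xi_n)\to\rho=(x;\xi)$ in $\Sigma_{ss}$. Since there are only finitely many basic sets, extraction lets me assume $x_n\in W^u(\Lambda_j)$ for a fixed $j$. Combining Theorem~\ref{t:smale} with Lemma~\ref{l:partition} places the limit base point $x$ in a uniquely determined lower stratum $W^u(\Lambda_{j'})$ with $W^u(\Lambda_{j'})\preceq W^u(\Lambda_j)$. The remaining point is to check that the limit covector $\xi$ still annihilates $T_x W^u(\Lambda_{j'})$. For this I would invoke the inclination ($\lambda$-) lemma: the Morse-Smale transversality~\eqref{e:transversality0} forces $W^u(\Lambda_j)$ to accumulate, in the $\ml{C}^1$ topology, onto $W^u(\Lambda_{j'})$ near points of $W^u(\Lambda_{j'})$. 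Consequently any Grassmannian limit $P$ of $T_{x_n}W^u(\Lambda_j)$ contains $T_x W^u(\Lambda_{j'})$; extracting so this limit exists and picking, for each $v\in T_xW^u(\Lambda_{j'})$, approximants $v_n\in T_{x_n}W^u(\Lambda_j)$ with $v_n\to v$, the identity $\xi_n(v_n)=0$ passes to the limit and gives $\xi(v)=0$, so $\rho\in\Sigma_{ss}$. A variant which avoids appealing to the $\lambda$-lemma is to apply $\Phi^{-T}$ for large $T$ to bring the sequence into the linearizing chart of $\Lambda_{j'}$ and then to read off the conclusion directly from the explicit representations~\eqref{e:conormunstable} and~\eqref{e:conormunstable-orbit} together with the flow formulas~\eqref{e:linearize-hamiltonian-flow-critical-point} and~\eqref{e:linearize-hamiltonian-flow-periodic-orbit-2}.

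For $\Sigma_s$ the strategy is the same, with the weak unstable manifolds replaced by the strong unstable leaves $\{W^{uu}(y):y\in\Lambda_j\}$. When $\Lambda_j$ is a closed orbit I would extract once more so that the unique $y_n\in\Lambda_j$ with $x_n\in W^{uu}(y_n)$ converges to some $y^\infty$ lying in the corresponding closed orbit (or fixed point) $\Lambda_{j'}$ given by the argument above, and then apply a strong-unstable version of the $\lambda$-lemma (or the coordinate computation) to conclude.

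The main obstacle will be the passage to the limit of the conormal vector when the base point falls to a lower stratum in Smale's order: one must prove that any Grassmannian limit of $T_{x_n}W^u(\Lambda_j)$ (respectively $T_{x_n}W^{uu}(y_n)$) necessarily contains the tangent space of the limit stratum. This is the true hyperbolic content of the statement, and it is the place where the $\ml{C}^1$-linearizability hypothesis together with the Lyapunov bounds~\eqref{e:lyapunov-bound-fixed-point} and~\eqref{e:lyapunov-bound-closed-orbit} enters. Once this $\lambda$-lemma-type statement is established, the rest of the proof is pure extraction of subsequences and continuity.
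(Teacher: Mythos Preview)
Your reduction to closedness of $\Sigma_{ss}$ (and $\Sigma_s$) is fine, and you have correctly located the crux: when a sequence $(x_n;\xi_n)\in N^*(W^u(\Lambda_j))$ has $x_n\to x\in W^u(\Lambda_{j'})$ with $j'\neq j$, you must show the limit covector annihilates $T_xW^u(\Lambda_{j'})$. However, neither of your two proposed mechanisms settles this.

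\emph{On the $\lambda$-lemma.} The $\lambda$-lemma (Theorem~\ref{t:palis}) controls disks of dimension exactly $\dim W^u(\Lambda_{j'})$ that are transversal to $W^s(\Lambda_{j'})$; it does not say that the full $W^u(\Lambda_j)$, which typically has strictly larger dimension, $\ml{C}^1$-accumulates onto $W^u(\Lambda_{j'})$ at every point of the latter. Worse, when there are intermediate critical elements $\Lambda_j\succ\Lambda_k\succ\Lambda_{j'}$ in Smale's order, the sequence $x_n$ may reach $x$ only after lingering near such $\Lambda_k$, so a single application at $\Lambda_{j'}$ cannot capture the tangent behaviour. The Grassmannian-limit statement you want is essentially \emph{equivalent} to the compactness of $\Sigma_{ss}$ (take annihilators), so invoking it as a ``$\lambda$-lemma-type statement'' is circular unless you supply an independent proof---which would require exactly the inductive passage through intermediate strata that the paper carries out.

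\emph{On the ``variant''.} Flowing back by $\Phi^{-T}$ does bring $x$ into the linearizing chart of $\Lambda_{j'}$, but the explicit formulas~\eqref{e:conormunstable}--\eqref{e:conormunstable-orbit} describe $N^*(W^u(\Lambda_{j'}))$ in that chart; they tell you nothing about $N^*(W^u(\Lambda_j))$ there. The covectors $\xi_n$ are conormal to the \emph{wrong} manifold, so there is nothing to ``read off''.

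The paper's argument is genuinely different in structure: it is a proof by contradiction that never attempts the Grassmannian claim directly. Assuming a sequence in $\Sigma_{ss}$ converges to $(z_\infty;\zeta_\infty)\notin\Sigma_{ss}$ with $z_\infty\in W^u(\Lambda_{j'})$, one first uses Lemma~\ref{l:limit-attractor} to know that $\tilde\Phi^{-t}(z_\infty;\zeta_\infty)$ is attracted to $\Sigma_u$. Then, in the linearizing chart of $\Lambda_{j'}$, one flows each $(z_m;\zeta_m)$ back by a carefully chosen time $T_m\sim|\log\|\tilde x_m\||$ (using the explicit Hamiltonian expressions~\eqref{e:linearize-hamiltonian-flow-critical-point},~\eqref{e:linearize-hamiltonian-flow-periodic-orbit-2} and the Lyapunov bounds) so that the new sequence $(z_m^{(1)};\zeta_m^{(1)})=\tilde\Phi^{-T_m}(z_m;\zeta_m)$---still in $\Sigma_{ss}$ by invariance---converges to a point of $\cup_{x\in\Lambda_{j'}}N^*(W^{ss}(x))$ with base in $W^s(\Lambda_{j'})\setminus\Lambda_{j'}$. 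By the transversality~\eqref{e:transversality} this new limit is again outside $\Sigma_{ss}$, and its base lies in some $W^u(\Lambda_{j''})$ with $j''\neq j'$. Iterating produces a strictly increasing chain in Smale's order (Lemma~\ref{l:partialorder} and the no-cycle Lemma~\ref{l:nocycle} forbid repetitions), which by finiteness must terminate with $\Lambda_i=\Lambda_j$---and then the easy case of the technical lemma gives the contradiction. The $\ml{C}^1$-linearization is used precisely to control the \emph{covector} component of $\tilde\Phi^{-T_m}$, not the tangent space of $W^u(\Lambda_j)$.
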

The proof of this result was already given in~\cite[Lemma~3.6]{DaRi16} in the particular case of Morse-Smale gradient flows satisfying 
a certain (generic) linearization property. We shall give in section~\ref{s:proofcompact} a proof which is valid for \emph{any Morse-Smale flow} (including 
of course the case of gradient flows) satisfying also certain (generic) linearization property. The property of being $\ml{C}^{1}$-linearizable seems 
crucial in our proof as it allows us to control the asymptotic behaviour of cotangent vectors under the flow near a critical element $\Lambda_j$. 
In fact, having a $\ml{C}^1$-chart enables us to use the local expressions of paragraphs~\ref{ss:coordinates},~\ref{ss:coordinates-orbits} and~\ref{aaa:floquet} 
for the Hamiltonian flow. Using only
the hyperbolicity\footnote{For instance, hyperbolicity allows us to use the 
Grobman-Hartman Theorem but it only provides a $\ml{C}^0$-chart.} 
at $\Lambda_j$ does not seem to be enough to control the asymptotic behaviour of cotangent vectors 
outside $\Lambda_j$ in our proof of compactness.

Note that the statement of this theorem is in certain cases trivial. 
Take for instance the gradient flow associated 
with the height function on the $2$-sphere endowed with its 
canonical metric. In that case, the flow has two critical points: the north pole 
$\mathbf{n}$ and the south pole $\mathbf{s}$. Then, $\Sigma_{uu}=\Sigma_{u}$ is equal to 
the set $\{(\mathbf{s},\xi)\in S^*M: \|\xi\|_{\mathbf{s}}=1\}$ which is obviously compact.
Coming back to the general case, 
the situation may be subtle as illustrated by the following example on
the plane $\IR^2$.
\begin{ex}
 In the $(x,y)$-plane, consider 
the curve $Y=\{(x,\sin(\frac{1}{x})) ; x\in \mathbb{R}_{\geqslant 0} \}$ which is the graph of the function
$\sin(\frac{1}{x})$ 
and the vertical line $X=\{x=0\}$. 
Then $X\cup Y$ is closed since $\overline{Y}\setminus Y\subset X$, yet
the conormal $N^*X=\{(0,y;\xi,0);y\in \IR,\xi\in \IR \}$ is not contained in 
$\overline{N^*Y}$ since the curve $Y$ will oscillate near $x=0$, hence conormal covectors
to $Y$ will not converge to a fixed codirection $(\xi,0)$. 
Therefore $N^*X\cup N^*Y$ is not a closed, conical subset in $T^*\IR^2$ although
$X\cup Y$ was closed.
This shows that taking the union of the conormals of two submanifolds may not give rise to a closed 
subset even if the union of the two submanifolds is closed itself.
\end{ex}

Theorem~\ref{t:compactness} described the topological properties of the subsets $\Sigma_*$ and we will now turn to more dynamical properties. 
More precisely, our next result is that the sets $\Sigma_*$ are attractors or repellers 
of the flow $\tilde{\Phi}^t$:
\begin{theo}[Stable neighborhood Theorem]\label{t:attractor} Let $\varphi^t$ be a Morse-Smale flow which is $\ml{C}^1$-linearizable. Let $\eps>0$. Then, there exists 
an open neighborhood $V^{ss}$ (resp. $V^s$) of $\Sigma_{ss}$ (resp. $\Sigma_s$) inside $S^*M$ all of whose points are at a distance $\leq\eps$ of $\Sigma_{ss}$ (resp. $\Sigma_s$) 
and such that
$\forall t\geq 0,\ \tilde{\Phi}^t(V^{ss})\subset V^{ss},$ (resp.
$\forall t\geq 0,\ \tilde{\Phi}^t(V^{s})\subset V^{s}$).
\end{theo}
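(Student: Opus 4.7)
The plan is to build $V^{ss}$ by flow-saturating a small neighborhood of $\Sigma_{ss}$ assembled from local forward-invariant cones around each critical element; the construction of $V^s$ is parallel. By Theorem~\ref{t:compactness} both $\Sigma_{ss}$ and $\Sigma_u$ are compact, and by the transversality condition~\eqref{e:transversality} they are disjoint, so we may assume $\eps<\frac{1}{3}d_{S^*M}(\Sigma_{ss},\Sigma_u)$.

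First, for each critical element $\Lambda_j$, I work in the $\mathcal{C}^1$-linearizing chart from Section~\ref{s:cotangent} and define a ``cone'' neighborhood of the local piece of $\Sigma_{ss}$, namely
\[
C_j^{\delta}:=\bigl\{\|\tilde x\|<\delta,\ \|\tilde\eta\|<\delta\,\|(\tilde\xi,\tilde\eta)\|\bigr\}
\]
in the fixed-point case, with the obvious Floquet-coordinate analogue in the closed-orbit case using paragraph~\ref{ss:coordinates-orbits}. The explicit Hamiltonian flows~\eqref{e:linearize-hamiltonian-flow-critical-point} and~\eqref{e:linearize-hamiltonian-flow-periodic-orbit-2}, combined with the Lyapunov bounds~\eqref{e:lyapunov-bound-fixed-point} and~\eqref{e:lyapunov-bound-closed-orbit}, give immediately that $\|\tilde x\|$ and $\|\tilde\eta\|$ contract exponentially under $\tilde\Phi^t$ while $\|(\tilde\xi,\tilde\eta)\|\ge\|\tilde\xi\|$ expands; hence $C_j^{\delta}$ is forward-invariant \emph{as long as the orbit remains in the chart}. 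For $\delta$ small enough, $C_j^{\delta}\subset B(\Sigma_{ss},\eps/2)$ for every $j$. Set $W_0:=\bigcup_j C_j^{\delta}$ and
\[
V^{ss}:=\bigcup_{t\ge 0}\tilde\Phi^t(W_0),
\]
which is open and forward-invariant by construction. That $\Sigma_{ss}\subset V^{ss}$ follows from the $\tilde\Phi^t$-invariance of each $N^*(W^u(\Lambda_j))$ together with the fact that for any $\rho\in N^*(W^u(\Lambda_j))$, $\tilde\Phi^{-t}(\rho)$ enters $C_j^{\delta}$ for $t$ large, since its base point is attracted to $\Lambda_j$ in backward time.

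The substantive point is to verify that $V^{ss}\subset B(\Sigma_{ss},\eps)$. For $\rho\in C_j^{\delta}$, as long as its forward orbit stays in the chart of $\Lambda_j$ it remains in $C_j^{\delta}\subset B(\Sigma_{ss},\eps/2)$ by local invariance. When the base orbit exits this chart, Theorem~\ref{t:smale} and Lemma~\ref{l:limit-attractor} force it to eventually enter the chart of some $\Lambda_{j'}$ with $W^u(\Lambda_{j'})\prec W^u(\Lambda_j)$. One then shows, choosing $\delta_j$ inductively along the Smale partial order, that the orbit has already entered $C_{j'}^{\delta_{j'}}$ when it meets the new chart: the $\mathcal{C}^1$-linearization lets us translate the cone conditions between charts, and the exponential contraction of the unstable cotangent component by the hyperbolic part at $\Lambda_{j'}$ drives the normalized cotangent direction of $\tilde\Phi^t(\rho)$ into the conormal cone of $W^u(\Lambda_{j'})$. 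Since the Smale quiver is finite and acyclic, this transit can recur only finitely many times, yielding the required uniform bound.

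The main obstacle is precisely this transit step: between the exit from the chart of $\Lambda_j$ and the entry into the chart of $\Lambda_{j'}$, the cone-type condition defining $C_j^{\delta_j}$ must be inherited by $C_{j'}^{\delta_{j'}}$ rather than being lost in between. This requires a careful inductive choice of parameters $\delta_j$ along the Smale order, together with uniform control on the finite-time transit using continuity of $\tilde\Phi^t$ and compactness of the intermediate region. The argument for $V^s$ is identical, save that the strong stable/unstable foliations from paragraph~\ref{ss:conormal} replace the weak ones, and in the closed-orbit case the cone condition must also control the coordinate $\Theta$ conjugate to $\theta$, via the remainder $R$ of~\eqref{e:remainderR}.
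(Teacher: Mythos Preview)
Your overall strategy matches the paper's: build local forward-invariant cones in the linearizing charts near each $\Lambda_j$, saturate by the forward flow, and proceed by induction on Smale's partial order starting from minimal elements. Two points, however, are genuine gaps rather than routine details.

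First, the local invariance claim is false as stated. You assert that the Lyapunov bounds~\eqref{e:lyapunov-bound-fixed-point} and~\eqref{e:lyapunov-bound-closed-orbit} make $\|\tilde x\|$ and $\|\tilde\eta\|$ contract under $\tilde\Phi^t$, but those bounds carry multiplicative constants $C_1,C_2$ and only give \emph{asymptotic} exponential decay. If the symmetric part of $\Omega_s$ (or $\Omega_u$) is not positive definite---and hyperbolicity does not force this---then $\|e^{-t\Omega_s}\tilde x\|$ can \emph{increase} for small $t>0$, so neither the base condition $\|\tilde x\|<\delta$ nor the fiber condition $\|\tilde\eta\|<\delta\|(\tilde\xi,\tilde\eta)\|$ is forward-invariant in the Euclidean norm. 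The paper repairs this by replacing $\|\cdot\|$ with adapted Lyapunov norms such as $\tilde x\mapsto\int_0^\infty\|e^{-s\Omega_s}\tilde x\|\,ds$ on the base and $\mathbf{N}_j(\tilde\xi)=\int_0^\infty\|e^{-s\Omega_s^T}\tilde\xi\|e^{s\lambda_j}\,ds$ on the fiber, which are genuinely monotone along the flow. Without this modification your $C_j^{\delta}$ can leak.

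Second, the transit step is where the real content lies, and your sketch omits the mechanism the paper uses. When the orbit leaves the chart of $\Lambda_j$ and approaches $\Lambda_{j'}$, you need it to land inside $C_{j'}^{\delta_{j'}}$. This is not a consequence of hyperbolic contraction at $\Lambda_{j'}$ alone; the point is that elements of $\Sigma_{ss}$ coming from $N^*(W^u(\Lambda_j))$ must accumulate, near $\Lambda_{j'}$, onto $N^*(W^u(\Lambda_{j'}))$---and this is exactly the compactness Theorem~\ref{t:compactness}. The paper runs it as a contradiction argument (paragraphs~\ref{ss:neighborhoodproperty} and~\ref{r:crucial}): with the fiber aperture $\eps_{j'}''$ fixed, if no base parameter $\eps_{j'}>0$ suffices then one extracts a sequence in $\Sigma_{ss}$ converging to a point over $\Lambda_{j'}$ that violates the cone condition, contradicting closedness of $\Sigma_{ss}$. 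This is precisely why the paper decouples three parameters $(\eps_j,\eps_j',\eps_j'')$ and tunes them in a specific order; your single $\delta$ conflates them and makes the induction hard to close. Finally, for closed orbits in the $\Sigma_{ss}$ case the cone must also constrain $\Theta$ through the explicit relation in~\eqref{e:conormunstable-orbit}, not only in the $\Sigma_s$ case; the ``obvious Floquet analogue'' hides a computation that the paper carries out in~\eqref{e:starcondition} and the surrounding verification.
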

If we replace $s$ by $u$, the same conclusion holds except that we have to replace positive times by negative ones. This Theorem may be thought 
of as a ``monotonic'' version of Lemma~\ref{l:limit-attractor}. Like for the property of compactness, 
this result was already proved in~\cite{DaRi16} in the particular case of Morse-Smale \emph{gradient} flows 
satisfying certain linearization 
properties given for instance by the Sternberg-Chen Theorem. Again, we prove that the extension to more general Morse-Smale flows is still true.

We shall now devote the next two sections 
to the proofs of these two Theorems.

\section{Proof of the compactness Theorem~\ref{t:compactness}}
\label{s:proofcompact}
It is sufficient to prove that $\Sigma_s$ and $\Sigma_{ss}$ are compact. The other 
cases follow by reversing the time.
The proof will proceed by a contradiction argument and it is based on 
an important technical Lemma that we will  
present in the next paragraph.

\subsection{A technical Lemma.}

The following result generalizes to the cotangent framework earlier results of Smale~\cite{Sm60}:
\begin{lemm}[key technical Lemma]
\label{l:keytechnicallemma}
Let $(z_\infty;\zeta_\infty)$ be some element 
of $S^*M$ such that 
$(z_\infty;\zeta_\infty)\notin \Sigma_{ss}$ (resp $\Sigma_s$) and 
$z_\infty\in W^u(\Lambda_j)$ for 
some elementary critical element $\Lambda_j$. 
Let $(z_m;\zeta_m) \rightarrow (z_\infty;\zeta_\infty)$ 
be a  
sequence in $S^*M$ 
such that, for every $m\geq 0$, $z_m\in W^u(\Lambda_i)$ for some fixed $\Lambda_i$. Then, one has:
\begin{itemize}
 \item either $i=j$ and $(z_m;\zeta_m)$ does not belong to $\Sigma_{ss}$ (resp. $\Sigma_s$) for $m$ large enough;
 \item or $i\neq j$ and there exists a convergent subsequence $(z^{(1)}_{\phi(m)};\zeta^{(1)}_{\phi(m)})_{m\geq 0}, \phi:\mathbb{Z}_+\mapsto \mathbb{Z}_+$ injective, with the following properties~:
\begin{enumerate}
\item $(z^{(1)}_{\phi(m)};\zeta^{(1)}_{\phi(m)})$ belongs to the integral curve $\{\tilde{\Phi}^{t}(z_{\phi(m)};\zeta_{\phi(m)})\vert t\in \mathbb{R}\}$,
in particular $z^{(1)}_{\phi(m)}\in W^u(\Lambda_i)$ for every $m\in \mathbb{N}$,
\item $\lim_{m\rightarrow +\infty} (z^{(1)}_{\phi(m)};\zeta^{(1)}_{\phi(m)})=(z_\infty^{(1)};\zeta_\infty^{(1)}) \in \cup_{x\in\Lambda_j}N^*(W^{ss}(x))\cap S^*M$ 
(resp. $N^*(W^s(\Lambda_j))$),  
\item $z_\infty^{(1)}\in W^s(\Lambda_j)\setminus\Lambda_j$.
\end{enumerate} 
\end{itemize}
\end{lemm}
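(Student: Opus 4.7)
I would distinguish the two cases $i=j$ and $i\neq j$.

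For $i=j$, assume for contradiction that $(z_m;\zeta_m)\in\Sigma_{ss}$ for infinitely many $m$. Lemma~\ref{l:partition} forces $\zeta_m\in N^*_{z_m}(W^u(\Lambda_j))$ (no other unstable manifold contains $z_m$), and the continuity of the map $z\mapsto T_z W^u(\Lambda_j)$ on the smooth submanifold $W^u(\Lambda_j)$ allows one to pass to the limit and obtain $\zeta_\infty\in N^*_{z_\infty}(W^u(\Lambda_j))\subset\Sigma_{ss}$, contradicting the hypothesis. The variant with $\Sigma_s$ is analogous, using the unique strong unstable leaf $W^{uu}(x_m)$ through $z_m$ (with $x_m\in\Lambda_j$) and the continuity of its tangent bundle along the foliation, which is granted by the $\ml{C}^1$-linearization near $\Lambda_j$.

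For $i\neq j$, Smale's Theorem~\ref{t:smale} yields $W^u(\Lambda_j)\preceq W^u(\Lambda_i)$ and in particular $\varphi^{-t}(z_\infty)\to\Lambda_j$ as $t\to+\infty$. After replacing the sequence by its image under $\tilde{\Phi}^{-T_0}$ for some large $T_0>0$ (which only shifts each $z_m$ along its own orbit and preserves the flow-invariant sets $\Sigma_*$), I may assume that $z_\infty$, and hence $z_m$ for $m$ large, lies in the linearizing chart of $\Lambda_j$ from paragraph~\ref{ss:coordinates} or~\ref{ss:coordinates-orbits}. Writing the coordinates of $z_m$ as $(\tilde{x}_m,\tilde{y}_m)$ (together with a periodic $\theta_m$ in the closed-orbit case), one has $\tilde{x}_m\neq 0$, $\tilde{x}_m\to 0$ and $\tilde{y}_m\to\tilde{y}_\infty$. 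Fixing a small $\varepsilon>0$, the lower Lyapunov bound in~(\ref{e:lyapunov-bound-fixed-point}) or~(\ref{e:lyapunov-bound-closed-orbit}) provides a unique $T_m>0$ with $\|e^{T_m\Omega_s}\tilde{x}_m\|=\varepsilon$, and $T_m\to+\infty$ since $\tilde{x}_m\to 0$. Setting $z^{(1)}_m:=\varphi^{-T_m}(z_m)$, its $\tilde{x}$-component lies on the compact sphere of radius $\varepsilon$, its $\tilde{y}$-component equals $e^{-T_m\Omega_u}\tilde{y}_m$ and hence tends to $0$ exponentially, and the periodic coordinate $\theta_m-T_m$ (when present) lives on the compact circle $\IR/\ml{P}_{\Lambda_j}\IZ$; extracting a convergent subsequence therefore yields $z^{(1)}_\infty$ with $\tilde{x}^{(1)}_\infty\neq 0$ and $\tilde{y}^{(1)}_\infty=0$, which by the local description of $W^s(\Lambda_j)$ is a point of $W^s(\Lambda_j)\setminus\Lambda_j$.

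For the cotangent component, the backward Hamiltonian flow in the chart contracts the $\tilde{\xi}$-direction by $e^{-T_m\Omega_s^T}$ and dilates the $\tilde{\eta}$-direction by $e^{T_m\Omega_u^T}$, while the $\Theta$-component is preserved up to the remainder~(\ref{e:remainderR}), which is controlled uniformly in $m$ thanks to $\tilde{y}^{(1)}_m\to 0$. Via the coordinate descriptions of paragraphs~\ref{ss:unstable-conormal-coord}--\ref{ss:unstable-conormal-coord-2}, the hypothesis $(z_\infty;\zeta_\infty)\notin\Sigma_s$ (resp.~$\notin\Sigma_{ss}$) translates into $\tilde{\eta}_\infty\neq 0$ (resp.~$(\tilde{\eta}_\infty,\Theta_\infty)\neq 0$), which ensures that, after normalization in $S^*M$, $\zeta^{(1)}_m$ admits a subsequential limit $\zeta^{(1)}_\infty$ whose $\tilde{\xi}$-component vanishes. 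This covector then annihilates $T_{z^{(1)}_\infty}W^s(\Lambda_j)$ (resp.~$T_{z^{(1)}_\infty}W^{ss}(x^{(1)}_\infty)$), giving the required inclusion. The delicate point is the cotangent part in the closed-orbit case: the remainder $R$ of~(\ref{e:remainderR}) mixes the coordinate directions, and one has to distinguish the subcase $\tilde{\eta}_\infty=0,\Theta_\infty\neq 0$ of the $\Sigma_{ss}$ statement, where the normalized limit is driven by the preserved $\Theta$-component rather than by the exponentially dilated $\tilde{\eta}$-direction.
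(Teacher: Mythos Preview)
Your sketch follows essentially the same route as the paper: reduce to the linearizing chart near $\Lambda_j$, push the sequence backward by a time $T_m$ chosen so that $\|e^{T_m\Omega_s}\tilde{x}_m\|$ hits a fixed small scale, and analyze the limit in the explicit Floquet coordinates. The $i=j$ case and the $\Sigma_s$ branch of the $i\neq j$ case are handled correctly.

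There are, however, two genuine imprecisions in the closed-orbit $\Sigma_{ss}$ branch that you should fix. First, the coordinate translation of ``$(z_\infty;\zeta_\infty)\notin\Sigma_{ss}$'' is not ``$(\tilde{\eta}_\infty,\Theta_\infty)\neq 0$''. At a point $z_\infty=(P(\theta_\infty,0)(0,\tilde{y}_\infty),\theta_\infty)$ with $\tilde{y}_\infty\neq 0$, equation~(\ref{e:conormunstable-orbit}) shows the correct condition is
\[
\tilde{\eta}_\infty\neq 0\quad\text{or}\quad \Theta_\infty+\langle\partial_\theta P(\theta_\infty,0)(0,\tilde{y}_\infty),(P(\theta_\infty,0)^T)^{-1}(\tilde{\xi}_\infty,0)\rangle\neq 0,
\]
which is not the same thing. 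The paper sidesteps this by first invoking Lemma~\ref{l:limit-attractor} to flow $(z_\infty;\zeta_\infty)$ further backward until it is $\delta$-close to $\Sigma_u$; only then does $\|(\tilde{\eta}_\infty,\Theta_\infty)\|$ become bounded below in the raw coordinates. Second, your dichotomy ``$\tilde{\eta}_\infty\neq 0$ versus $\tilde{\eta}_\infty=0,\,\Theta_\infty\neq 0$'' is not the right one: even when $\tilde{\eta}_\infty=0$, the dilated term $e^{T_m\Omega_u^T}\tilde{\eta}_m$ may still diverge (if $\tilde{\eta}_m\to 0$ slowly relative to $T_m\to\infty$). The paper splits instead on whether $\|e^{T_m\Omega_u^T}\tilde{\eta}_m\|$ is bounded or unbounded, and in the bounded case uses the explicit decomposition of $R$ from~(\ref{e:linearize-hamiltonian-flow-periodic-orbit-2}) (not merely $\tilde{y}^{(1)}_m\to 0$) to see that the limit $\Theta$-component equals the nonzero ``corrected'' quantity above. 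Once these points are repaired your argument coincides with the paper's.
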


In particular, in the second case, we can conclude from~\eqref{e:transversality} that the new limit point does not belong to $\Sigma_{ss}$ (resp. $\Sigma_s$).

Let us first observe that the case $i=j$ is easy to deal with. Indeed, we can fix $T>0$ large enough to ensure that 
$\varphi^{-T}(z_{\infty})$ belongs to the linearizing chart near $\Lambda_i$. By continuity of $\varphi^{-T}(.)$ for fixed $T\in \mathbb{R}$, 
we know that for $m\geq 1$ large 
enough, $\varphi^{-T}(z_{m})$ also belongs to this chart. Then, the explicit expressions of the conormals in these local 
coordinates from subsections~\ref{ss:unstable-conormal-coord} and~\ref{ss:unstable-conormal-coord-2} gives the conclusion as the limit point 
$\tilde{\Phi}^{-T}(z_\infty;\zeta_\infty)$ does not belong to $\Sigma_{ss}$ (resp. $\Sigma_s$).

Hence, the main difficulty lies in the case where $i\neq j$ that will be divided in two subcases~: (1) $\Lambda_j$ is a fixed point and (2)
$\Lambda_j$ is a periodic orbit.

\subsubsection{$\Lambda_j$ is a fixed point}
Up to applying the flow $\varphi^t$ in backward times, we can suppose that $z_{\infty}$ belongs to the 
linearizing chart near $\Lambda_j$. Moreover, by continuity of the flow, we can suppose that for $m$ large enough, $z_m$ also belongs to this 
linearizing chart. Fix two small enough $\delta_1,\delta>0$ with $\delta_1\gg\delta$. As $(z_{\infty};\zeta_{\infty})$ 
does not belong to $\Sigma_{ss}$ (hence
$(z_{\infty};\zeta_{\infty})\notin N^*(W^u(\Lambda_j))$),
Lemma~\ref{l:limit-attractor} tells us that 
$\tilde{\Phi}^t(z_{\infty};\zeta_{\infty})$ will be attracted by
$\Sigma_u$ when $t\rightarrow -\infty$. We also have $\lim_{t\rightarrow -\infty}\varphi^{-t}(z_\infty)=\Lambda_j$
since $z_\infty\in W^u(\Lambda_j) $. Therefore,  without loss of generality,
we may apply the flow $\tilde{\Phi}^t$ in backward times to $(z_{\infty};\zeta_{\infty})$ to ensure that $(z_{\infty};\zeta_{\infty})$ is at a 
distance $\leq \delta$ of $N^*(W^s(\Lambda_j))\cap S^*M$.

Again, by continuity of $\tilde{\Phi}^{-t}(.)$ acting on $S^*M$, 
we find that, for $m$ large enough, we may assume
that $(z_m;\zeta_m)$ is also at a distance less than $2\delta$ from $N^*(W^s(\Lambda_j))\cap S^*M$. 
Write now the expression of these points in local coordinates:
$$(z_m;\zeta_m)=(\tilde{x}_m,\tilde{y}_m;\tilde{\xi}_m,\tilde{\eta}_m)\ \text{and}\ (z_{\infty};\zeta_{\infty})
=(0,\tilde{y}_{\infty};\tilde{\xi}_{\infty},\tilde{\eta}_{\infty}).$$
Observe that, as $i\neq j$ and $z_m\in W^u(\Lambda_i)$, 
one necessarily has $\tilde{x}_m\neq 0$ for every $m$ large enough. 
Moreover, as $(z_m;\zeta_m)$ is 
within a distance $\delta$ 
from $N^*(W^s(\Lambda_j))\cap S^*M$, 
we know that $\|\tilde{\eta}_m\|$ is uniformly bounded from below by a 
positive constant. 
Apply now the flow $\tilde{\Phi}^{-t}$ for some 
positive $t$ to the sequence $(z_m;\zeta_m)$. In local coordinates, this reads
$$\tilde{\Phi}^{-t}(z_m;\zeta_m)=\left(e^{t\Omega_s}\tilde{x}_m,e^{-t\Omega_u}\tilde{y}_m;
\frac{(e^{-t\Omega_s^T}\tilde{\xi}_m,e^{t\Omega_u^T}\tilde{\eta}_m)}{\|(e^{-t\Omega_s^T}\tilde{\xi}_m,e^{t\Omega_u^T}\tilde{\eta}_m)\|}\right).$$
For all $m\in \mathbb{N}$, we choose some
$T_m\in \IR$ large enough to 
ensure that 
$\delta_1\leq \|e^{T_m\Omega_s}\tilde{x}_m\|\leq 2\delta_1$. 
Precisely, it means that one has to take 
$T_m$ of order $|\log \|\tilde{x}_m\||$. 

We define 
a new sequence (see figure~\ref{f:compactness}):
$$(z_m^{(1)};\zeta_m^{(1)})=\tilde{\Phi}^{-T_m}(z_m;\zeta_m)=\left(e^{T_m\Omega_s}\tilde{x}_m,e^{-T_m\Omega_u}\tilde{y}_m;
\frac{(e^{-T_m\Omega_s^T}\tilde{\xi}_m,e^{T_m\Omega_u^T}\tilde{\eta}_m)}{\|(e^{-T_m\Omega_s^T}\tilde{\xi}_m,e^{T_m\Omega_u^T}\tilde{\eta}_m)\|}\right).$$
\begin{figure}[ht]\label{f:compactness}
\includegraphics[width=10cm, height=7cm]{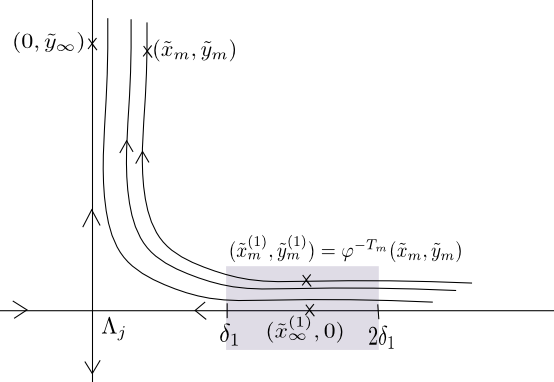}
\centering
\caption{Construction of the new sequence}
\end{figure}
For every $m$, $z_m^{(1)}=\varphi^{-T_m}(z_m)$ belongs to $W^u(\Lambda_i)$ by assumption. 
We know that $\|\tilde{x}_m\|$ goes to $0$ as $m$ tends to $+\infty$. In particular, we will have $e^{-T_m\Omega_u}\tilde{y}_m\rightarrow 0$ as $m$ 
tends to $+\infty$.
Since $\| e^{T_m\Omega_s}\tilde{x}_m\|$ is constrained to be in the interval $[\delta_1,2\delta_1]$, we can extract a subsequence
so that $e^{T_m\Omega_s}\tilde{x}_m$ converges to some value 
$\tilde{x}_{\infty}^{(1)}\neq 0$. Moreover, up to another extraction,  
$\frac{(e^{-T_m\Omega_s^T}\tilde{\xi}_m,e^{T_m\Omega_u^T}\tilde{\eta}_m)}{\|(e^{-T_m\Omega_s^T}\tilde{\xi}_m,e^{T_m\Omega_u^T}\tilde{\eta}_m)\|}$ 
converges to some element $(0,\tilde{\eta}_{\infty}^{(1)})$ since $\|\tilde{\eta}_m\|$ is uniformly bounded from below by a 
positive constant. By construction, the limit $(z_{\infty}^{(1)};\zeta_{\infty}^{(1)})$ reads $(\tilde{x}_{\infty}^{(1)},0;0,\tilde{\eta}_{\infty}^{(1)})$
where $\tilde{\eta}_{\infty}^{(1)}\neq 0, \tilde{x}_{\infty}^{(1)}\neq 0$ hence the limit
$(z_{\infty}^{(1)};\zeta_{\infty}^{(1)})$ belongs to $N^*\left(W^s(\Lambda_j)\setminus\Lambda_j\right)\cap S^*M$ which proves our claim in the case of fixed points. 

\subsubsection{$\Lambda_j$ is a closed orbit} 
We proceed to the case of a closed orbit and 
we start with the case of $\Sigma_{ss}$. 
As before, we may apply the flow $\varphi^{t}$ 
in backward times in order to ensure that, 
for every $m$ large enough, $z_m$ belongs to the linearizing 
chart near the closed orbit $\Lambda_j$. 
As $(z_{\infty};\zeta_{\infty})$ does not belong to $\Sigma_{ss}$, 
we know from 
Lemma~\ref{l:limit-attractor} that it will converge 
to a point in $\Sigma_u$ under the action of the lifted flow $\tilde{\Phi}^t$. 
Hence, up to 
applying the lifted flow in backward times, we can suppose again that for every $m$ large enough,
$(z_m;\zeta_m)$ is within a distance $\delta$ 
from the component induced by $\Lambda_j$ of $\Sigma_u$ for some fixed $\delta>0$ small enough. We write these points in local coordinates near $\Lambda_j$:
$$(z_m;\zeta_m)=(P(\theta_m,0)(\tilde{x}_m,\tilde{y}_m),\theta_m;(P(\theta_m,0)^T)^{-1}(\tilde{\xi}_m,\tilde{\eta}_m),\Theta_m)$$
and 
$$(z_{\infty};\zeta_{\infty})=(P(\theta_{\infty},0)(0,\tilde{y}_{\infty}),\theta_{\infty};(P(\theta_{\infty},0)^T)^{-1}(\tilde{\xi}_{\infty},\tilde{\eta}_{\infty}),\Theta_{\infty}).$$
As $z_m$ belongs to $W^u(\Lambda_i)$ with $i\neq j$, 
we have one more time $\tilde{x}_m\neq 0$. 
From the expression\footnote{Observe that the we wrote the expression for $\Sigma_s$ and that the expression for $\Sigma_u$ is analogous.} of 
the $\Lambda_j$ component of $\Sigma_u$ near $\Lambda_j$ 
given in paragraph~\ref{ss:unstable-conormal-coord-2}, 
we also know that $\|(\tilde{\eta}_\infty,\Theta_\infty)\|\neq 0$, hence, for $m$ large enough, 
$\|(\tilde{\eta}_m,\Theta_m)\|$ is uniformly 
bounded from below by some positive constant. 
In these local coordinates, the flow reads
$$\varphi^{-t}(z_m)=\left(P(\theta_m-t,0)(e^{t\Omega_s}\tilde{x}_m,e^{-t\Omega_u}\tilde{y}_m),\theta_m-t\right),$$
while the cotangent component evolves as
\begin{equation}\label{e:cotangentcomponenttechnicallemma}
\frac{((P(\theta_m-t,0)^T)^{-1}(e^{-t\Omega_s^T}\tilde{\xi}_m,e^{t\Omega_u^T}\tilde{\eta}_m),\Theta_m+R(-t,\tilde{x}_m,\tilde{y}_m,\theta_m,\xi_m,\eta_m))}
{\|((P(\theta_m-t,0)^T)^{-1}(e^{-t\Omega_s^T}\tilde{\xi}_m,e^{t\Omega_u^T}\tilde{\eta}_m),\Theta_m+R(-t,\tilde{x}_m,\tilde{y}_m,\theta_m,\xi_m,\eta_m)\|},
\end{equation}
where $R(-t,\tilde{x}_m,\tilde{y}_m,\theta_m,\xi_m,\eta_m)$ was defined 
precisely in section~\ref{s:cotangent} -- see equations~\eqref{e:remainderR} and~\eqref{e:linearize-hamiltonian-flow-periodic-orbit-2}. 
As in the case of a critical 
point, we fix $\delta_1>0$ and, 
for every $m$ large enough, 
we may choose some 
time $T_m$ large enough 
to ensure that $\delta_1
\leq\|e^{T_m\Omega_s}\tilde{x}_m\|\leq2\delta_1$. 
Again, $T_m$ will be of order $|\log\|x_m\||$ thanks to 
property~\eqref{e:lyapunov-bound-closed-orbit} and we 
define our new sequence as 
$(z_m^{(1)};\zeta_m^{(1)})=\tilde{\Phi}^{-T_m}(z_m;\zeta_m)$. 
As above, we can extract some
subsequence which converges 
to a limit 
point $(z_{\infty}^{(1)};\zeta_{\infty}^{(1)})$. 
Let us verify that this new sequence satisfies the claim of the Lemma. 

Again, property 
(1) is directly verified from our construction. 
Moreover, as $\|\tilde{x}_m\|$ tends to $0$, we can verify that 
$z_{\infty}^{(1)}=(P(\theta_{\infty}^{(1)},0)(\tilde{x}_{\infty}^{(1)},0),\theta_{\infty}^{(1)})$ 
for some $\tilde{x}_{\infty}^{(1)}\neq 0$. Hence, 
$z_{\infty}^{(1)}$ belongs to $W^s(\Lambda_j)\setminus\Lambda_j$ which is property (3) we are looking for. 
It remains to show that the limit point $(z_{\infty}^{(1)};\zeta_{\infty}^{(1)})$
belongs to $N^*(W^s(\Lambda_j))\cap S^*M$. For that purpose, we may distinguish two cases: $\|e^{T_m\Omega_u^T}\tilde{\eta}_m\|$ tends to $+\infty$, or 
$\|e^{T_m\Omega_u^T}\tilde{\eta}_m\|$ remains bounded.

 In the first case, using~\eqref{e:linearize-hamiltonian-flow-periodic-orbit-2}, one can find some constant $C>0$ 
 depending only on the flow near the closed orbit such that
$$C_m:= \|((P(\theta_m-T_m,0)^T)^{-1}(e^{-T_m\Omega_s^T}\tilde{\xi}_m,e^{T_m\Omega_u^T}\tilde{\eta}_m),\Theta_m+R(-T_m,
\tilde{x}_m,\tilde{y}_m,\theta_m,\xi_m,\eta_m)\|$$
$$\hspace{5cm}\geq 
C(1-C\delta_1)\|e^{T_m\Omega_u^T}\tilde{\eta}_m\|\geq \frac{C}{2}\|e^{T_m\Omega_u^T}\tilde{\eta}_m\|,$$
if $\delta_1>0$ is small enough. Similarly, one gets an upper bound on $C_m$ which is of order $\|e^{T_m\Omega_u^T}\tilde{\eta}_m\|$. In particular, 
up to an extraction, $e^{T_m\Omega_u^T}\tilde{\eta}_m/C_m\rightarrow\tilde{\eta}_{\infty}^{(1)}\neq 0$. 
Hence, one has, using one more time~\eqref{e:linearize-hamiltonian-flow-periodic-orbit-2}, that the limit covector will be of the form
$$\left((P(\theta_{\infty}^{(1)},0)^T)^{-1}(0,\tilde{\eta}_{\infty}^{(1)}),-\la \partial_\theta P(\theta_{\infty}^{(1)},0)(\tilde{x}_{\infty}^{(1)},0),
(P(\theta,0)^{T})^{-1}(0,\tilde{\eta}_{\infty}^{(1)})\ra\right).$$
Hence, $(z_{\infty}^{(1)},\zeta_{\infty}^{(1)})$ belongs to $N^*(W^s(\Lambda_j)).$ In particular, it belongs to 
$\cup_{x\in\Lambda_j}N^*(W^{ss}(x))\cap S^*M$.

Suppose now that $\|e^{T_m\Omega_u^T}\tilde{\eta}_m\|$ remains bounded. In particular, this implies that $\tilde{\eta}_{\infty}=0$ and that 
$\Theta_{\infty}\neq -\la \partial_\theta P(\theta_{\infty},0)(0,\tilde{y}_{\infty}),
(P(\theta,0)^{T})^{-1}(\tilde{\xi}_{\infty},0)\ra$. Observe also that $e^{-T_m\Omega_s^T}\tilde{\xi}_m$ still goes to 
$0$ as $m\rightarrow+\infty$. Hence, in that case, one can verify, using~\eqref{e:linearize-hamiltonian-flow-periodic-orbit-2} that 
the limit covector $\zeta_{\infty}^{(1)}$ 
is of the form $((P(\theta_{\infty}^{(1)},0)^T)^{-1}(0,\tilde{\eta}_{\infty}^{(1)}),\Theta_{\infty}^{(1)})$ (note that 
$\tilde{\eta}_{\infty}^{(1)}=0$ iff $\|e^{T_m\Omega_u^T}\tilde{\eta}_m\|\rightarrow 0^+$), i.e. $(z_{\infty}^{(1)},\zeta_{\infty}^{(1)})$ belongs to $\cup_{x\in\Lambda_j}N^*(W^{ss}(x))\cap S^*M$.

In the case of $\Sigma_s$, the situation is 
slightly simpler as the fact that the limit point does not belong to $\Sigma_s$ implies that the component $\|\tilde{\eta}_m\|$ has to be uniformly bounded from 
below by a positive constant. In particular, $\|e^{T_m\Omega_u^T}\tilde{\eta}_m\|$ tends to $+\infty$, and the $\tilde{\eta}$ component of the limit point does 
not vanish. We already discussed 
that case above and we saw that the new limit point does not belong to $N^*(W^u(\Lambda_j))$.

\subsection{Conclusion of the proof of Theorem \ref{t:compactness}}
\label{ss:algo}
Using the technical Lemma, we can give a proof of Theorem \ref{t:compactness} by
contradiction. Suppose that there exists a sequence $(z_m;\zeta_m)$ in $\Sigma_{ss}$ (resp. $\Sigma_s$) which converges to a point $(z_\infty;\zeta_\infty)$ 
that does not belong to $\Sigma_{ss}$ (resp. $\Sigma_s$). Without loss of generality, we may assume that all elements
$(z_m;\zeta_m)$ belong to $N^*(W^u(\Lambda))$ (resp. $\cup_{x\in\Lambda}N^*(W^{uu}(\Lambda))$) for some $\Lambda\in NW(\varphi^t)$. We have two situations for the sequence 
$(z_m;\zeta_m)$ and its limit $(z_\infty;\zeta_\infty)$:
\begin{itemize}
 \item either $z_\infty\in W^u(\Lambda)$ and we get the contradiction from the first part of Lemma~\ref{l:keytechnicallemma};
 \item or $z_\infty\in W^u(\Lambda_{i_1})$ for some $\Lambda_{i_1}\neq\Lambda$ and Lemma~\ref{l:keytechnicallemma}
gives us a new sequence $(z^{(1)}_m;\zeta^{(1)}_m)$ in $\Sigma_{ss}$ (resp. $\Sigma_s$)
which converges to $(z_\infty^{(1)};\zeta_{\infty}^{(1)})\in \cup_{x\in\Lambda_{i_1}}W^{ss}(x) $ (resp. $N^*(W^s(\Lambda_{i_1}))$)
with $z^{(1)}_m\in W^u(\Lambda)$ for all $m$. Moreover, $(z_\infty^{(1)};\zeta_{\infty}^{(1)})\notin \Sigma_{ss}$ (resp. $\Sigma_s$) with
$z_\infty^{(1)}\in W^s(\Lambda_{i_1})\setminus \Lambda_{i_1}$.
\end{itemize}
In the second case, $z_\infty^{(1)}\in W^u(\Lambda_{i_2})$ with $i_2\neq i_1$ and we can repeat the same argument. In fact, we can reproduce 
this procedure as long as
$\Lambda_{i_p}\neq\Lambda$ and we obtain some sequences $(z^{(1)}_m;\zeta^{(1)}_m),(z^{(2)}_m;\zeta^{(2)}_m),\dots$, 
together with their respective limits $(z^{(1)}_\infty;\zeta^{(1)}_\infty)\notin \Sigma_{ss},(z^{(2)}_\infty;\zeta^{(2)}_\infty)\notin \Sigma_{ss},\dots$ (resp. $\notin\Sigma_{s}$) and
some sequence of critical elements $(\Lambda_{i_1},\Lambda_{i_2},\ldots)$ which are \textbf{two by two distinct}
such that
$z_\infty^{(1)}\in W^s(\Lambda_{i_1})\cap W^u(\Lambda_{i_2}), z_\infty^{(2)}\in W^s(\Lambda_{i_2})\cap W^u(\Lambda_{i_3}), \dots $. Indeed, 
if we had $\Lambda_{i_p} = \Lambda_{i_q}  $ for $q>p$, then, by Lemma \ref{l:partialorder}, we would 
necessarily have  the full set of equalities $\Lambda_{i_p}=\dots=\Lambda_{i_q}$. But we would also have 
that $z_\infty^{(k)}\in W^u(\Lambda_{i_k+1}=\Lambda_{i_k})\cap W^s(\Lambda_{i_k})$ 
and $z_\infty^{(k)}\notin \Lambda_{i_k}$ by construction of the sequence $(z_m^{(k)})_m$. 
This would contradict the property that
the intersection $W^u(\Lambda_i)\cap W^s(\Lambda_i)$ is reduced to $\Lambda_i$ 
from Lemma \ref{l:nocycle}. 

Since the number of critical elements is finite and the 
critical elements $\Lambda_{i_1},\Lambda_{i_2},\dots$ produced by our procedure are two by two distinct, 
this algorithm must terminate at some $\Lambda_{i_p}=\Lambda$. 
This leads us to situation 1 for the sequence
$(z^{(p)}_m;\zeta^{(p)}_m)$ and its limit $(z^{(p)}_\infty;\zeta^{(p)}_\infty)$ and we
would get the result that infinitely many terms in the sequence $(z_m^{(p)};\zeta_m^{(p)})$ do not belong to $\Sigma_{ss}$ (resp. $\Sigma_s$) contradicting the initial assumption.  
This concludes the proof of compactness.

\subsection{Proof of Lemma \ref{l:boundary2} about the closure of unstable manifolds}

Note that the proof we just gave was independent of Smale's Theorem~\ref{t:smale} as it only used Lemma~\ref{l:partialorder} from the Appendix. In fact, 
Lemma~\ref{l:keytechnicallemma} can also be used to recover a result due to Smale~\cite[Lemmas 3.6 and 3.7]{Sm60} (see also~\cite{Web06}): 
\begin{lemm}\label{l:boundary2} 
Suppose that $W^u(\Lambda_j)\cap \overline{W^u(\Lambda_i)}\neq\emptyset$. Then, there exists a sequence $j=i_1,\ldots ,i_q=i$ such that 
$W^s(\Lambda_{i_p})\cap W^u(\Lambda_{i_{p+1}})\neq\emptyset$ for every $1\leq p\leq q-1$. In particular, from Lemmas~\ref{l:boundary} and~\ref{l:partialorder}, $W^u(\Lambda_j)\subset\overline{W^u(\Lambda_i)}.$
\end{lemm}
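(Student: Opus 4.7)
The plan is to apply Lemma~\ref{l:keytechnicallemma} iteratively, mirroring the algorithmic proof of compactness in Section~\ref{ss:algo} but reading its dynamical content on the base manifold rather than in $S^*M$. Fix $z_\infty \in W^u(\Lambda_j) \cap \overline{W^u(\Lambda_i)}$. If $i = j$ the chain $q = 1$, $i_1 = j$ satisfies the statement vacuously, so assume $i \neq j$. Pick a sequence $(z_m)_m$ in $W^u(\Lambda_i)$ with $z_m \to z_\infty$ and lift it to $S^*M$ by choosing $(z_m;\zeta_m) \to (z_\infty;\zeta_\infty)$ with $(z_\infty;\zeta_\infty) \notin \Sigma_s$. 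This is possible because, by Lemma~\ref{l:partition}, $z_\infty$ lies only in $W^u(\Lambda_j)$, so the intersection of $\Sigma_s$ with the fiber $S^*_{z_\infty}M$ reduces to $\cup_{x\in\Lambda_j} N^*(W^{uu}(x))\cap S^*_{z_\infty}M$, which is a proper closed subset of the sphere (empty when $\dim W^u(\Lambda_j) = n$).

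Case~2 of Lemma~\ref{l:keytechnicallemma} then yields, along some subsequence of $\tilde{\Phi}^{t_m}(z_m;\zeta_m)$, a limit point $(z^{(1)}_\infty;\zeta^{(1)}_\infty)$ with $z^{(1)}_\infty\in W^s(\Lambda_j)\setminus\Lambda_j$. Since $W^u(\Lambda_i)$ is flow invariant, the shifted base points remain in $W^u(\Lambda_i)$, hence $z^{(1)}_\infty \in \overline{W^u(\Lambda_i)}$. By Lemma~\ref{l:partition}, there is a unique $i_2$ with $z^{(1)}_\infty \in W^u(\Lambda_{i_2})$, producing the first link $W^s(\Lambda_{i_1})\cap W^u(\Lambda_{i_2})\ni z^{(1)}_\infty$ with $i_1 = j$. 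If $i_2 = i$ we stop; otherwise $z^{(1)}_\infty \in W^u(\Lambda_{i_2})\cap \overline{W^u(\Lambda_i)}$ and we restart the construction (with a fresh choice of cotangent lifts) to produce $i_3, i_4, \dots$

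The crucial point is termination. At step $p$, the point $z^{(p)}_\infty\in W^s(\Lambda_{i_p})\cap W^u(\Lambda_{i_{p+1}})$ is not contained in either critical element by construction, so the no-cycle Lemma~\ref{l:nocycle} forces $i_{p+1}\neq i_p$. More strongly, a standard $\lambda$-Lemma argument (Lemma~\ref{l:boundary}) yields $W^u(\Lambda_{i_p})\subset \overline{W^u(\Lambda_{i_{p+1}})}$, i.e.\ $W^u(\Lambda_{i_p})\prec W^u(\Lambda_{i_{p+1}})$ in the partial order of Lemma~\ref{l:partialorder}, so no index can repeat. Finiteness of $\operatorname{NW}(\varphi^t)$ guarantees the algorithm halts at some $i_q = i$. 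Concatenating the links gives $W^u(\Lambda_j) = W^u(\Lambda_{i_1}) \preceq \cdots \preceq W^u(\Lambda_{i_q}) = W^u(\Lambda_i)$, hence $W^u(\Lambda_j)\subset\overline{W^u(\Lambda_i)}$. The obstacle I foresee is the fiberwise choice of covectors so that Lemma~\ref{l:keytechnicallemma} is applicable at every iteration; once the codimension argument above is in place this is routine, but it must be checked at every step of the induction.
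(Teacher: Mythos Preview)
Your approach mirrors the paper's proof closely: both iterate the backward-flowing construction of Lemma~\ref{l:keytechnicallemma} to produce the chain $j=i_1,\ldots,i_q=i$, and both terminate via the distinctness argument of paragraph~\ref{ss:algo}. The paper, however, runs this argument \emph{directly on the base} $M$, using only the projection of Lemma~\ref{l:keytechnicallemma} (the construction of $z_m^{(1)}=\varphi^{-T_m}(z_m)$ and its limit in $W^s(\Lambda_j)\setminus\Lambda_j$ needs no hypothesis on the covector). Your lift to $S^*M$ is an unnecessary complication and it introduces a genuine, if small, gap.

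The gap is in the sentence ``which is a proper closed subset of the sphere.'' When $\Lambda_j$ is an \emph{attractor} (an attracting fixed point, or an attracting closed orbit), one has $\dim W^{uu}(x)=0$ and hence $N^*(W^{uu}(x))\cap S^*_{z_\infty}M = S^*_{z_\infty}M$: the conormal fills the entire fiber and no choice of $\zeta_\infty$ avoids $\Sigma_s$. So Lemma~\ref{l:keytechnicallemma} cannot be invoked as stated. Two easy fixes: either observe (as the paper implicitly does) that the base-manifold conclusion of Lemma~\ref{l:keytechnicallemma} holds without any covector hypothesis, or dispose of the attractor case by hand --- if $\Lambda_j$ is attracting then $z_m\in W^s(\Lambda_j)\cap W^u(\Lambda_i)$ for all large $m$, giving the chain $i_1=j,\ i_2=i$ immediately. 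Once past the first step the issue cannot recur: each $\Lambda_{i_p}$ with $p\geq 2$ satisfies $W^u(\Lambda_{i_p})\cap W^s(\Lambda_{i_{p-1}})\setminus\Lambda_{i_{p-1}}\neq\emptyset$, which forces $\dim W^{uu}\geq 1$ there.
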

This Lemma is part of Smale's proof of Theorem~\ref{t:smale} and it could in fact be derived without the $\ml{C}^1$-linearization property. 
We briefly recall how this result could be deduced from Lemma~\ref{l:keytechnicallemma}.
\begin{proof} The proof follows a similar algorithm as in paragraph \ref{ss:algo} but working only on the base $M$. 
Consider some sequence $(z_m)_{m\in \mathbb{N}}$ in $W^u(\Lambda_i)$ such that $z_m\underset{m\rightarrow \infty}{\longrightarrow} z_\infty\in W^u(\Lambda_j)$. 
Then we have two situations~: 
\begin{enumerate}
\item either $\Lambda_i=\Lambda_j$ and we are done with $i_1=i=j$.
\item or $\Lambda_i\neq\Lambda_j$ and flowing backwards by the flow, we can find, up to extraction,
some subsequence $(z^{(1)}_m)_m$ in $W^u(\Lambda_{i})$
which converges to $z^{(1)}_\infty\in W^s(\Lambda_j)\setminus\Lambda_j$. Hence, there exists $i_2$ such that $z^{(1)}_\infty\in W^u(\Lambda_{i_2})$.
\end{enumerate} 
Then, either $i_2=i$ and we are done or $i_2\neq i$. In the latter case, we can apply one more time Lemma~\ref{l:keytechnicallemma} to produce a new sequence. 
Iterating this argument, we will be given a sequence of elementary critical elements $(\Lambda_{i_1},\Lambda_{i_2},\ldots)$ which are two by two distinct by the arguments 
of paragraph~\ref{ss:algo}. Hence, the procedure will end at some step $i_q$ where $i_q=i$. 
\end{proof}

\subsection{Further comments}

Actually, reproducing the same argument as for the proof of Theorem~\ref{t:compactness} 
allows to prove something slightly stronger :  
 \begin{theo}  
 Fix $J\subset\{1,\ldots, K\}$. Then,
 $$\Sigma_{ss}^J:=\bigcup_{j\in J}\bigcup_{W^u(\Lambda_i)\preceq W^u(\Lambda_j)} N^*(W^u(\Lambda_i))\cap S^*M,$$
  and
  $$\Sigma_{s}^J:=\bigcup_{j\in J}\bigcup_{W^u(\Lambda_i)\preceq W^u(\Lambda_j)} \bigcup_{x\in\Lambda_j}N^*(W^{uu}(x))\cap S^*M$$
  are compact subsets of $S^*M$. By considering negative times of the flow, the same of course holds for the stable manifolds with the 
  associated partial order relation.
 \end{theo}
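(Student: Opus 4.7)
The plan is to leverage Theorem~\ref{t:compactness} and Smale's Theorem~\ref{t:smale} rather than to redo the contradiction-and-iteration argument from scratch. Since $\Sigma_{ss}^J\subset\Sigma_{ss}$ and $\Sigma_s^J\subset\Sigma_s$, which are already known to be compact, it suffices to verify that $\Sigma_{ss}^J$ and $\Sigma_s^J$ are closed in $S^*M$.

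I would handle $\Sigma_{ss}^J$ first. Pick a sequence $(z_m;\zeta_m)\in\Sigma_{ss}^J$ converging to some $(z_\infty;\zeta_\infty)$. By Theorem~\ref{t:compactness} the limit lies in $\Sigma_{ss}$, so it sits in $N^*(W^u(\Lambda_{i_\infty}))\cap S^*M$ for some $\Lambda_{i_\infty}$; moreover $\Lambda_{i_\infty}$ is unique because Lemma~\ref{l:partitionprop} says the unstable manifolds partition $M$, and $z_\infty$ selects one of them. Up to extraction, we may also assume all $(z_m;\zeta_m)$ belong to a single $N^*(W^u(\Lambda))\cap S^*M$ with $W^u(\Lambda)\preceq W^u(\Lambda_j)$ for some $j\in J$. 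Then $z_\infty\in\overline{W^u(\Lambda)}$, so Lemma~\ref{l:boundary2} yields $W^u(\Lambda_{i_\infty})\subset\overline{W^u(\Lambda)}$, i.e.\ $W^u(\Lambda_{i_\infty})\preceq W^u(\Lambda)\preceq W^u(\Lambda_j)$ by definition of Smale's causality relation. Hence $\Lambda_{i_\infty}$ is admissible in the union defining $\Sigma_{ss}^J$, and $(z_\infty;\zeta_\infty)\in\Sigma_{ss}^J$.

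The case of $\Sigma_s^J$ is cosmetically the same, using that $\bigcup_{x\in\Lambda_i}W^{uu}(x)=W^u(\Lambda_i)$: the limit lies in $N^*(W^{uu}(x_\infty))$ for some $x_\infty$ in a unique critical element $\Lambda_{i_\infty}$ (again by Lemma~\ref{l:partitionprop}), and the very same partial-order chain places $\Lambda_{i_\infty}$ beneath some $\Lambda_j$ with $j\in J$. The reverse statement for stable manifolds follows by reversing time.

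The main (minor) obstacle is tying the three ingredients together cleanly: one needs the partition property of unstable manifolds to single out the unique stratum containing $z_\infty$, Smale's Theorem~\ref{t:smale} (through Lemma~\ref{l:boundary2}) to place this stratum at the correct level in the Smale order, and only then can Theorem~\ref{t:compactness} be invoked to control the covector component. No new dynamical ingredient beyond what was established for the proof of Theorem~\ref{t:compactness} is required; everything reduces to bookkeeping on the downward cone of $J$ in the Smale partial order.
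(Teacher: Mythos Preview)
Your proof is correct and takes a cleaner route than the paper's. The paper merely says to ``reproduce the same argument as for the proof of Theorem~\ref{t:compactness}'', i.e.\ to rerun the contradiction-and-iteration scheme of paragraph~\ref{ss:algo} based on Lemma~\ref{l:keytechnicallemma}, now tracking that all base sequences stay in a fixed $W^u(\Lambda)$ with $\Lambda\preceq\Lambda_j$ for some $j\in J$. You instead invoke Theorem~\ref{t:compactness} as a black box to place the limit in $\Sigma_{ss}$ (resp.\ $\Sigma_s$), and then use Smale's partial order via Lemma~\ref{l:boundary2} (equivalently Theorem~\ref{t:smale}) to see that the stratum carrying the limit automatically lies in the downward cone of $J$. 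This separates the two genuinely distinct ingredients: (a) compactness of the full $\Sigma_{ss}$, already established, and (b) the purely order-theoretic fact that the index set $\{i:\exists\,j\in J,\ W^u(\Lambda_i)\preceq W^u(\Lambda_j)\}$ is closed under passing to boundary strata. The paper's approach is more self-contained but redundant once Theorem~\ref{t:compactness} is in hand; yours isolates exactly the new content and avoids reopening the technical Lemma~\ref{l:keytechnicallemma}.
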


\section{Proof of the stable neighborhood Theorem~\ref{t:attractor}.}\label{s:proof-attractor}

We now turn to the proof of Theorem~\ref{t:attractor} in the case of $\Sigma_{ss}$. Here, it will somehow be more convenient to work with conical 
neighborhoods rather than neighborhoods in the unit cotangent bundle. More precisely we define
$$\mathbf{\Sigma}_{ss}:=\bigcup_{j=1}^KN^*(W^u(\Lambda_j)).$$
Let us state a precised version of Theorem \ref{t:attractor}.
\begin{theo}[Conical stable neighborhood]
\label{t:preciseattractor}
For every $\eps>0$, we will construct an \emph{open conical neighborhood} $\mathbf{V}_{ss}$ of $\mathbf{\Sigma}_{ss}$ 
in $T^*M\backslash 0$ such that
\begin{enumerate}
 \item $\forall t\geq 0$, $\Phi^t(\mathbf{V}_{ss})\subset \mathbf{V}_{ss}$,
 \item for every $(z;\zeta)$ in $\mathbf{V}_{ss}$, $(z;\zeta/\|\zeta\|_z)$ is within a distance $\eps$ 
 from some element in $\mathbf{\Sigma}_{ss}$.
\end{enumerate}
\end{theo}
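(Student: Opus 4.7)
The plan is to reduce the conical statement on $T^*M \setminus 0$ to an analogous statement on $S^*M$ and then to construct a forward-invariant spherical neighborhood by combining uniform continuity of the flow at short times with a uniform attraction statement at large times. Since $\Phi^t$ acts linearly on cotangent fibers, its quotient $\tilde{\Phi}^t$ on $S^*M$ controls the dynamics on directions; therefore, if I can produce an open forward-invariant neighborhood $\tilde{V}^{ss}$ of $\Sigma_{ss}$ in $S^*M$ contained in $W_\eps := \{(x;\xi) \in S^*M : d_{S^*M}((x;\xi), \Sigma_{ss}) < \eps\}$, then its conification $\mathbf{V}_{ss} := \{(x;\zeta) \in T^*M \setminus 0 : (x;\zeta/\|\zeta\|_x) \in \tilde{V}^{ss}\}$ automatically verifies both conclusions of the theorem.

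The construction I would adopt for $\tilde{V}^{ss}$ is
$$\tilde{V}^{ss} := \bigcup_{t \geq 0} \tilde{\Phi}^t(W_\delta)$$
for a suitably small $\delta \in (0,\eps)$. This set is open (union of diffeomorphic images of an open set), is forward-invariant by construction, and contains $W_\delta \supset \Sigma_{ss}$. The real work is to choose $\delta$ small enough that $\tilde{V}^{ss} \subset W_\eps$, i.e.\ that the forward orbit of every point in $W_\delta$ stays within the $\eps$-tube of $\Sigma_{ss}$. To do so I would combine two ingredients. First, the sets $\Sigma_{ss}$ and $\Sigma_u$ are compact and disjoint in $S^*M$ — compactness is Theorem~\ref{t:compactness} and disjointness is the transversality identity~\eqref{e:transversality} — so there exists a closed neighborhood $\overline{W_{\delta_0}}$ of $\Sigma_{ss}$ disjoint from $\Sigma_u$; time-reversing Lemma~\ref{l:limit-attractor}(2) yields \emph{pointwise} forward attraction to $\Sigma_{ss}$ for every point of $\overline{W_{\delta_0}}$. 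Second, for any fixed $T > 0$, uniform continuity of $\tilde{\Phi}^t$ on the compact set $\Sigma_{ss} \times [0,T]$, together with the invariance of $\Sigma_{ss}$, gives some $\delta_1(T) > 0$ with $\tilde{\Phi}^t(W_{\delta_1(T)}) \subset W_{\eps/2}$ for all $t \in [0,T]$. What remains is to find a uniform $T$ and $\delta_2 > 0$ such that $\tilde{\Phi}^t(W_{\delta_2}) \subset W_\eps$ for all $t \geq T$; setting $\delta := \min(\delta_1(T), \delta_2)$ then closes the argument.

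The main obstacle is precisely this uniform trapping statement for $t \geq T$, since pointwise attraction does not upgrade automatically to uniform attraction. I plan to argue by contradiction along the lines of the algorithmic proof of compactness in paragraph~\ref{ss:algo}. Supposing uniformity fails, extract sequences $(x_n;\xi_n) \to (x_\infty;\xi_\infty) \in \Sigma_{ss}$ and $t_n \to +\infty$ with $d(\tilde{\Phi}^{t_n}(x_n;\xi_n), \Sigma_{ss}) \geq \eps/2$, and pass to a convergent subsequence of $\tilde{\Phi}^{t_n}(x_n;\xi_n)$ with limit $(y_\infty;\eta_\infty) \notin \Sigma_{ss}$. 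Applying a forward-time analogue of the technical Lemma~\ref{l:keytechnicallemma}, one obtains the following dichotomy: either the limiting situation already sits in the linearizing chart of the critical element $\Lambda_j \ni x_\infty$, in which case the explicit Hamiltonian formulas \eqref{e:linearize-hamiltonian-flow-critical-point} and \eqref{e:linearize-hamiltonian-flow-periodic-orbit-2} force a contraction of the transverse cotangent directions onto $N^*(W^u(\Lambda_j))$ and yield an immediate contradiction; or one extracts a new limit supported on $N^*(W^u(\Lambda_{i_1}))$ for some critical element $\Lambda_{i_1} \succ \Lambda_j$ in the Smale order of Theorem~\ref{t:smale}. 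Iterating strictly along this finite partial order — with the no-cycle Lemma~\ref{l:nocycle} guaranteeing that the critical elements produced are pairwise distinct — the procedure terminates in finitely many steps at a maximal element, giving the desired contradiction.

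The most delicate piece will be the forward-time version of Lemma~\ref{l:keytechnicallemma} near closed orbits, where the off-diagonal remainder $R$ from~\eqref{e:remainderR} couples the $\Theta$ coordinate to the Lyapunov spread of $(\tilde\xi, \tilde\eta)$ and must be shown not to spoil the cone contraction; here the precise choice of the transit time $T_m \sim |\log\|\tilde x_m\||$, already used in Section~\ref{s:proofcompact}, combined with the exponential bound~\eqref{e:lyapunov-bound-closed-orbit}, should suffice to keep $R$ subleading and preserve the normal cone condition.
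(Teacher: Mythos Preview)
Your reduction to $S^*M$ and the candidate $\tilde V^{ss}=\bigcup_{t\ge 0}\tilde\Phi^t(W_\delta)$ are sound, and the short-time control by uniform continuity is correct. The genuine gap is the ``uniform trapping'' step for $t\ge T$: you are in effect asserting Lyapunov stability of $\Sigma_{ss}$ from pointwise attraction (the time-reversed Lemma~\ref{l:limit-attractor}) alone, which does not follow in general. Your contradiction sketch does not close. The ``forward-time analogue'' of Lemma~\ref{l:keytechnicallemma} is not a time reversal: in that lemma the limit lies \emph{outside} $\Sigma_{ss}$ and one flows \emph{backward} to produce a new limit in $\Sigma_u$; in your setup the initial limit $(x_\infty;\xi_\infty)$ lies \emph{in} $\Sigma_{ss}$ and you flow forward, so there is no analogous dichotomy available. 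You also do not explain what produces a new limit with $\Lambda_{i_1}\succ\Lambda_j$ from the forward flow (forward orbits descend the Smale order on the base), nor what the terminal contradiction at a maximal element would be---at such an element $N^*(W^u(\Lambda))$ is empty and contributes nothing to $\Sigma_{ss}$. Finally, nothing in the sketch tracks the base points $y_n=\varphi^{t_n}(x_n)$ as they transit through several critical elements while $t_n\to\infty$.

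The paper avoids all of this by taking the opposite, constructive route: it builds $\mathbf V_{ss}$ explicitly by induction on the Smale order, starting from minimal elements. Near each $\Lambda_j$ it writes down, in the linearizing coordinates of paragraphs~\ref{ss:coordinates} and~\ref{ss:coordinates-orbits}, a conical set $V_j$ cut out by an adapted norm $\mathbf N_j(\tilde\xi)=\int_0^\infty\|e^{-t\Omega_s^T}\tilde\xi\|e^{t\lambda_j}\,dt$ and an inequality of the form $\eps_j''\,\mathbf N_j(\tilde\xi)>\int_0^\infty\|e^{-t\Omega_u^T}\tilde\eta\|\,dt+|\Theta+\langle\partial_\theta P(\theta,0)(\tilde x,\tilde y),(P(\theta,0)^T)^{-1}(\tilde\xi,\tilde\eta)\rangle|$; it then checks by direct computation with~\eqref{e:linearize-hamiltonian-flow-periodic-orbit-2} that this condition is preserved by $\Phi^{t_0}$ while the orbit stays in the chart, and glues $V_j$ to the neighborhoods already built at lower levels via a \emph{uniform} finite transit time $T_j$. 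Theorem~\ref{t:compactness} enters only locally, to guarantee (by the argument of paragraphs~\ref{ss:neighborhoodproperty} and~\ref{r:crucial}) that after shrinking $\eps_j$ every point of $\Sigma_{ss}$ with base in $B_j$ already lies in $V_j$. Forward invariance is thus manufactured locally and propagated by gluing; no global uniform-attraction statement is ever needed.
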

We will focus on the case of $\mathbf{\Sigma}_{ss}$ and we will explain at each step how the proof has to be adapted for 
$\mathbf{\Sigma}_s$. Once this 
conical neighborhood is constructed, one can conclude the proof of Theorem~\ref{t:attractor} by relating the flow $\tilde{\Phi}^t$ 
to $\Phi^t$.

 We note that we used so far an auxiliary metric $g$ to define the distance. 
In the upcoming proofs, 
near every elementary critical element $\Lambda_j$, 
we shall use a norm denoted by $\mathbf{N}_j$
to define our 
neighborhood in every linearizing 
chart near the fixed $\Lambda_j$. 
This norm has a priori nothing to do with the norm induced by $g$ on 
the local chart. 
For the sake of simplicity, 
we shall start with the Euclidean metric $\|.\|$ 
in the chart and then show how to 
adapt it to the dynamics.
In order to construct the neighborhood, we will also introduce
three small parameters~:
\begin{enumerate}
\item $\epsilon_i>0$ which controls the distance of base points
to $W^u(\Lambda_i)$, 
\item $\epsilon^\prime_i>0$ which controls the distance of base points to $W^s(\Lambda_i)$, 
\item $\epsilon^{\prime\prime}_i>0$ which controls the aperture of some cone in the cotangent fiber.
\end{enumerate}
This triple of parameters will be adjusted in the inductive construction of the stable neighborhood. 
In order to clarify the upcoming statements, $\eps_i, \eps_i^{\prime},\eps_i^{\prime\prime}$ will be adjusted in terms of $\eps$ and of 
the $\eps_j, \eps_j^{\prime},\eps_j^{\prime\prime}$ with $W^u(\Lambda_j)\preceq W^u(\Lambda_i)$. Moreover, $\eps_i^{\prime\prime}$ will 
be adjusted in terms of $\eps_i^{\prime}$ and $\eps_i$ in terms of $\eps_i^{\prime\prime}$.

We now fix $\eps>0$ (small enough) and proceed to the construction of $\mathbf{V}_{ss}$ by induction on Smale's partial order relation.

\subsection{Construction near minimal elements of Smale's partial ordering}
\label{ss:initialisation}
 We start by setting
$$J_0:=\left\{1\leq j\leq K: W^u(\Lambda_j)\succeq W^u(\Lambda_{j'})\Longrightarrow j=j'\right\}.$$
These are the minimal elements for Smale's partial order relation. Recall that such points are attracting for the flow. Fix $j\in J_0$ and some small parameter 
$\eps_j>0$ that we will adjust with respect to the value of $\eps$. Suppose first that $\Lambda_j$ is a fixed 
point and let us explain how to construct $\mathbf{V}_{ss}$ near this point. Without loss of generality, we can assume 
that we are in a neighborhood of
$\Lambda_j$ where we can use the linearizing chart of paragraph~\ref{ss:coordinates}. Following 
classical ideas for hyperbolic dynamical systems~\cite[Prop.~5.2.2]{BrSt02}, we introduce the conical 
neighborhood:
\begin{equation}
V_j:= T^*B_j\setminus \{0\}  \text{ for } B_j=\left\{\tilde{x}:\int_0^{+\infty}\|e^{-t\Omega_s}\tilde{x}\|dt<\eps_j\ \right\},
\end{equation}
 where the integral is convergent thanks to~\eqref{e:lyapunov-bound-fixed-point}. Recall that, for any fixed point which is minimal, 
 $N^*(W^u(\Lambda_j))=T_{\Lambda_j}^*M$.
 By construction, this set is 
 invariant under the forward flow $\Phi^t$. Obviously, if $(\tilde{x};\tilde{\xi})\in\mathbf{\Sigma}_{ss}$ is such that
 $\int_0^{+\infty}\|e^{-t\Omega_s}\tilde{x}\|dt<\epsilon_j$, then $(\tilde{x};\tilde{\xi})$ belongs to $V_j$. Moreover, recalling the expression of 
$N^*(W^u(\Lambda_j))$ given in 
paragraph~\ref{ss:unstable-conormal-coord}, 
by choosing $\eps_j>0$ small enough, and
every point in $V_j$ is $\eps$-close to $\mathbf{\Sigma}_{ss}$ 
in the sense of the second claim of Theorem \ref{t:preciseattractor}.

Let us discuss the case where $\Lambda_j$ is a closed orbit. In that case, we fix two small parameters $\eps_j>0$ and $\eps_j''>0$ 
that will both depend on $\eps$. Moreover,
 we will fix $\eps_j$ in terms of $\eps_j''$. First of all, we introduce a new norm on $\IR^{n-1}$:
$$\mathbf{N}_j\left(\tilde{\xi}\right):=\int_0^{+\infty}\|e^{-t\Omega_s^T}\tilde{\xi}\|e^{t\lambda_j}dt,$$
which is well--defined for $\lambda_j>0$ 
small enough\footnote{We just choose $\lambda_j<\chi_+$.} thanks to the inequality~\eqref{e:lyapunov-bound-closed-orbit}. With this norm, one 
has $\mathbf{N}_j\left(e^{-t_0\Omega_s^T}\tilde{\xi} \right)\leq e^{-t_0\lambda_j}\mathbf{N}_j\left(\tilde{\xi}\right)$ for every $t_0\geq 0$ and every $\tilde{\xi}\in\IR^{n-1}$. Then, 
using the notational conventions of paragraph~\ref{aaa:floquet}, we set 
 $$B_{j}:=\left\{(P(\theta,0)\tilde{x},\theta) 
 :\int_0^{+\infty}\|e^{-t\Omega_s}\tilde{x}\|dt<\eps_j\right\},$$
 and
 $$V_{j}:=\left\{(P(\theta,0)\tilde{x},\theta,(P(\theta,0)^{T})^{-1}\tilde{\xi}),\Theta)\in T^*B_j\backslash 0 
 : (*)\ \text{holds}\right\},$$
where $(*)$ means that
\begin{equation}\label{e:starcondition}
\eps_j''\mathbf{N}_j\left(\tilde{\xi}\right)>\left|\Theta+\langle\partial_{\theta}P(\theta,0)\tilde{x},(P(\theta,0)^T)^{-1}\tilde{\xi}\rangle\right|,
\end{equation}
which is a conical set in $T^*M\backslash 0$. This condition on the cotangent component follows from the exact expression given in~\eqref{e:conormunstable-orbit}.

\begin{rema}
 In the case of $\mathbf{\Sigma}_s$, the situation is slightly simpler at this step as we just need to impose $(\tilde{\xi},\Theta)\neq 0$.
\end{rema}

Let us verify that $V_j$ is invariant under the flow 
in positive time. We fix a point in $V_j$. For the variable on $M$, this follows from the definition. 
For the cotangent component, we write using~\eqref{e:linearize-hamiltonian-flow-periodic-orbit-2} for $t_0\geq 0$~:
$$\left|\Theta+R(t_0,\tilde{x},\theta,\tilde{\xi})+\langle\partial_{\theta}P(\theta+t_0,0)e^{-t_0\Omega_s}\tilde{x},(P(\theta+t_0,0)^T)^{-1}e^{t_0\Omega_s^T}\tilde{\xi}\rangle\right|$$
$$=\left|\Theta+\langle\partial_{\theta}P(\theta,0)\tilde{x},(P(\theta,0)^T)^{-1}\tilde{\xi}\rangle\right|< \eps_j''\mathbf{N}_j\left(\tilde{\xi}\right)\leq  \eps_j''\mathbf{N}_j\left(e^{t_0\Omega_s^T}\tilde{\xi}\right)$$
Hence, the set we have just defined is invariant under the action of the flow $\Phi^{t_0}, t_0\geqslant 0$. 
Let us now verify that it is $\eps$-close to $\mathbf{\Sigma}_{ss}$ 
which is the second claim of Theorem~\ref{t:preciseattractor}. 
For $\tilde{x}=0$, this is immediate as the condition on $\tilde{\xi}$ 
reads $\eps_j''\mathbf{N}_j\left(\tilde{\xi}\right)>|\Theta|$. 
Hence, if we choose $\eps_j,\eps_j''$ small enough (with respect to $\eps$), we are done. 
In fact, as our conditions are continuous with respect to the different variables, 
we can verify that this remains true for 
$\tilde{x}$ small enough. 

\subsection{Adjusting the neighborhood property to $\Sigma_{ss}$}
\label{ss:neighborhoodproperty}
Finally, we would like to show that, for small enough $\eps_j$ and $\eps_j''$, any element
$(P(\theta,0)\tilde{x},\theta;(P(\theta,0)^{T})^{-1}\tilde{\xi}),\Theta)\in\mathbf{\Sigma}_{ss}$ 
such that 
$(P(\theta,0)\tilde{x},\theta)$ belongs to $B_j$ is in fact inside $V_j$.
This is not a priori obvious for the following reason. Since
$\Sigma_{ss}=\bigcup_{\Lambda\in NW(\varphi^t)} N^*\left(W^u(\Lambda)\right)\cap S^*M$, there are two kinds of elements 
in $\Sigma_{ss}\cap S^*B_j$~ near the periodic orbit $\Lambda_j$:
\begin{itemize}
\item the elements in $N^*\left(W^u(\Lambda_j)\right)\cap S^*M$ contained in $V_j$ by construction,
\item the points coming from the conormals $N^*\left(W^u(\Lambda)\right)\cap S^*M$
for critical elements $\Lambda$ such that $\overline{W^u(\Lambda)} \cap W^u(\Lambda_j)\neq \emptyset$.
\end{itemize}
Up to this point, what is obvious is the fact that $V_j$ forms a neighborhood of $N^*(W^u(\Lambda_j))$
by construction and we would like to show $V_j$ also contains the other points.
For that purpose, we fix $\eps_j''>0$ and we argue by contradiction.
Precisely, we suppose that for every $\eps_j>0$, we can find some element in $S^*B_j\cap\Sigma_{ss}$ not belonging to $V_j$.
Consider some sequence $(\epsilon_{j,p})_{p\in \mathbb{N}}$ such that $\lim_{p\rightarrow \infty}\epsilon_{j,p}= 0$, set $B_{j,p}=\{(P(\theta,0)\tilde{x},\theta) 
 :\int_0^{+\infty}\|e^{-t\Omega_s}\tilde{x}\|dt<\eps_{j,p}  \}$ and $V_{j,p}=\{(P(\theta,0)\tilde{x},\theta;(P(\theta,0)^{T})^{-1}\tilde{\xi}),\Theta)\in T^*B_{j,p}\backslash 0  : (*)\ \text{holds} \}$ where $(*)$ is given by equation 
 (\ref{e:starcondition}) for the \textbf{fixed} parameter $\epsilon_j^{\prime\prime}$. 
We thus obtain some sequence 
$(z_p;\zeta_p)\in S^*B_{j,p}\cap\mathbf{\Sigma}_{ss}$ such that $\text{dist}(z_{p},\Lambda_j)\rightarrow 0$ and $\zeta_p$ fails to satisfy the
inequality \eqref{e:starcondition} for all $p$~:
\begin{equation}\label{e:contradictionstar}
\eps_j''\mathbf{N}_j\left(\tilde{\xi}_p\right)\leqslant \left|\Theta_p+\langle\partial_{\theta}P(\theta_p,0)\tilde{x}_p,(P(\theta_p,0)^T)^{-1}\tilde{\xi}_p\rangle\right| .
\end{equation}
By Theorem \ref{t:compactness} the set $S^*B_j\cap\Sigma_{ss}$ is \textbf{compact}.
Hence, up to extraction, we can find a subsequence $(z_p;\eta_p)$ which converges to $(z;\zeta)\in S^*B_j\cap\Sigma_{ss}$ with $z\in \Lambda_j$.
But, since $\Sigma_{ss}$ is partitioned as $\bigcup_{\Lambda\in NW(\varphi^t)} N^*\left(W^u(\Lambda)\right)\cap S^*M$,
the element $(z;\zeta)$ actually belongs to $N^*\left(W^u(\Lambda_j)\right)\cap S^*M$.
Therefore, by the coordinate representation of $N^*\left(W^u(\Lambda_j)\right)$ from paragraph~\ref{ss:unstable-conormal-coord},
$(z;\zeta)$ is of the form $\left(0,\theta;\left(P(\theta,0)^T\right)^{-1} \tilde{\xi},0\right),$
and $(z_p;\zeta_p)=\left(\tilde{x}_p,\theta_p;\left(P(\theta_p,0)^T\right)^{-1} \tilde{\xi}_p,\Theta_p\right)$ 
where $\tilde{x}_p\rightarrow 0,\Theta_p\rightarrow 0, \tilde{\xi}_p\rightarrow \tilde{\xi}\neq 0$. But this means that, for $p\rightarrow+\infty$,
 $$ \eps_j''\mathbf{N}_j\left(\tilde{\xi}_p\right)\rightarrow\eps_j''\mathbf{N}_j\left(\tilde{\xi}\right)\neq 0,$$
 and
 $$\left|\Theta_p+\langle\partial_{\theta}P(\theta_p,0)\tilde{x}_p,(P(\theta_p,0)^T)^{-1}\tilde{\xi}_p\rangle\right|\rightarrow 0.$$
This contradicts the inequality (\ref{e:contradictionstar}). 
%

We note that the importance of being able to adjust the parameter $\epsilon_j$ will play an important role at each step of our construction. This concludes 
the construction of the neighborhood near minimal elements for Smale's partial order relation.

\subsection{Induction on Smale's partial ordering: the case of fixed points} 
 
We now proceed to the second step of the induction and set
$$J_1:=\left\{ j\notin J_0: W^u(\Lambda_j)\succeq W^u(\Lambda_{j'})\Longrightarrow j=j'\ \text{or}\ j'\in J_0\right\}.$$
We will now construct an adapted neighborhood near every $\Lambda_j$ such that $j\in J_1$. Again, we start with 
the case of a fixed point $\Lambda_j$ and we fix several small parameters $\eps_j,\eps_j',\eps_j''>0$ that will be determined. 


Recall that in the linearizing chart, 
$N^*(W^u(\Lambda_j))$ can be written as
$$\left\{(0,\tilde{y},\tilde{\xi},0):\ \tilde{y}\in\IR^{n_u},\ \tilde{\xi}\in\IR^{n_s}\setminus\{0\}\right\}.$$
First of all, we define a small neighborhood of $\Lambda_j$ inside $M$ as follows (see figure~\ref{f:neighborhood}):
$$B_j:=\left\{(\tilde{x},\tilde{y}):\int_0^{+\infty}\|e^{-t\Omega_s}\tilde{x}\|dt<\eps_j\ \text{and}\ \|\tilde{y}\|<\eps_j'\right\}.$$
Recall that the integrals converge thanks to~\eqref{e:lyapunov-bound-fixed-point}. 
\begin{figure}[ht]\label{f:neighborhood}
\includegraphics[width=14cm, height=8cm]{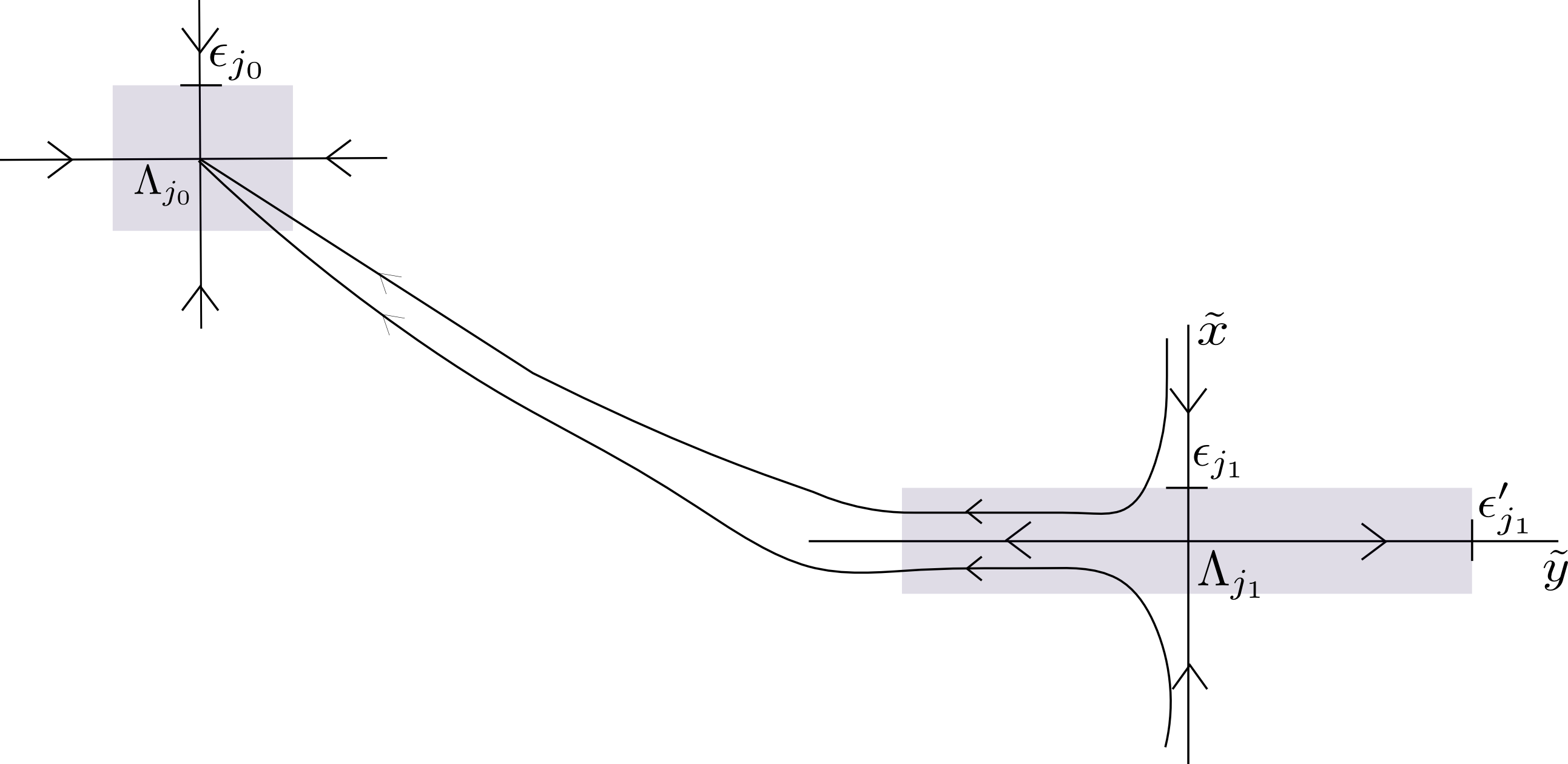}
\centering
\caption{Construction of the neighborhood inside $M$}
\end{figure}

\subsubsection{Reaching previous boxes in finite time.}
Let us first verify that there exists a uniform $T_j>0$ such that, for every point of the form 
$(0,\tilde{y})$ with $\|\tilde{y}\|=\eps_j'$ which 
are elements on the boundary of the box $B_j$, $\varphi^{T_j}(0,\tilde{y})$ belongs to one of the  previously 
constructed box $B_i$ with $i\in J_0$. Indeed, suppose by contradiction that it is not true. 
Then, one can construct a point $(0,\tilde{y}')$ with $\|\tilde{y}'\|=\eps_j'$ such that $\varphi^{t}(0,\tilde{y})$ reaches these neighborhoods in infinite time and this would 
contradict the fact 
that $\varphi^{t}(0,\tilde{y}')$ must converge 
to some $\Lambda_{i}$ with $i\in J_0$. 
By uniform continuity of the flow and up to decreasing the value of $\eps_j$ a 
little bit (in a way that depends on $\eps_j'$), we can thus assume that every point $(\tilde{x},\tilde{y})\in \overline{B_j}$ 
such that $\|\tilde{y}\|=\eps_j'$ will reach one of 
the neighborhoods $B_i$ (with $i\in J_0$) in an uniform time $T_j$.

\subsubsection{Cotangent vectors} 
 
Now we take care of the cotangent part of the elements in $B_j$. 
Precisely, as before, we first define a new norm on $\IR^{n_s}$: 
$$\mathbf{N}_j\left(\tilde{\xi}\right):=\int_0^{+\infty}\|e^{-t\Omega_s^T}\tilde{\xi}\|e^{t\lambda_j}dt,$$
which converges for small enough $\lambda_j>0$ thanks to~\eqref{e:lyapunov-bound-fixed-point}. Then, one has 
$\mathbf{N}_j\left(e^{-t_0\Omega_s^T}\tilde{\xi}\right)\leq e^{-t_0\lambda_j}\mathbf{N}_j\left(\tilde{\xi}\right)$ 
for every $t_0\geq 0$ and every $\tilde{\xi}\in\IR^{n_s}$. We now set
 $$V_{j}:=\left\{(\tilde{x},\tilde{y};\tilde{\xi},\tilde{\eta})\in T^*B_j\backslash 0 
 : (*)\ \text{holds}\right\},$$
where $(*)$ means that
$$\eps_j''\mathbf{N}_j\left(\tilde{\xi}\right)>\int_0^{+\infty}\|e^{-t\Omega_u^T}\tilde{\eta}\|dt.$$
Again this integral converges thanks to~\eqref{e:lyapunov-bound-fixed-point}. 
This defines clearly an open conical 
subset of $T^*B_j$, and we have, for every $t_0\geq 0$,
\begin{eqnarray*}
\int_0^{+\infty}\|e^{-t\Omega_u^T}e^{-t_0\Omega_u^T}\tilde{\eta}\|dt\leq \int_0^{+\infty}\|e^{-t\Omega_u^T}\tilde{\eta}\|dt
<\eps_j'' \mathbf{N}_j\left(  \tilde{\xi}\right)\leq e^{-t_0\lambda_j}\eps_j''\mathbf{N}_j\left(e^{t_0\Omega_s^T}\tilde{\xi}\right)\leq\eps_j''\mathbf{N}_j\left(e^{t_0\Omega_s^T}\tilde{\xi}\right).
\end{eqnarray*} 
In other words, it means that the 
conical condition on the cotangent vectors is 
preserved under the action of the forward flow $\Phi^{t_0}$ on $T^*M$. All the norms on 
$\IR^{n_s}$ being equivalent, 
we can also verify that every point in $V_j$ is $\eps$-close to $N^*(W^u(\Lambda_j))$ in the sense of
condition (2) of Theorem \ref{t:preciseattractor} (at least if the $\eps_j^{*}$ 
are chosen small enough). 
Up to decreasing the value of $\eps_j$ and $\eps_j''$ (in a way that depends only on $\eps_j'$ and $\eps$), 
we may also suppose that for every $(z;\zeta)\in V_j$ and for 
every $0\leq t\leq T_j$, the point $\Phi^t(z;\zeta)$ 
remains $\eps$-close to $N^*(W^u(\Lambda_j))$ in the sense of (2).

We now introduce the subsets
$$\mathcal{B}_j:=\bigcup_{t\geq 0}\varphi^t\left(\cup_{i\in J_0}B_{i}\cup B_j\right)\subset M,$$
and 
$$\mathcal{V}_j:=\bigcup_{t\geq 0}\Phi^t\left(\cup_{i\in J_0}V_{i}\cup V_j\right)\subset T^*M\backslash 0.$$ 
These are invariant subsets under the forward flow by definition. 
From our construction, all the points inside $\mathcal{V}_j$ are also $\eps$-close to $\mathbf{\Sigma}_{ss}$. 
\subsubsection{Adjusting the constants to $\Sigma_{ss}$}
\label{r:crucial}
 Let us come back to the important observation made in paragraph~\ref{ss:neighborhoodproperty}. Recall that $\mathbf{\Sigma}_{ss}$ is the union of all the conormals $N^*(W^u(\Lambda_i))$ 
 where $1\leq i\leq K$ and we aim at constructing a neighborhood of $\mathbf{\Sigma}_{ss}$. In this second step of the induction, we have indeed
 constructed a neighborhood $\mathcal{V}_j$ of $N^*(W^u(\Lambda_j))$ for $j\in J_1$ because we know from the first step of the induction 
 that every element 
 $(z;\zeta)\in \mathbf{\Sigma}_{ss}$ with 
 $z\in\cup_{i\in J_0}B_{i}$ belongs to $\cup_{i\in J_0}V_{i}$. In order to continue the procedure, we need to 
 ensure the same property near $\Lambda_j$ for $j\in J_1$.

Hence, the last thing we want to impose is that, if $(z;\zeta)$ belongs to $T^*\mathcal{B}_j\cap\Sigma_{ss}$, then necessarily $(z;\zeta)$ belongs to 
$\mathcal{V}_j$ which means we have really constructed some neighborhood of $\Sigma_{ss}$. 
Again, we fix $\eps_j''>0$ and we argue by contradiction as in 
subsection \ref{ss:neighborhoodproperty}. Again, we suppose that, 
for every $\eps_j>0$, we can find point of $\Sigma_{ss}$ near $\Lambda_j$ not lying in this neighborhood. Then, up to an 
extraction, we can find a sequence of points 
inside\footnote{Note, from the first step of the induction, 
that such a sequence cannot be contained inside $\cup_{i\in J_0}V_{i}$.}
$\Sigma_{ss}$ that would converge to a point not belonging to 
$N^*(W^u(\Lambda_j))\cup \left(\cup_{i\in J_0}V_{i}\right)$. As in paragraph~\ref{ss:neighborhoodproperty}, this would 
contradict the fact that $\Sigma_{ss}$ is compact.
 
\subsection{Induction on Smale's partial ordering: the case of closed orbits} We now treat the case where $\Lambda_j$ is a closed orbit. 
The procedure is more or less the same but we repeat it to take into account 
the effects on the $\theta$ variable and its dual variable $\Theta$.
Recall that, in this case, 
$N^*(W^u(\Lambda_j))$ can be written in the linearizing chart as~:
$$\left\{\left(P(\theta,0)(0,\tilde{y}),\theta;(P(\theta,0)^T)^{-1}(\tilde{\xi},0),\Theta(\theta,\tilde{y},\tilde{\xi})\right)
:\tilde{y}\in\IR^{n_u},\ 
\theta\in \IR/\ml{P}_{\Lambda}\IZ,\ 
\tilde{\xi}\in\IR^{n_s}\setminus
\{0\}
\right\},$$
where $\Theta(\theta,\tilde{y},\tilde{\xi})=-\la\partial_{\theta}P(\theta,0)\tilde{x},(P(\theta,0)^T)^{-1}\tilde{\xi}\rangle$.
As above, we fix three small parameters $\eps_j,\eps_j',\eps_j''>0$ that will be adjusted in terms of the dynamics and we define first a neighborhood inside $M$:
 $$B_j:=\left\{(P(\theta,0)(\tilde{x},\tilde{y}),\theta)\in\IR^{n-1}\times(\IR/\ml{P}_{\Lambda_j}\IZ):\int_0^{+\infty}\|e^{-t\Omega_s}\tilde{x}\|dt<\eps_j\ \text{and}\ \|\tilde{y}\|<\eps_j'\right\}.$$
Arguing as with 
fixed points, we can find a uniform $T_j>0$ such that every point inside $\overline{B_j}$ (with $\|\tilde{y}\|=\eps_j'$) will belong to one of the 
neighborhoods $B_{i}$ with $i\in J_0$ at time $T_j$.

We now need to have a look at the way we lift this neighborhood in the cotangent space. As before, we define a new norm $\mathbf{N}_j$ 
on $\IR^{n_s}$ such that 
$\mathbf{N}_j\left(e^{-t_0\Omega_s^T}\tilde{\xi}\right)
\leq e^{-t_0\lambda_j}\mathbf{N}_j\left(\tilde{\xi}\right)$ 
for every $t_0\geq 0$ and every $\tilde{\xi}\in\IR^{n_s}$. 
Here, $\lambda_j>0$ 
is one more time a small enough parameter. Mimicking what we have done before, we then define
 $$V_{j}:=\left\{\left(P(\theta,0)(\tilde{x},\tilde{y}),\theta;
 (P(\theta,0)^T)^{-1}(\tilde{\xi},\tilde{\eta}),\Theta\right)\in T^*B_j\backslash 0 
 : (*)\ \text{holds}\right\},$$
where $(*)$ now means that
$$\eps_j''\mathbf{N}_j\left(\tilde{\xi}\right)>\int_0^{+\infty}\|e^{-t\Omega_u^T}\tilde{\eta}\|dt
+\left|\Theta+\langle\partial_{\theta}P(\theta,0)(\tilde{x},\tilde{y}),(P(\theta,0)^T)^{-1}(\tilde{\xi},\tilde{\eta})\rangle\right|.$$
This defines an open conical set inside $T^*V_j$, and we want to check that property~$(*)$ is 
preserved under the action of the forward flow $\Phi^{t_0}$ for $t_0\geq 0$~:
$$
 \int_0^{+\infty}\|e^{-(t+t_0)\Omega_u^T}\tilde{\eta}\|dt\leq\int_0^{+\infty}\|e^{-t\Omega_u^T}\tilde{\eta}\|dt,\ \mathbf{N}_j\left(\tilde{\xi}\right)\leq
\eps_j''\mathbf{N}_j\left(e^{t_0(\Omega_s)^T}\tilde{\xi}\right)$$
and, thanks to~\eqref{e:linearize-hamiltonian-flow-periodic-orbit-2},
 $$\left|\Theta+\langle\partial_{\theta}P(\theta,0)(\tilde{x},\tilde{y}),(P(\theta,0)^T)^{-1}(\tilde{\xi},\tilde{\eta})\rangle\right|\hspace{3cm}$$
 $$=
 \left|\Theta+R(t_0,\tilde{x},\tilde{y},\theta,\tilde{\xi},\tilde{\eta})+\langle\partial_{\theta}P(\theta+t_0,0)
 (e^{-t_0\Omega_s}\tilde{x},e^{t_0\Omega_u}\tilde{y}),(P(\theta+t_0,0)^T)^{-1}(e^{t_0\Omega_s^T}\tilde{\xi},e^{-t_0\Omega_u^T}\tilde{\eta})\rangle\right|.$$
Combining these three inequalities, we find that the property of the cotangent component is preserved under the action of the forward flow.
Arguing as for fixed points, we can ensure that up to decreasing the value of $\eps_j$ and $\eps_j''$ (in a way that depends only on $\eps_j'$ and on $\eps$), 
we can verify that, for every $(z;\zeta)\in V_j$ and for 
every $0\leq t\leq T_j$, the point $\Phi^t(z;\zeta)$ remains $\eps$-close to $N^*(W^u(\Lambda_j))$ in the sense of~\ref{t:preciseattractor}.

We now connect these two neighborhoods with the ones constructed for $j'\in J_0$. Again, we define the (forward) invariant sets: 
$$\mathcal{B}_j:=\bigcup_{t\geq 0}\varphi^t\left((\cup_{i\in J_0}B_{i})\cup B_j\right)\subset M,$$
and 
$$\mathcal{V}_j:=\bigcup_{t\geq 0}\Phi^t\left((\cup_{i\in J_0}V_{i})\cup V_j\right)\subset T^*M\backslash 0.$$ 
By construction, any point $(z;\zeta)$ in $\mathcal{V}_j$ 
is $\eps$-close to $\mathbf{\Sigma}_{ss}$. Once again, we can make use of the compactness of $\Sigma_{ss}$ to verify that, 
for $\eps_j>0$ small enough, the following holds. If $(z;\zeta)$ 
belongs to $T^*\mathcal{B}_j\cap\mathbf{\Sigma}_{ss}$, then necessarily $(z;\zeta)$ belongs to 
$\mathcal{V}_j$ -- see paragraph~\ref{ss:neighborhoodproperty}.

\subsection{The case of $\mathbf{\Sigma}_s$.}
  We now have to discuss what has to be 
  modified in the case of $\mathbf{\Sigma}_s$. 
  Recall that, in this case, $\cup_{x\in\Lambda_j}N^*(W^{uu}(x))$ can be written in the linearizing chart as
 $$\left\{\left(P(\theta,0)(0,\tilde{y}),\theta;(P(\theta,0)^T)^{-1}(\tilde{\xi},0),\Theta\right)
 :\tilde{y}\in\IR^{n_u},\ \theta\in \IR/\ml{P}_{\Lambda}\IZ,\ (\tilde{\xi},\Theta)\in\IR^{n_s+1}\setminus\{0\}
\right\}.$$
The argument is exactly the same and the only point that 
needs to be modified is the condition $(*)$ appearing in the definition of $V_j$. Precisely, we set $(*)$ to be the 
condition:
$$\eps_j''\left(\mathbf{N}_j\left(\tilde{\xi}\right)+
\left|\Theta+\langle\partial_{\theta}P(\theta,0)(\tilde{x},\tilde{y}),(P(\theta,0)^T)^{-1}(\tilde{\xi},\tilde{\eta})\rangle\right|\right)>
\int_0^{+\infty}\|e^{-t\Omega_u^T}\tilde{\eta}\|dt.$$
Again, this is an open conical condition. The same calculation as before shows that the condition $(*)$ is preserved under the action of the forward flow (as 
soon as we remain close to $\Lambda_j$).


\subsubsection{Conclusion of the proof.} 
 
In order to conclude the construction, we continue this induction procedure up to the point where we have exhausted all the critical elements $(\Lambda_{j})_{j=1,\ldots, K}$. In the case of 
$\mathbf{\Sigma}_{ss}$, this means that we exhaust all the critical points such that $\text{dim}(W^u(\Lambda_j))>0$ and all closed orbits
such that $\text{dim}(W^u(\Lambda_j))>1$. In the case of $\mathbf{\Sigma}_s$, 
we have to go one step further and include closed orbits such that $\text{dim}(W^u(\Lambda_j))>0$.

Actually, our procedure allows more than just 
constructing neighborhoods of $\mathbf{\Sigma}_s$ and $\mathbf{\Sigma}_{ss}$. 
Indeed, for any subfamily $(\Lambda_j)_{j\in J}$ of critical elements, we could repeat the 
same construction inductively for the Smale partial order relation, and
we would prove the following Theorem that will be used in~\cite{DaRi17b}:
\begin{theo} Suppose that $\varphi^t$ is a $\ml{C}^1$-linearizable Morse-Smale flow.
Fix a subfamily $(\Lambda_j)_{j\in J}$ of critical elements.
Then for every $\eps>0$, there exists 
an $\eps$-neighborhood of
$$\bigcup_{j\in J}\bigcup_{W^u(\Lambda_i)\preceq W^u(\Lambda_j)}W^u(\Lambda_i)$$
which is invariant under application of the flow in forward times.
Similarly, there exist forward invariant small neighborhoods of the following 
subsets of $S^*M$:
$$\bigcup_{j\in J}\bigcup_{W^u(\Lambda_i)\preceq W^u(\Lambda_j)} N^*(W^u(\Lambda_i))\cap S^*M,$$
and
$$\bigcup_{j\in J}\bigcup_{W^u(\Lambda_i)\preceq W^u(\Lambda_j)} \bigcup_{x\in\Lambda_i}N^*(W^{uu}(x))\cap S^*M.$$

\end{theo}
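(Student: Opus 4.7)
Let $\tilde J := \{i : W^u(\Lambda_i) \preceq W^u(\Lambda_j) \text{ for some } j\in J\}$ denote the downward closure of $J$ under Smale's partial order; the three unions in the statement coincide with the corresponding unions over $i\in\tilde J$, and the key structural property is that $\tilde J$ is itself downward closed: if $i\in\tilde J$ and $W^u(\Lambda_{i'})\preceq W^u(\Lambda_i)$, then $i'\in\tilde J$. I would handle the three claims in parallel, following the template of Sections~\ref{s:proofcompact} and~\ref{s:proof-attractor} with the index set restricted to $\tilde J$.

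The first step is to establish the compactness of the restricted sets $\Sigma_{ss}^J:=\bigcup_{i\in\tilde J}N^*(W^u(\Lambda_i))\cap S^*M$ and $\Sigma_{s}^J:=\bigcup_{i\in\tilde J}\bigcup_{x\in\Lambda_i}N^*(W^{uu}(x))\cap S^*M$. This is a direct rerun of the algorithm of paragraph~\ref{ss:algo}: a sequence in $\Sigma_{ss}^J$ whose limit is assumed to lie outside produces, via Lemma~\ref{l:keytechnicallemma}, a chain $\Lambda_{i_1},\Lambda_{i_2},\dots$ of pairwise distinct critical elements with $W^u(\Lambda_{i_{p+1}})\succ W^u(\Lambda_{i_p})$ along which we keep extracting limit points. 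Because $\tilde J$ is downward closed, $\Lambda_{i_p}\in\tilde J$ forces $\Lambda_{i_{p+1}}\in\tilde J$, so the chain stays in $\tilde J$; since $\tilde J$ is finite, the algorithm terminates, giving the same contradiction as in Theorem~\ref{t:compactness}. The base-space analogue (closedness of $\bigcup_{i\in\tilde J}W^u(\Lambda_i)$ in $M$) follows similarly, or directly from Lemma~\ref{l:boundary2}.

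Next I would carry out the inductive construction of Section~\ref{s:proof-attractor} restricted to $\tilde J$. Enumerate $\tilde J$ so that minimal elements of $(\tilde J,\preceq)$ come first, and for each $i\in\tilde J$ define the local box $B_i\subset M$ and the local conical set $V_i\subset T^*M\setminus 0$ exactly as in paragraphs~\ref{ss:initialisation}--\ref{r:crucial} (for fixed points) and paragraph~6.4 (for closed orbits), with the obvious modification of condition $(*)$ described in paragraph~6.5 when one targets $\Sigma_s^J$ rather than $\Sigma_{ss}^J$. Downward closure of $\tilde J$ ensures that the boundary of $B_i$ is swept in uniform time $T_i$ into the union of previously-constructed boxes $B_{i'}$ with $i'\in\tilde J$ and $W^u(\Lambda_{i'})\prec W^u(\Lambda_i)$ — this is where the assumption on $J$ is genuinely used. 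Set $\mathcal{B}_J:=\bigcup_{i\in\tilde J}\bigcup_{t\geq 0}\varphi^t(B_i)$ and $\mathcal{V}_J:=\bigcup_{i\in\tilde J}\bigcup_{t\geq 0}\Phi^t(V_i)$; these are forward invariant by construction, and points inside them are $\eps$-close to the corresponding target set provided the parameters $\eps_i,\eps_i',\eps_i''$ are chosen successively small enough in terms of $\eps$ and of the parameters attached to lower elements (with the same nesting $\eps_i\ll\eps_i''\ll\eps_i'$ as in Section~\ref{s:proof-attractor}).

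The main point, and the only step requiring real work, is to verify that $\mathcal{V}_J$ is a \emph{neighborhood} of $\Sigma_{ss}^J$ (resp.\ $\Sigma_s^J$), not merely a forward invariant overset. This is proven level by level by contradiction, exactly as in paragraph~\ref{r:crucial}: fixing all parameters associated to elements below $\Lambda_i$ and fixing $\eps_i''$, if for arbitrarily small $\eps_i$ one could exhibit a point of $\Sigma_{ss}^J$ lying in $T^*B_i$ but outside $V_i\cup\bigcup_{i'\prec i}\mathcal{V}_{i'}$, the compactness of $\Sigma_{ss}^J$ proved in the first step would yield an accumulation point in $N^*(W^u(\Lambda_i))\cap S^*M$ violating the strict open inequality defining $V_i$. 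The main obstacle is really bookkeeping: making the successive parameter choices along $(\tilde J,\preceq)$ consistent with the base case $J_0$ and the nested dependence $\eps_i\ll\eps_i''\ll\eps_i'$, which is already handled by the scheme of Section~\ref{s:proof-attractor} applied verbatim.
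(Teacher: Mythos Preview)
Your approach is exactly the one the paper intends: it simply says to repeat the inductive construction of Section~\ref{s:proof-attractor} over the downward closure $\tilde J$, using the compactness statement of Section~5.4 as the substitute for Theorem~\ref{t:compactness}. One small slip: in your compactness step you invoke Lemma~\ref{l:keytechnicallemma} assuming only that the limit lies outside $\Sigma_{ss}^J$, whereas the lemma requires it to lie outside $\Sigma_{ss}$; you should first observe that a limit in $\Sigma_{ss}\setminus\Sigma_{ss}^J$ is impossible (since $z_\infty\in\overline{W^u(\Lambda)}$ with $\Lambda\in\tilde J$ forces, via Smale's Theorem~\ref{t:smale}, $z_\infty\in W^u(\Lambda')$ with $\Lambda'\preceq\Lambda$, hence $\Lambda'\in\tilde J$). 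Also, your claim ``$\Lambda_{i_p}\in\tilde J$ forces $\Lambda_{i_{p+1}}\in\tilde J$'' has the implication backwards for a downward-closed set, but this is harmless: the chain terminates because there are finitely many critical elements, independently of whether it stays in $\tilde J$.
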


\section{Construction of an escape function}
\label{s:escape}

We now proceed to the proof of our last dynamical statement before explaining the spectral construction that follows from 
such analysis. 
Precisely, one of the key ingredient in the spectral construction of 
Faure-Sj\"ostrand is the construction of a nice enough ``escape function''. In order to state the existence of an escape function in our framework, 
we introduce the following subset inside $T^*M$:
$$\Gamma_0:=\left\{(x;\xi)\in T^*M\backslash 0:\ \exists 1\leq j\leq K,\ x\in\Lambda_j\ \text{and}\ (x;\xi)\in N^*\left(W^{uu}(x)\right)\cap 
N^*\left(W^{ss}(x)\right)\right\}.$$
This set corresponds to the neutral direction of the flow above closed orbits of $\varphi^t$. In terms of local coordinates near a closed orbit $\Lambda$, it reads
$$\{(0,0,\theta;0,0,\Theta):(\theta,\Theta) \in(\IR/\ml{P}_{\Lambda}\IZ)\times(\IR-\{0\})\}.$$
The main dynamical ingredient to start our spectral construction 
is the following Lemma: 
\begin{lemm}[Escape function]\label{l:escape-function} Let $V$ be a $\ml{C}^{\infty}$ vector field inducing a $\ml{C}^1$ linearizable Morse-Smale flow. 
Let $(u,n_0,s)$ be elements in $\IR$ with $u<-2\|E\|_{\infty}<2\|E\|_{\infty}<n_0<s$
where $E$ is the energy function of Theorem~\ref{t:meyer}. 
Let $N_0$ be an arbitrarily small conic neighborhood of $\Gamma_0$ inside 
$T^*M\backslash 0$.

Then, there exists a smooth metric $\|.\|_x$, a smooth function $m(x;\xi)\in\ml{C}^{\infty}(T^*M)$ called an order function, taking values in $[u,s]$, and an escape function 
on $T^*M$ defined by
$$G_m(x;\xi):=\frac{1}{2}m(x;\xi)\log(1+f(x;\xi)^2),$$
where $f(x;\xi)\in\ml{C}^{\infty}(T^*M)$. Moreover, $f(x;\xi)$ is positive and positively homogeneous of degree $1$ for $\|\xi\|_x\geq 1$, 
$f(x;\xi)=\|\xi\|_x$ outside a small neighborhood of~$N_0$ and $f(x;\xi)=|\xi(V(x))|=|H(x;\xi)|$ in a 
small neighborhood of~$N_0$. Finally, one has
\begin{enumerate}
 \item  For $\|\xi\|_x\geq 1$, $m(x;\xi)$ depends only on $\xi/\|\xi\|_x$ and it takes values $\leq \frac{u}{4}$ (resp. $\geq\frac{n_0}{4}$ and $\geq\frac{s}{4}$) in a small neighborhood of 
 $\cup_{j=1}^KN^*(W^u(\Lambda_j))$ (resp.~$\Gamma_0$ and~$\cup_{j=1}^KN^*(W^s(\Lambda_j))$). It also takes value $\geq \frac{n_0}{10}-\frac{u}{2}$ outside a slightly larger neigborhood of $\cup_{j=1}^KN^*(W^u(\Lambda_j))$
 \item There exists $R>0$ such that, for every $(x;\xi)$ in $T^*M$ satisfying $\|\xi\|_x\geq R$, one has 
 $$X_{H}G_m(x;\xi)\leq0.$$
 \item If in addition $(x;\xi)\notin N_0$, then
 $$X_{H}G_m(x;\xi)\leq -C_m<0,$$
 with 
 $$C_m:=c\min(|u|,s),$$
 for some constant $c>0$ independent of $u$, $n_0$ and $s$.
\end{enumerate}
\end{lemm}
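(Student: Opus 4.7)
The plan is to adapt the Faure--Sj\"ostrand strategy to our Morse--Smale setting using Theorems~\ref{t:compactness} and~\ref{t:attractor} together with Meyer's energy function (Theorem~\ref{t:meyer}). First I would construct a smooth function $m_0 : S^*M \to [u,s]$ on the cosphere bundle satisfying the pointwise values of item~(1) and, crucially, $\tilde{X}_{H} m_0 \leq 0$ on $S^*M$ with strict inequality outside an arbitrarily small neighborhood of $\Gamma_0 \cap S^*M$. Once $m_0$ is built, it is extended to $T^*M$ by $0$-homogeneity for $\|\xi\|_x \geq 1$ and smoothly cut off near the zero section; this yields the required order function $m$, and the condition $u < -2\|E\|_\infty < 2\|E\|_\infty < n_0 < s$ provides the slack needed below to absorb a bounded additive correction without changing the prescribed signs.

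For $m_0$, I would use the enhanced form of Theorem~\ref{t:attractor}: pick nested forward-invariant neighborhoods $U^{ss} \Subset V^s$ of $\Sigma_{ss} \subset \Sigma_s$ and backward-invariant neighborhoods $U^{uu} \Subset V^u$ of $\Sigma_{uu} \subset \Sigma_u$. A direct inspection of the coordinate descriptions of paragraphs~\ref{ss:unstable-conormal-coord}--\ref{ss:unstable-conormal-coord-2} gives $\Sigma_s \cap \Sigma_u = \Gamma_0 \cap S^*M$, so for small enough neighborhoods $V^s \cap V^u$ is a neighborhood of $\Gamma_0 \cap S^*M$ contained in $N_0$; the transversality~\eqref{e:transversality} makes the pairs $\tilde{U}^{ss} \cap V^u$ and $V^s \cap \tilde{U}^{uu}$ empty. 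Assign the constant values $u$, $s$, $n_0$ on these three types of neighborhoods. The monotonicity $\tilde{X}_H m_0 \leq 0$ is then achieved by combining Meyer's $E$ (whose pullback to $T^*M$ satisfies $X_H E = V(E) \geq 0$, so that an auxiliary term $-\beta E$ is strictly decreasing off the critical set) with a direction-dependent interpolation built inductively along Smale's partial order (Theorem~\ref{t:smale}) starting from the minimal critical elements where the attraction is immediate.

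Next, construct $f$: rescale the auxiliary Riemannian metric $g$ along each closed orbit so that the covector dual to $V$ has unit $\|\cdot\|_x$-norm on $\Gamma_0 \cap S^*M$, which forces $\|\xi\|_x = |H(x;\xi)| = |\xi(V)|$ on $\Gamma_0 \cap S^*M$. Then $f$ is obtained by smoothly interpolating between $\|\xi\|_x$ outside $N_0$ and $|H(x;\xi)|$ inside a smaller conic neighborhood of $\Gamma_0$; both are positive and $1$-homogeneous, and agree on $\Gamma_0$, so the gluing is smooth for $\|\xi\|_x \geq 1$. The properties of $G_m$ then follow from the direct computation
$$X_H G_m = \tfrac{1}{2}(X_H m)\log(1+f^2) + m\cdot\frac{f\,X_H f}{1+f^2}.$$
Inside the small neighborhood of $\Gamma_0$ where $m$ is constant and $f = |H|$, both $X_H m = 0$ and $X_H f = 0$ (since $\{H,H\} = 0$), hence $X_H G_m = 0$. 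Outside $N_0$, where $f = \|\xi\|_x$, the first term dominates for $\|\xi\|_x$ large: it is strictly negative of order $(X_H m)\log\|\xi\|_x$, while $m\, f\,X_H f/(1+f^2)$ is bounded by a constant depending only on the Lyapunov exponents. Choosing $R$ large yields $X_H G_m \leq 0$ globally and $\leq -C_m$ outside $N_0$, with $C_m$ proportional to $\min(|u|,s)$.

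The main obstacle is the second step: producing an $m_0$ with both the prescribed boundary values and the global monotonicity. Naive time-averaging of a step-profile preserves the value $u$ on $U^{ss}$ by forward-invariance, but cannot preserve $s$ on $U^{uu}$ since forward trajectories leave $U^{uu}$ in finite time. The resolution is to use $E$ as a monotonicity reservoir in the transition region and to recursively construct $m_0$ along Smale's stratification, from sources towards sinks; the delicate point is the patching near closed orbits, where one must simultaneously impose $m_0 = n_0$ along the neutral direction $\Gamma_0$ and guarantee strict decrease along all nonneutral cotangent directions, which requires the explicit Hamiltonian formulas of Section~\ref{s:cotangent} in Floquet coordinates to control the flow on the full cotangent fibre above each periodic orbit.
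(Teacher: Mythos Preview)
Your proposal has the right ingredients but contains two genuine gaps, one in each of the two main steps.

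\textbf{Construction of $m_0$.} You correctly identify the obstacle with naive time-averaging, but your proposed fix (an inductive construction along Smale's partial order, patched near closed orbits using Floquet coordinates) is not carried out and looks substantially harder than necessary. The paper's solution is much more direct: it builds \emph{two} time-averaged functions $\tilde m_1,\tilde m_2\in\ml{C}^\infty(S^*M,[0,1])$, one for the pair $(\Sigma_{uu},\Sigma_s)$ and one for the pair $(\Sigma_{ss},\Sigma_u)$, each obtained by averaging a $\{0,1\}$-valued function over a long time interval $[-T,T]$ (this is where Theorems~\ref{t:compactness} and~\ref{t:attractor} enter). Each $\tilde m_i$ automatically satisfies $\tilde X_H\tilde m_i\ge 0$ and is close to $0$ or $1$ on the relevant pieces. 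Then one simply sets
\[
\tilde m(x;\xi):=-E(x)+s+(n_0-s)\tilde m_1+(u-n_0)\tilde m_2,
\]
which immediately gives $\tilde X_H\tilde m\le 0$ (as $n_0-s<0$ and $u-n_0<0$) and the correct ranges of values on the three regions $\tilde N_u,\tilde N_0,\tilde N_s$. No induction on the stratification is needed; the assumptions $u<-2\|E\|_\infty$ and $2\|E\|_\infty<n_0<s$ are exactly what make the arithmetic work.

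\textbf{Decay of $G_m$ outside $N_0$.} Your argument that ``the first term $(X_Hm)\log(1+f^2)$ dominates and is strictly negative'' fails in an essential region: near a critical element $\Lambda_j$ and with $(x;\xi/\|\xi\|)$ in the neighborhood $\tilde N_u$ of $\Sigma_{ss}$ (or in $\tilde N_s$), the quantity $X_Hm$ need not be bounded away from zero---indeed $\ml L_VE$ vanishes on $\Lambda_j$ and $\tilde m_1,\tilde m_2$ may be locally constant there. In those regions the paper obtains the strict decay from the \emph{second} term instead: one chooses the metric $\|\cdot\|_x$ on the fibres so that hyperbolicity gives $X_H\|\xi\|_x^2\ge c_0\|\xi\|_x^2$ on a cone around $N^*(W^u(\Lambda_j))$ (and the reverse inequality around $N^*(W^s(\Lambda_j))$). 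Since $m\le u/4<0$ on $\tilde N_u$ (resp.\ $m\ge s/4>0$ on $\tilde N_s$), the product $m\cdot X_H\log(1+\|\xi\|_x^2)$ is uniformly $\le c_0u/8$ (resp.\ $\le -c_0s/8$). This is precisely where the choice of the adapted metric in the statement of the Lemma comes from, and it is the mechanism that produces the constant $C_m=c\min(|u|,s)$. Your sketch misses this entirely.
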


In fact, the metric in this Lemma is rather arbitrary. As we shall see, we only need to fix its value on the $\Lambda_j$ in a manner that depends 
on the hyperbolic dynamics. Note that, as $\xi(V(x))$ does not vanish in a neighborhood of 
$\Gamma_0$, the function $f(x;\xi)$ does not vanish on $S^*M$. The same result was proved in the Anosov framework in~\cite[Lemma~1.2]{FaSj11}. The main input compared with that case are Meyer's Theorem~\ref{t:meyer} 
on the one hand and Theorems~\ref{t:compactness} and~\ref{t:attractor} on the other. 
We now proceed to the construction of 
the escape function.

\subsection{A preliminary lemma}

The main lines of the argument are very close to the one given for Anosov vector fields by Faure and Sj\"ostrand in~\cite{FaSj11}. Yet, several steps need to be 
adapted in order to fit into our dynamical framework. To begin with, we recall a dynamical Lemma from~\cite[Lemma~2.1]{FaSj11}:
\begin{lemm}\label{l:faure-sjostrand} Let $V^{uu}$ and $V^s$ be small open neighborhoods of $\Sigma_{uu}$ and $\Sigma_s$ respectively, and let $\eps>0$. Then, 
there exist smaller open neighborhoods $\ml{O}^{uu}\subset V^{uu}$ and $\ml{O}^s\subset V^s$ of $\Sigma_{uu}$ 
and $\Sigma_s$ respectively, $\tilde{m}_1$ in $\ml{C}^{\infty}(S^*M,[0,1])$, $\eta_1>0$ such that $\tilde{X}_{H}\tilde{m}_1\geq 0$ on $S^*M$, 
$\tilde{X}_{H}\tilde{m}_1\geq\eta_1>0$ on $S^*M-(\ml{O}^{uu}\cup\ml{O}^s)$, $\tilde{m}_1(x;\xi)>1-\epsilon$ for $(x;\xi)\in \ml{O}^s$ and $\tilde{m}_1(x;\xi)<\eps$ for $(x;\xi)\in \ml{O}^{uu}$. 
\end{lemm}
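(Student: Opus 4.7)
The plan is to construct $\tilde{m}_1$ as a backward time average along the Hamiltonian flow of a smooth cutoff function $g$ that vanishes on a neighborhood of $\Sigma_{uu}$ and equals $1$ on a neighborhood of $\Sigma_s$. The averaging automatically gives $0\leq\tilde{m}_1\leq 1$, and a large enough averaging time $T$ will force the values $0$ and~$1$ to be essentially attained on suitably chosen smaller neighborhoods $\ml{O}^{uu}$ and $\ml{O}^s$. The two key inputs are Theorem~\ref{t:attractor}, which supplies forward/backward invariant neighborhoods of $\Sigma_s$ and $\Sigma_{uu}$, and Lemma~\ref{l:limit-attractor} combined with the compactness Theorem~\ref{t:compactness}, which together give uniform control on the transit times between the two invariant sets.

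First I would apply Theorem~\ref{t:attractor} and its time-reversed version to produce nested open sets $\Sigma_{uu}\subset V^{uu}_-\subset \overline{V^{uu}_-}\subset V^{uu}$ with $V^{uu}_-$ backward-invariant under $\tilde{\Phi}^t$, and to shrink $V^s$ to a slightly smaller forward-invariant open neighborhood of $\Sigma_s$ (still denoted $V^s$). By the transversality condition~\eqref{e:transversality} together with the compactness of $\Sigma_{uu}$ and $\Sigma_s$ from Theorem~\ref{t:compactness}, I may further assume $\overline{V^{uu}}\cap\overline{V^s}=\emptyset$. I would then pick $g\in\ml{C}^\infty(S^*M;[0,1])$ satisfying $g\equiv 0$ on $\overline{V^{uu}_-}$, $g\equiv 1$ on $V^s$, and $g>0$ on $S^*M\setminus\overline{V^{uu}_-}$.

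For $T>0$ to be adjusted later, I would set
$$\tilde{m}_1(x;\xi):=\frac{1}{T}\int_{-T}^0 g(\tilde{\Phi}^t(x;\xi))\,dt,$$
which lies in $[0,1]$ and satisfies $\tilde{X}_H\tilde{m}_1(x;\xi)=T^{-1}\bigl(g(x;\xi)-g(\tilde{\Phi}^{-T}(x;\xi))\bigr)$. I would take $\ml{O}^{uu}:=V^{uu}_-$ and $\ml{O}^s:=\tilde{\Phi}^{(1-\eps)T}(V^s)$. The latter is an open forward-invariant neighborhood of $\Sigma_s$ whose defining property — that for every $(x;\xi)\in\ml{O}^s$ the backward orbit $\{\tilde{\Phi}^{-t}(x;\xi):0\leq t\leq(1-\eps)T\}$ stays inside $V^s$, hence inside $\{g=1\}$ — follows at once from forward invariance of $V^s$. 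This immediately yields $\tilde{m}_1\geq 1-\eps$ on $\ml{O}^s$, while backward invariance of $V^{uu}_-$ gives $\tilde{m}_1\equiv 0$ on $\ml{O}^{uu}$.

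The main obstacle is to verify the monotonicity $\tilde{X}_H\tilde{m}_1\geq 0$ globally together with the strict positivity $\tilde{X}_H\tilde{m}_1\geq\eta_1$ outside $\ml{O}^{uu}\cup\ml{O}^s$. These reduce to showing $g(x;\xi)\geq g(\tilde{\Phi}^{-T}(x;\xi))$ pointwise, and $g(\tilde{\Phi}^{-T}(x;\xi))=0$ on the compact complement $S^*M\setminus(\ml{O}^{uu}\cup\ml{O}^s)$. For this I would fix $T$ large enough so that, by compactness of $S^*M\setminus(\ml{O}^{uu}\cup\ml{O}^s)$ — a set which avoids both $\Sigma_{uu}$ and $\Sigma_s$ — the time-reversed form of Lemma~\ref{l:limit-attractor} supplies a uniform backward-entry time into $V^{uu}_-$. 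A case analysis splitting according to whether $(x;\xi)$ lies in $\ml{O}^{uu}$, in $\ml{O}^s$, in $V^s\setminus\ml{O}^s$, or in $S^*M\setminus(V^{uu}\cup V^s)$ then yields the required inequalities, with $\eta_1:=c/T$ where $c:=\inf_{S^*M\setminus(\ml{O}^{uu}\cup\ml{O}^s)}g>0$.
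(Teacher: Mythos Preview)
Your approach is essentially the same time-averaging idea as the paper's (following Faure--Sj\"ostrand), but your execution has a concrete gap in the strict positivity step. You set $\ml{O}^{uu}=V^{uu}_-$ and require $g\equiv 0$ on $\overline{V^{uu}_-}$, then claim $c:=\inf_{S^*M\setminus(\ml{O}^{uu}\cup\ml{O}^s)}g>0$. But $S^*M\setminus\ml{O}^{uu}=S^*M\setminus V^{uu}_-$ contains the boundary $\partial V^{uu}_-\subset\overline{V^{uu}_-}$, where $g=0$; hence $c=0$ and your $\eta_1=c/T$ vanishes. The fix is easy---take $\ml{O}^{uu}$ to be a slightly larger open set with $\overline{V^{uu}_-}\subset\ml{O}^{uu}\subset\overline{\ml{O}^{uu}}\subset V^{uu}$, and then check $\tilde{m}_1<\eps$ on $\ml{O}^{uu}$ via the uniform transition time $\tau$---but as written the argument fails.

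There is also a circularity in your compactness argument: the set $S^*M\setminus(\ml{O}^{uu}\cup\ml{O}^s)$ depends on $T$ through $\ml{O}^s=\tilde{\Phi}^{(1-\eps)T}(V^s)$, so you cannot simply ``fix $T$ large'' based on a uniform backward-entry time from that set. What actually works is to first bound the transition time $\tau$ on the $T$-independent compact set $S^*M\setminus(V^{uu}_-\cup V^s)$, then observe that points in $V^s\setminus\ml{O}^s$ exit $V^s$ backward within time $(1-\eps)T$ and thereafter cross to $V^{uu}_-$ within $\tau$; the condition $T\geq\tau/\eps$ suffices. The paper avoids both issues cleanly by taking the symmetric average $\frac{1}{2T}\int_{-T}^{T}$ and defining $\ml{O}^{uu}=\tilde{\Phi}^{-T}(S^*M\setminus V^s)$, $\ml{O}^s=\tilde{\Phi}^{T}(S^*M\setminus V^{uu})$: with this choice the complement is \emph{exactly} where $\tilde{\Phi}^{T}(x;\xi)\in V^s$ and $\tilde{\Phi}^{-T}(x;\xi)\in V^{uu}$, so $\tilde{X}_H\tilde{m}_1=\tfrac{1}{2T}$ there identically, and no infimum of $g$ is needed.
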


Recall that $\tilde{X}_H$ is the vector field induced by the Hamiltonian $H(x;\xi)=\xi(V(x))$ on $S^*M$. 
This Lemma was proved in~\cite[Lemma 8]{FaSj11} for Anosov flows, and we can follow the same strategy now that we have properly settled the dynamical properties of the symplectic lift of Morse-Smale flows. 
 For the sake of completeness, we briefly recall the proof as this is where we will crucially use Theorems~\ref{t:compactness} and~\ref{t:attractor}. 
\begin{proof} As $\Sigma_{uu}$ and $\Sigma_s$ are compact from Theorem~\ref{t:compactness} and as their intersection is empty thanks to~\eqref{e:transversality}, we may  
suppose
without loss of 
generality that $V^{uu}$ and $V^s$ have 
empty intersection. Thanks to Theorem~\ref{t:attractor}, we may also assume that
$$\forall t\geq 0,\ \tilde{\Phi}^t(V^s)\subset V^s\quad \text{and}\quad \forall t\leq 0,\ \tilde{\Phi}^t(V^{uu})\subset V^{uu}.$$
We are now in position to (briefly) repeat the argument from~\cite[Lemma 8]{FaSj11}. Before that, note that, while these properties were easy to check in the Anosov case, they constitute  
the core 
of our construction in the Morse-Smale case.

The argument is as follows. Fix $T>0$ and set $\ml{O}^s=\tilde{\Phi}^T(S^*M\setminus V^{uu})$ and 
$\ml{O}^{uu}=\tilde{\Phi}^{-T}(S^*M\setminus V^{s})$. Using Lemma~\ref{l:limit-attractor} and Theorem~\ref{t:attractor}, 
we know that for $T>0$ large enough
$\ml{O}^{uu}\subset V^{uu}$ and $\ml{O}^s\subset V^s$. 
We now fix a smooth function $m_0$ in $\ml{C}^{\infty}(S^*M,[0,1])$ such that $m_0=1$ on 
$V^s$ and $m_0=0$ on $V^{uu}$. We then set~: 
$$\tilde{m}_1(x;\xi):=\frac{1}{2T}\int_{-T}^Tm_0\circ\tilde{\Phi}^t(x;\xi)dt,$$
and we have~:
$$\tilde{X}_H\tilde{m}_1(x;\xi)=\frac{1}{2T}\left(m_0\circ\tilde{\Phi}^T(x;\xi)-m_0\circ\tilde{\Phi}^{-T}(x;\xi)\right)\geq 0.$$
In particular, for $(x;\xi)\notin \ml{O}^{uu}\cup\ml{O}^s$, one has $\tilde{X}_H\tilde{m}_1(x;\xi)=\frac{1}{2T}>0$. It now remains to prove the statements on the 
value of $\tilde{m}_1$ in $\ml{O}^{uu}\cup\ml{O}^s$. For that purpose, recall from our assumptions on $V^{uu}$ and $V^s$ that
$$\forall (x;\xi)\in S^*M,\ \ml{I}(x;\xi):=\left\{t\in\IR:\tilde{\Phi}^t(x;\xi)\notin V^{uu}\cup V^s\right\}$$
 is a closed connected interval whose length is bounded by some constant $\tau>0$ (that does not depend on $T$). Then, one can verify that
 $$(x;\xi)\in\ml{O}^s\Rightarrow \tilde{m}_1(x;\xi)\geq\frac{2T-\tau}{2T}\quad\text{and}\quad(x;\xi)\in\ml{O}^{uu}\Rightarrow \tilde{m}_1(x;\xi)<\frac{\tau}{2T},$$
 which concludes the proof of the preliminary Lemma. Note that, similarly, $\tilde{m}_1(x;\xi)\geq\frac{1}{2}-\frac{\tau}{2T}$ if $(x,\xi)\in V^s$ and 
 $\tilde{m}_1(x;\xi)\leq\frac{1}{2}+\frac{\tau}{2T}$ if $(x,\xi)\in V^{uu}$.
\end{proof}

\begin{rema}\label{r:Lyapunov} We remark that by inverting the sense of time, we can obtain a similar result for $\Sigma_{ss}$ and $\Sigma_u$. More precisely, we let  
$V^{ss}$ and $V^u$ be small open neighborhoods of $\Sigma_{ss}$ and $\Sigma_u$ respectively, and we let $\eps>0$. Then, 
there exist open neighborhoods $\ml{O}^{ss}\subset V^{ss}$ and $\ml{O}^u\subset V^u$, $\tilde{m}_2$ in $\ml{C}^{\infty}(S^*M,[0,1])$, $\eta_2>0$ such that $\tilde{X}_{H}\tilde{m}_2\geq 0$ on $S^*M$, 
$\tilde{X}_{H}\tilde{m}_2\geq\eta_2>0$ on $S^*M-(\ml{O}^{ss}\cup\ml{O}^u)$, $\tilde{m}_2(x;\xi)>1-\epsilon$ for $(x;\xi)\in \ml{O}^{ss}$ and $\tilde{m}_2(x;\xi)<\eps$ for $(x;\xi)\in \ml{O}^{u}$. 
\end{rema}

\subsection{Proof of Lemma~\ref{l:escape-function}}\label{ss:proof-escape}
We can now start the construction of the escape function. Again, the proof follows closely what is done in the Anosov case except that a couple of steps need to be revisited. By homogeneity, 
it is sufficient to perform the construction of $m$ inside $S^*M$ and then extend 
the definition by homogeneity for $\|\xi\|_x\geq 1$.

We keep the constants $u<0<n_0<s$ and  
the conic neighborhood $N_0$ of $\Gamma_0$ exactly 
as in the statement of Lemma~\ref{l:escape-function}. 
Let us also
consider the constant $\eps>0$ and the collection $\ml{O}^{uu}\subset V^u, \ml{O}^s\subset V^s, \ml{O}^{ss}\subset V^{ss}, \ml{O}^u\subset V^u$ of neighborhoods 
from the statements of Lemma~\ref{l:faure-sjostrand} and Remark~\ref{r:Lyapunov}. Following~\cite[paragraph 3.3.1]{FaSj11} and~\cite[appendix A]{DaRi16}, we set
\begin{equation}\label{e:order-unit}
 \tilde{m}(x;\xi):=-E(x)+s+(n_0-s)\tilde{m}_1+(u-n_0)\tilde{m}_2,
\end{equation}
where $E$ is the energy function of Theorem~\ref{t:meyer}. We already observe that
\begin{equation}\label{e:m-decrease}\forall(x;\xi)\in S^*M,\ \tilde{X}_H\tilde{m}(x;\xi)\leq 0
\end{equation}
since the function $-E$ decreases along the flow, $u-n_0<0, n_0-s<0$ and
$\tilde{X}_H\tilde{m}_i\geqslant 0$, for $i\in\{1,2\}$.

\subsubsection{First properties}

We start by collecting a few properties of the function
$\tilde{m}$ on $S^*M$.
First of all, one knows from Meyer's Theorem~\ref{t:meyer} that, if we fix $\delta>0$ and if $x$ does not belong to a $\delta$-neighborhood $\ml{V}_{\delta}$ of $NW(\varphi^t)$, then there exists some 
constant $c(\delta)>0$ such that
\begin{equation}\label{e:decay-energy-function}\forall (x;\xi)\notin S^*\ml{V}_{\delta},\ \tilde{X}_H\tilde{m}(x;\xi)\leq -c(\delta)<0.\end{equation}
This shows that away from the critical elements of the flow, $\tilde{m}$ is strictly decaying with a control in terms of the distance to the critical elements.

We now analyze more precisely the properties of $\tilde{m}$ near the critical elements. For that purpose, we set $\eta:=\min(\eta_1,\eta_2)$. Using the same convention as in~\cite{FaSj11} 
(in order to facilitate the comparison), we now define several open sets as follows~:
$$\tilde{N}_s:=\ml{O}^{uu}\cap\ml{O}^u,\ \tilde{N}_0:=\ml{O}^u\cap\ml{O}^{s},\ \tilde{N}_u:=\ml{O}^{ss}\cap\ml{O}^s.$$
We already observe that $\left(\tilde{N}_s,\tilde{N}_0,\tilde{N}_u\right)$ define disjoint open neighborhoods inside $S^*M$ of
$\left(\Sigma_{uu},\Gamma_0\cap S^*M,\Sigma_{ss}\right)$ respectively. 
Then, as in~\cite[p.~338]{FaSj11} we can verify the following properties:
\begin{itemize}
 \item On $S^*M-(\tilde{N}_s\cup\tilde{N}_0\cup\tilde{N}_u)=(S^*M-(\ml{O}^{uu}\cup \ml{O}^s))\cup(S^*M-(\ml{O}^{ss}\cup \ml{O}^u))$, one has
 \begin{equation}\label{e:decay-far}
  \tilde{X}_H\tilde{m}(x;\xi)\leq -\eta\min(|n_0-s|,|u-n_0|),
 \end{equation}
 with $\eta=\min\{\eta_1,\eta_2\}$ since, for 
$(x;\xi)$ in this region, either
$\tilde{X}_H\tilde{m}_1(x;\xi)\geqslant\eta_1\geqslant \eta$ or 
$\tilde{X}_H\tilde{m}_2(x;\xi)\geqslant\eta_2\geqslant \eta$.

 \item On $\tilde{N}_s=\ml{O}^{uu}\cap\ml{O}^u$, one has both $\tilde{m}_1\leqslant \epsilon, \tilde{m}_2\leqslant \epsilon$ therefore
$$\tilde{m}(x;\xi)=\underset{\geq -\Vert E\Vert_\infty}{\underbrace{-E(x)}}+s+
\underset{\geqslant \epsilon(u-s)}{\underbrace{(n_0-s)\tilde{m}_1+(u-n_0)\tilde{m}_2}}  $$ hence
by $\Vert E\Vert_\infty <\frac{s}{2}$, we find that
 \begin{equation}\label{e:bound-stable}
  \tilde{m}(x;\xi) \geq -\Vert E\Vert_\infty+u\eps+(1-\eps) s\geq \left(\frac{1}{2}-\eps\right)s+u\eps\geq\frac{s}{4},
 \end{equation}
 where the last inequality holds provided we take $\eps<\frac{s}{4(s-u)}$.
 \item Similarly, on $\tilde{N}_u=\ml{O}^{ss}\cap\ml{O}^s$, one has both $\tilde{m}_1\geqslant 1- \epsilon, \tilde{m}_2\geqslant 1- \epsilon$ therefore
$$\tilde{m}(x;\xi)=\underset{\leq -\text{min}E}{\underbrace{-E(x)}}+s+
\underset{\leqslant (1-\epsilon)(u-s)}{\underbrace{(n_0-s)\tilde{m}_1+(u-n_0)\tilde{m}_2}}  $$ hence
by $ \frac{u}{2}<-\Vert E\Vert_\infty$, we find that
  \begin{equation}\label{e:bound-unstable}
  \tilde{m}(x;\xi)\leq -\text{min}\ E+u(1-\eps)+s\eps \leq \left(\frac{1}{2}-\eps\right)u+s\eps\leq\frac{u}{4},
 \end{equation}
 where the last inequality holds provided we take $\eps<\frac{u}{4(u-s)}$.
  \item On $\tilde{N}_0=\ml{O}^u\cap\ml{O}^{s}$, 
  one has both $1-\varepsilon < \tilde{m}_1\leqslant 1 $ and $\tilde{m}_2\leqslant \varepsilon$ 
  hence
  $$ \tilde{m}(x;\xi)=\underset{\geq -\Vert E\Vert_\infty}{\underbrace{-E(x)}}+s+
\underset{\geqslant (n_0-s)+(u-n_0)\varepsilon}{\underbrace{(n_0-s)\tilde{m}_1+(u-n_0)\tilde{m}_2}}   $$
  \begin{equation}\label{e:bound-neutrla}
  \tilde{m}(x;\xi)\geq -\text{max}\ E+(1-\eps)n_0+(u-n_0)\eps\geq \left(\frac{1}{2}-2\eps\right)n_0+u\eps \geq\frac{n_0}{4},
 \end{equation}
 where the last inequality holds provided $\eps<\frac{n_0}{4(2n_0-u)}.$
 \item Finally, by similar arguments and using the final remark in the proof of Lemma~\ref{l:faure-sjostrand}, we can verify that 
 $\tilde{m}(x;\xi)\geq\frac{s}{10}-\frac{u}{2}$ outside a slightly bigger set than $\tilde{N}_u$.
\end{itemize}

We now extend $\tilde{m}$ into a smooth function $m$ defined on $T^*M$ by setting $m(x;\xi)=0$ for $\|\xi\|_x\leq 1/2$ and
$$\forall(x;\xi)\in T^*M\ \text{s.t. } \|\xi\|_x\geq 1,\ m(x;\xi)=\tilde{m}\left(x;\frac{\xi}{\|\xi\|_x}\right).$$
From~\eqref{e:bound-stable},~\eqref{e:bound-unstable} and~\eqref{e:bound-neutrla}, we remark that 
for $\|\xi\|_x\geq 1$, $m(x;\xi)$ depends only on $\xi/\|\xi\|_x$ and it takes values $\leq \frac{u}{4}$ (resp. $\geq\frac{n_0}{4}$ and $\geq\frac{s}{4}$) in a small neighborhood of 
$\cup_{j=1}^KN^*(W^u(\Lambda_j))$ (resp.~$\Gamma_0$ and~$\cup_{j=1}^KN^*(W^s(\Lambda_j))$).
Hence point (1) of Lemma~\ref{l:escape-function} is proved. 

\subsubsection{Decay of $G_m$ along the flow}

We start by defining $f(x;\xi)$ according to the neighborhoods that have already been introduced.
Recall that $\ml{V}_\delta$ was some $\delta$--neighborhood 
of the nonwandering set. 
We set $f$ to be  
a smooth function which is positive and positively homogeneous of degree $1$ for $\|\xi\|_x\geq 1$. Moreover, we suppose that for
 $\xi\in N_0\cap T^*\ml{V}_{\delta}$, one has 
$f(x;\xi)=|H(x;\xi)|$ and $f(x;\xi)=\|\xi\|_x$ when $\xi$ does not belong to a small open neighborhood of $N_0$. 
Without loss of generality, we can 
assume that $f(x;\xi)\geq C_0\|\xi\|$ for $\|\xi\|\geq 1$ and for some positive constant $C_0>0$ depending on $N_0$. 

\begin{rema}\label{r:neighbohoods} We observe that, up to choosing a smaller value of $\delta$ and smaller neighborhoods in the statements of Lemma~\ref{l:faure-sjostrand} and Remark~\ref{r:Lyapunov}, 
we can always suppose that $T^*\ml{V}_{\delta}\cap\tilde{N}_0\subset S^*M\cap N_0$. 
\end{rema}
Let us now discuss the decay 
properties of $$G_m:=\frac{1}{2}m(x;\xi)\log\left(1+f^2(x;\xi) \right)$$ 
and prove properties (2) and (3) of Lemma \ref{l:escape-function}. 
It suffices to show that, in some conic neighborhood of $N_0$, we have the bound $X_HG_m\leqslant 0$ and outside some conic neighborhood of
$N_0$, we have a uniform negative bound on $X_HG_m$. We decompose $X_HG_m$ as the sum of two terms~: 
\begin{equation}\label{e:decompfuite}
2X_HG_{m}(x;\xi)= \left(X_Hm\right)(\log(1+f(x;\xi)^2))+m(x;\xi)X_H(\log(1+f(x;\xi)^2)) .
\end{equation}
We will split the discussion between points $x\in M$ which are close to the nonwandering set and those which are far from it.

\begin{rema}\label{r:crude-bound}
Before discussing this, we observe that we have a crude upper bound on the term
$m(x;\xi) X_H\log(1+f(x;\xi)^2)$ which is uniform for $(x;\xi)\in T^*M$ satisfying
$\|\xi\|_x$ larger than some $R>0$. For that purpose, we first remark that
\begin{equation}\label{e:derivative-log}
 X_H\log(1+f(x;\xi)^2)= \frac{X_Hf(x;\xi)^2}{1+f(x;\xi)^2},
\end{equation}
defines a bounded function on $T^*M$ thanks to the homogeneity properties of $f$. Then, we just note that $\tilde{m}$ is uniformly 
bounded by $\ml{O}(|u|+s+\|E\|_{\infty})$.
\end{rema}

\textbf{ The element $x$ does not belong to the $\delta$ neighborhood $\ml{V}_\delta$ of the nonwandering set.} Fix an element $(x;\xi)\notin T^*\ml{V}_{\delta}$ 
satisfying $\|\xi\|_x\geq 1$. Recall that we have the estimates $$X_Hm(x;\xi)\leqslant -c(\delta)\quad \text{and}\quad
\log(1+f(x;\xi)^2)\geqslant \log(1+C_0 \vert\xi\vert^2).$$ Combined with~\eqref{e:decompfuite} and with Remark~\ref{r:crude-bound}, this yields the bound
\begin{equation}\label{e:decay-far-escape}
 2X_HG_{m}(x;\xi)\leq -c(\delta)\log(1+C_0^2\|\xi\|^2)+C_1(\|E\|_{\infty}+s+|u|),
\end{equation}
for some positive constant $C_1>0$ depending only on $N_0$. Hence, provided $R$ is large enough (in terms of $(C_i)_{i=0,1}$, $\delta$, $u$ and $s$) in the statement of Lemma~\ref{l:escape-function}, 
parts (2) and (3) of the Lemma are satisfied for these points of phase space. 

\textbf{The element $x\in \ml{V}_\delta$ and $(x;\xi)\in N_0$.} It now remains to analyze the situation near the nonwandering set (and thus fix a small enough value of $\delta$). 
Here, the situation follows closely what was done in~\cite[p.~339]{FaSj11}. We note that as $X_HH=\{H,H\}=0$, one can show that for $(x;\xi)\in N_0$ satisfying $\|\xi\|_x\geq 1$, one has
\begin{equation}\label{e:decay-neutral-escape}
 X_HG_m(x;\xi)=X_Hm(x;\xi)\log(1+H(x;\xi)^2)^{\frac{1}{2}}+m(x;\xi)X_H\log(1+H(x;\xi)^2)\leq 0.
\end{equation}
This proves part (2) of the Lemma in this region of phase space.

\textbf{The element $x\in \ml{V}_\delta$ and $(x;\xi)\notin N_0$.} Thanks to remark~\ref{r:neighbohoods}, we can
cover $S^*\ml{V}_\delta\setminus N_0$ by three regions where we will prove uniform decay of $G_m$ along the Hamiltonian flow~: 
\begin{eqnarray*}
S^*\ml{V}_{\delta}\setminus N_0\subset\left(\tilde{N}_u\cap S^*\ml{V}_{\delta} \right)\bigcup\left( \tilde{N}_s\cap S^*\ml{V}_{\delta}\right) \bigcup\left( S^*\ml{V}_{\delta}\setminus(\tilde{N}_0\cup\tilde{N}_u\cup\tilde{N}_s)\right).
\end{eqnarray*}
\begin{itemize}
\item We begin with the case where $(x;\tilde{\xi})\in S^*M\setminus(\tilde{N}_0\cup\tilde{N}_u\cup\tilde{N}_s)$.
In that case, $f(x;\xi)\geq C_0\|\xi\|_x$ by construction (at least for $\|\xi\|_x$ large enough). Hence, it follows from~\eqref{e:decay-far} that
\begin{eqnarray*}\label{e:decay-far-3}
 2X_HG_m(x;\xi) & = &\left( X_Hm\right)(x;\xi)\log(1+f(x;\xi)^2)+m(x;\xi)X_H\log(1+f(x;\xi)^2)\\
  &\leq & -\eta\min(|n_0-s|,|u-n_0|)\log(1+C_0^2\|\xi\|_x^2)+C_1(\|E\|_{\ml{C}^0}+s+|u|).
\end{eqnarray*}
Hence, as for the case $x\notin\ml{V}_{\delta}$, 
we can ensure that parts (2) and (3) of the Lemma are satisfied by picking $R>0$ large enough 
(in a way that depends on $\eta$, $u$, $s$ and $N_0$) so that the negative
term $-\eta\min(|n_0-s|,|u-n_0|)\log(1+C_0^2\|\xi\|_x^2)$ predominates
over the positive term $C_1(\|E\|_{\ml{C}^0}+s+|u|)$ and makes sure the r.h.s. is uniformly negative as long as $\|\xi\|_x \geqslant R$. 
\item Assume now that $(x;\xi/\|\xi\|_x)\in \tilde{N}_u\cap S^*\ml{V}_{\delta}$.
We  
show that we can pick $\delta>0$ small enough to ensure that the bound holds for $(x;\xi/\|\xi\|_x)\in \tilde{N}_u\cap S^*\ml{V}_{\delta}$.
In this case $f(x;\xi)=\|\xi\|_x$,
hence
\begin{eqnarray*}
2X_HG_m(x;\xi)& = &\underset{\leqslant 0}{\underbrace{\left(X_Hm\right)}} \underset{\geqslant 0}{\underbrace{\log\left(1+\|\xi\|_x^2\right)}}+ 
m(x;\xi)X_H\log\left(1+\|\xi\|_x^2\right) \\
&\leq & m(x;\xi)\frac{X_H(\|\xi\|_x^2)}{2(1+\|\xi\|_x^2)}.
\end{eqnarray*}
We will now make use of the hyperbolicity of the flow one more time
in order to control the term $X_H(\|\xi\|_x^2)$
\footnote{Observe that hyperbolicity was already used in Meyer's Theorem and in 
the proof of Lemma~\ref{l:faure-sjostrand} which relied on Theorems~\ref{t:compactness} and~\ref{t:attractor}.}.
We assume without loss of generality (periodic orbits are treated similarly)
that we are near a critical point $\Lambda_j$. For $(x;\xi)$ in $T_{\Lambda_j}^*M\cap N^*(W^u(\Lambda_j))$, one can deduce from the 
hyperbolicity bound~\eqref{e:lyapunov-bound-fixed-point} that
$$X_{H}\|\xi\|_x^2\geq c_0\|\xi\|_x^2,$$
for some positive constant $c_0$ depending only on $(M,g)$ and $\varphi^t$. This can be achieved by choosing the value of $\|.\|_x$ for 
$x\in\Lambda_j$ in a way to have the constant equal to $1$ in~\eqref{e:lyapunov-bound-fixed-point} -- see~\cite[Prop.~5.2.2]{BrSt02}.
By compactness of $\Sigma_{ss}$, we can ensure that this inequality remains true in 
$\tilde{N}_u\cap S^*\ml{V}_{\delta}$, where $\tilde{N}_u$ is a conical neighborhood of $N^*(W^u(\Lambda_j))$, 
provided that we replace $c_0$ by $c_0/2$ and that $\delta>0$ and all neighborhoods $(V^{ss},V^{uu},V^s,V^u)$
from Lemma~\ref{l:faure-sjostrand} are chosen small enough. Thus, thanks to inequality~\eqref{e:bound-unstable} that yields
$\tilde{m}(x;\xi)\leqslant \frac{u}{4}<0$, one finds that
$$\left(x,\frac{\xi}{\|\xi\|_x}\right)\in\tilde{N}_u\cap S^*\ml{V}_{\delta}\Longrightarrow X_HG_m(x;\xi)\leq\frac{c_0 u}{8},$$
where $c_0>0$ is a geometric constant. 

\item The case where $(x;\tilde{\xi})\in \tilde{N}_s\cap S^*\ml{V}_{\delta}$ is treated  similarly thanks to~\eqref{e:bound-stable}. 
Then we can derive similar bounds of the form
$$\left(x,\frac{\xi}{\|\xi\|_x}\right)\in\tilde{N}_s\cap S^*\ml{V}_{\delta}\Longrightarrow X_HG_m(x;\xi)\leq -\frac{c_0 s}{8},$$
which concludes the construction of the escape function from Lemma~\ref{l:escape-function}.
\end{itemize}

\section{Anisotropic Sobolev spaces}\label{s:sobolev}

In this section, we aim at proving 
Theorems~\ref{t:maintheo1} and~\ref{t:maintheo2} or more precisely 
their generalization to vector bundles 
$(\ml{E},\nabla)$ equipped with a connection $\nabla$. 
More precisely, we will consider the Lie derivative $-\ml{L}_{V,\nabla}$ acting
on currents on $M$ valued in the bundle $\ml{E}$ and the connexion $\nabla$ allows to lift
the action of the flow from the base $M$ to the total space of $\ml{E}$. 
The reason why we need to introduce such algebra
is that we plan to discuss in~\cite{DaRi17c} the relationship between the Pollicott-Ruelle spectrum and
the Reidemeister torsion. To define 
the Reidemeister torsion, we need to fix a
unitary representation of the fundamental
group $\pi_1(M)$ and then 
consider geometric complexes on $M$ twisted 
by the representation. Geometrically, the twisting operation corresponds  
to considering
some flat bundle $(\ml{E},\nabla)$ over $M$, such that
the monodromy of the flat connection 
$\nabla$ gives the corresponding representation
of $\pi_1(M)$. In that framework, the De Rham complex of differential forms $(\Omega^\bullet(M),d)$ is replaced by
the twisted De Rham complex of differential forms valued in $\ml{E}$, $\Omega^\bullet(M,\ml{E})$ with
twisted differential $d^\nabla$.
 
The purpose of this final section is to recall how one can construct anisotropic Sobolev spaces 
adapted to the operator $-\ml{L}_{V,\nabla}^{(k)}$ for every $0\leq k\leq n$ starting from the escape function of Lemma~\ref{l:escape-function}. 
More precisely, for every $\sigma>0$, we let $m_\sigma\in C^\infty(T^*M)$ be the corresponding order function
from Lemma \ref{l:escape-function} with $|u|,s\gg\sigma$, 
and for every $0\leq k\leq n$, we will 
construct Hilbert spaces of currents $\ml{H}_k^{m_{\sigma}}(M,\ml{E})$ such that
$$-\ml{L}_{V,\nabla}^{(k)}:\ml{H}_k^{m_{\sigma}}(M,\ml{E})\longrightarrow\ml{H}_k^{m_{\sigma}}(M,\ml{E})$$
has discrete spectrum on $\text{Re}(z)>-\sigma$ -- see Proposition~\ref{p:eigenvalues}. Once this spectral framework is properly 
settled, in paragraph~\ref{ss:proof-theo} we will see how 
one can easily deduce our main Theorems on the existence of a discrete set of Pollicott-Ruelle resonances. Hence, everything boils down to constructing appropriate Hilbert spaces, and 
now that we are given a nice escape function 
$m$ adapted to the Morse-Smale dynamics, we just need to follow step by step the (microlocal) strategy initiated by Faure and Sj\"ostrand in~\cite{FaSj11}. 
  Recall that this consists in showing that $-\ml{L}_{V,\nabla}^{(k)}$ has nice spectral properties on that space via microlocal arguments and analytic Fredholm theory. This 
kind of strategy is natural in the context of the study of semiclassical resonances of Schr\"odinger operators~\cite{HeSj86} and we refer 
to~\cite{FaSj11} for a more detailed discussion on that aspect.

The proof in~\cite{FaSj11} only deals with the scalar case for $k=0$ and $\ml{E}=M\times\IC$. 
Yet, the argument can be adapted to any $0\leq k\leq n$ 
as it only relies on the
construction of a proper escape function 
for the flow as in the scalar case. 
We emphasize that the case of vector bundles 
for Anosov flows was considered via microlocal methods 
by Dyatlov and Zworski in order to prove the 
meromorphic continuation of the zeta function~\cite{DyZw13}. 
We chose to use the microlocal techniques from 
these references but it is most likely that 
the methods developed in~\cite{BuLi07, GiLiPo13} 
could also be developed to define a proper spectral framework.

\begin{rema}
Keeping in mind the applications to differential topology~\cite{DaRi17c}, it may be convenient to choose a different $m_{\sigma}$ for every 
$0\leq k\leq n$. Typically, we may pick in degree $k$ an order function of the form $m(x;\xi)+n-k$ where $m(x;\xi)$ is given by 
Lemma~\ref{l:escape-function}. The following analysis would not be affected by this choice of (shifted) order function for large enough choices of $|u|$ and 
$s$ in this Lemma. 
\end{rema}

\subsection{Vector bundles and connection}\label{ss:connection} Before defining the Sobolev spaces, we collect a few definitions and properties and 
we refer to~\cite[Ch.6-8-12]{Lee09} for more details. 
For every integer $k\in \{0,\dots,n\}$, recall the bundle $\Lambda^k\left(T^*M\right)$ of $k$-forms on $M$ 
has all its elements
of the form
$\sum_{0\leqslant i_1<\dots<i_k \leqslant n} \xi_{i_1\dots i_k}dx^{i_1}\wedge \dots \wedge dx^{i_k}$ 
in local coordinates where
$(dx^1,\dots,dx^n)$ forms a basis of $T^*M$ in local coordinates. Smooth sections of 
$\Lambda^k\left(T^*M\right)$ are denoted by $\Omega^k(M)$.
If we are 
given a metric $g$ on $M$, this induces some 
inner product $\la .,. \ra_{g^*}^{(k)}$ on $\Lambda^k(T^*M)$. 
Then,
the Hodge star operator is the unique isomorphism 
$\star_k :\Lambda^k(T^*M)\rightarrow \Lambda^{n-k}(T^*M)$ such that 
for every $\psi_1\in\Omega^k(M)=\Omega^k(M,\IC)$ and $\psi_2\in \Omega^{n-k}(M)=\Omega^{n-k}(M,\IC)$, 
$$\int_M\psi_1\wedge\psi_2=\int_M\la \psi_1,\star_k^{-1}\psi_2\ra_{g^*}^{(k)}\omega_g,$$
where $\omega_g$ is the volume form induced by the Riemannian metric
$g$ on $M$.
Given a smooth Hermitian vector bundle $\pi:\ml{E}\rightarrow M$~\cite[Def.~6.21]{Lee09} 
of rank $N$ with inner product $\la.,.\ra_{\ml{E}}$ on the fibers, the
space $\Omega^k(M,\ml{E}), k=1,\dots,n$ of $k$-forms on $M$ valued in $\ml{E}$
is defined as the tensor product $\Omega^k(M)\otimes_{C^\infty(M)}\Omega^0(M,\ml{E})$ where
$\Omega^0(M,\ml{E})$ denotes the smooth sections of $\ml{E}$.
Using the hermitian structure on $\ml{E}$, 
we can identify $\ml{E}$ with its dual bundle and 
we can also introduce the following fiberwise pairing, 
for every $0\leq k,l\leq n$,
$$\la.\wedge.\ra_{\ml{E}}:\Omega^k(M,\ml{E})\times\Omega^l(M,\ml{E})\rightarrow \Omega^{k+l}(M).$$
This induces a map $\star_k:\Omega^k(M,\ml{E})\rightarrow \Omega^{n-k}(M,\ml{E})$ which acts trivially on the 
$\ml{E}$-coefficients. 
Then, we can define the 
positive definite Hodge inner product on $\Omega^{k}(M,\ml{E})$ as
$$(\psi_1,\psi_2)\in\Omega^k(M,\ml{E})\times\Omega^k(M,\ml{E})\mapsto\int_M\la \psi_1\wedge\star_k(\psi_2)\ra_{\ml{E}}.$$
In particular, we can define $L^2(M,\Lambda^k(T^*M)\otimes\ml{E})$ as the completion of $\Omega^k(M,\ml{E})$ for this scalar product.

We would now like to define an analogue of the Lie 
derivative along a vector field $V$ in the context of differential forms with values in a vector bundle $\ml{E}$. For that purpose, we observe that the contraction operator $\iota_{V}$ 
is well-defined and we have to introduce a proper substitute 
for the De Rham differential $d$. 
For this, we fix 
a smooth (Koszul) connection ~\cite[Def.~12.1 and p.~505]{Lee09}, i.e. a linear map 
$$\nabla:\Omega^0(M,\ml{E})
\rightarrow \Omega^1(M,\ml{E}),$$
satisfying, for every $\psi\in\ml{C}^{\infty}(M)$ and for every $u$ in $\Omega^0(M,\ml{E})$,
$$\nabla(\psi u)=\psi\nabla u+\left(d\psi\right)u ,$$
where $d$ is the usual De Rham differential on $M$. According to~\cite[Th.~12.57]{Lee09}, this map uniquely extends to a map $d^{\nabla}$ such that, for every $0\leq k,l\leq n$, 
\begin{itemize}
 \item $d^{\nabla}:\Omega^k(M,\ml{E})\rightarrow\Omega^{k+1}(M,\ml{E})$,
 \item for every $(\psi,u)$ in $\Omega^k(M)\times\Omega^l(M,\ml{E})$,
 $$d^{\nabla}(\psi\wedge u)=d\psi\wedge u+(-1)^k\psi\wedge d^{\nabla}u\ \text{and}\ d^{\nabla}(u\wedge\psi)=d^{\nabla}u\wedge\psi+(-1)^lu\wedge d\psi,$$
 \item for every $u$ in $\Omega^0(M,\ml{E})$, $d^{\nabla}u=\nabla u$.
\end{itemize}
The operator $d^{\nabla}$ is the coboundary operator associated with $(\ml{E},\nabla)$, and we define the corresponding \emph{Lie derivative} along the vector field $V$ as
\begin{equation}\label{e:lie-derivative}
\boxed{ \ml{L}_{V,\nabla}=d^{\nabla}\circ \iota_V+\iota_V\circ d^{\nabla}.}
\end{equation}

\begin{rema}\label{r:curvature} We define the \emph{curvature} of the vector bundle $(\ml{E},\nabla)$ as the unique map
$F_{\nabla}$ in $\Omega^2(M,\text{End}(\ml{E}))$ such that, for every $0\leq k\leq n$ and for every $u$ in $\Omega^k(M,\ml{E})$,
$$d^{\nabla}\circ d^{\nabla} u=F_{\nabla}\wedge u.$$
 We say $(\ml{E},\nabla)$ is a \emph{flat vector bundle} if $F_{\nabla}=0$, equivalently, if $d^{\nabla}\circ d^{\nabla}=0$ which means that
 $(\Omega^\bullet(M,\mathcal{E}),d^\nabla)$ is a cochain complex.
 \end{rema}
%

In the following, $\ml{E}\rightarrow M$ is always a smooth complex vector bundle endowed with a Hermitian structure $\la,\ra_{\ml{E}}$ and a connection $\nabla$ (not necessarily flat).

\subsection{Definition of the anisotropic Sobolev spaces}\label{ss:anisotropic}
Now, we set
\begin{equation}\label{A_mweightsobo}
A_m(x;\xi):=\exp\left(G_m(x;\xi)\right)\in C^\infty(T^*M),
\end{equation}
where $G_m(x;\xi)$ is given by Lemma~\ref{l:escape-function}. Let now $0\leq k\leq n$ and $\ml{E}\rightarrow M$ be a smooth vector bundle equipped with a Hermitian structure. 
We consider the vector bundle $\Lambda^k(T^*M)\otimes\ml{E}$.
Consider some finite cover $(U_i)_{i\in I}$ of $M$ by contractible open 
subsets. 
For every $U_i$, choose some local trivialization
$\kappa: 
\mathcal{E}\otimes 
\Lambda^k\left(T^*M\right)|_{U_i}\mapsto 
\tilde{U}\times \mathbb{R}^N\times \mathbb{R}^{\frac{n!}{k!(n-k)!}}$  
of $\mathcal{E}\otimes \Lambda^k\left(T^*M\right)$ 
where $\tilde{U}$ is some open subset of $\mathbb{R}^n$. 
Set $(\mathbf{e}_j)_{j=1}^N$ to be
the canonical basis of $\mathbb{R}^N$. 
Then 
we define
some operator $\mathbf{A}_i: \Omega_c^k(U_i,\ml{E}) \mapsto \Omega^k(M,\ml{E})$ 
explicitely in local coordinates
as follows~: 
$$ \kappa \circ \mathbf{A}_i\circ \kappa^{-1}(\mathbf{e}_j fdx^{i_1}\wedge\dots\wedge dx^{i_k})|_{\tilde{U}}=\mathbf{Op}^W(A_m)(f) \mathbf{e}_jdx^{i_1}\wedge\dots\wedge dx^{i_k}$$
for every $f\in C^\infty_c(U_i)$, $j\in \{1,\dots,r\}$, $1\leqslant i_1<\dots<i_k\leqslant n$ and where
$\mathbf{Op}^W(A_m)$ is the usual Weyl quantization defined on $\mathbb{R}^n$~\cite[equation 14.5 p.~68]{Taylor2}.
Choose some partition of unity $(\chi_i)_{i\in I}, \sum_{i\in I}\chi_i=1$ subordinated
to the cover $(U_i)_{i\in I}$ and for every $i\in I$, choose $\Psi_i\in C^\infty_c(U_i)$ such that
$\Psi_i=1$ on $\text{supp}(\chi_i)$.
Then define
a global linear operator
$\Omega^k(M,\mathcal{E}) \mapsto \Omega^k(M,\mathcal{E})$
as~:
\begin{equation}
\mathbf{A}_m^{(k)}=\frac{1}{2}\sum_{i\in I} \Psi_i \mathbf{A}_i \chi_i+\frac{1}{2} 
\left(\sum_{i\in I} \Psi_i \mathbf{A}_i \chi_i\right)^*
\end{equation}
where $\left(\sum_{i\in I} \Psi_i \mathbf{A}_i \chi_i\right)^*$ is the
formal adjoint of $\sum_{i\in I} \Psi_i \mathbf{A}_i \chi_i$ for 
the Hilbert space structure on $L^2(M,\Lambda^k(T^*M)\otimes \ml{E})$ defined above.
The operator $\mathbf{A}_m^{(k)}$ is elliptic in the sense of~\cite[definition 8]{FaRoSj08}
since its principal symbol reads $A_m(x;\xi)\otimes \mathbf{Id}$ where $\mathbf{Id}$ 
means the identity map $\Lambda^k(T^*M_x)\otimes \ml{E}_x\mapsto \Lambda^k(T^*M_x)\otimes \ml{E}_x$ and $\mathbf{A}_m^{(k)}$ is formally self--adjoint. Hence
by~\cite[Lemma 11]{FaRoSj08}, 
it has 
a self--adjoint extension on the Hilbert space $L^2(M,\Lambda^k(T^*M)\otimes \ml{E})$.
 Without loss of generality,
we may assume 
$\mathbf{A}_m^{(k)}$
to be invertible in
$\Omega^k(M,\ml{E})$ since 
by~\cite[Lemma 12]{FaRoSj08}, there is
a smoothing, self--adjoint operator $\widehat{r}$ 
such that
$\mathbf{A}_m^{(k)}+\widehat{r}$ is self--adjoint, elliptic and invertible in $\Omega^k(M,\mathcal{E})$. 
We define an 
anisotropic Sobolev space of currents by setting
$$\boxed{\ml{H}^m_k(M,\ml{E})=(\mathbf{A}_m^{(k)})^{-1}L^2(M,\Lambda^k(T^*M)\otimes\ml{E}).}$$


Mimicking the proofs of~\cite{FaRoSj08}, 
we can deduce some properties of these spaces of currents. 
First of all, they are endowed with a Hilbert 
structure inherited from the $L^2$-structure on $M$. 
The topological dual
$\ml{H}^{m}_k(M,\ml{E})^{\prime}$ of $\ml{H}^m_k(M,\ml{E})$ can be identified
with the space $$\ml{H}^{m}_k(M,\ml{E})^{\prime}\simeq \mathbf{A}_m^{(k)}L^2(M,\Lambda^k(T^*M)\otimes\ml{E})$$
by the pairing
$$ \psi_1\in  \mathbf{A}_m^{(k)}L^2(M,\Lambda^k(T^*M)\otimes\ml{E}),\psi_2 \in \ml{H}^m_k(M,\ml{E})\mapsto \int_M \left\langle \psi_1\wedge\star_k\psi_2 \right\rangle_\mathcal{E} $$
and $\ml{H}^{m}_k(M,\ml{E})$ is in fact reflexive. We also note that the space $\ml{H}_k^{m}(M,\ml{E})$ can be identified with $\ml{H}^m_0(M,\IR)\otimes_{\ml{C}^{\infty}(M)}\Omega^k(M,\ml{E})$, and one has
$$\Omega^k(M,\ml{E}) \subset \ml{H}_k^{m}(M,\ml{E}) \subset \mathcal{D}^{\prime,k}(M,\ml{E}),$$ 
where the injections are continuous. Using the Hodge star, we finally find that
$$\ml{H}^{m}_k(M,\ml{E})^{\prime}\simeq \star_k \mathbf{A}_m^{(k)}L^2(M,\Lambda^k(T^*M)\otimes\ml{E})=\mathbf{A}_m^{(n-k)}L^2(M,\Lambda^{n-k}(T^*M)\otimes\ml{E})$$
which is also equipped with a natural Hilbert space structure.
The Hilbert space $$\mathbf{A}_m^{(n-k)}L^2(M,\Lambda^{n-k}(T^*M)\otimes\ml{E})$$ will also be denoted by
$\mathcal{H}^{-m}_{n-k}(M,\mathcal{E})$ by some little abuse of notation.


\subsection{Pollicott--Ruelle resonances and their resonant states}

Now that we have defined our Sobolev in a similar fashion as for the Anosov setting of~\cite{FaSj11}, 
we can follow almost verbatim the argument of Faure and Sj\"ostrand in~\cite{FaSj11} 
in order to show the existence of a discrete dynamical spectrum on these spaces. Indeed, this part of their arguments only made use of the dynamical properties of $m$ combined with microlocal tools and analytic Fredholm theory. 
More precisely, the main result on the spectral properties of $-\ml{L}_{V,\nabla}^{(k)}$ acting on these anisotropic spaces is the following Proposition:
\\
\\
\fbox{
\begin{minipage}{0,94\textwidth} 
\begin{prop}[Discrete spectrum]\label{p:eigenvalues} The operator
$-\ml{L}_{V,\nabla}^{(k)}$ defines a maximal closed unbounded operator on $\ml{H}^m_k(M,\ml{E})$, 
$$-\ml{L}_{V,\nabla}^{(k)}:\ml{H}^m_k(M,\ml{E})\rightarrow\ml{H}^m_k(M,\ml{E}),$$
with domain given by $\ml{D}(-\ml{L}_{V,\nabla}^{(k)}):=\{ \psi\in\ml{H}^m_k(M,\ml{E}):-\ml{L}_{V,\nabla}^{(k)} \psi\in\ml{H}^m_k(M,\ml{E})\}.$ It coincides with the 
closure of $-\ml{L}_{V,\nabla}^{(k)}:\Omega^k(M,\ml{E})\rightarrow\Omega^k(M,\ml{E})$ in the graph norm for operators. Moreover, there exists a constant 
$C_0$ in $\IR$ (that depends on the choice of the order function 
$m(x;\xi)$) such that $-\ml{L}_{V,\nabla}^{(k)}$ has empty spectrum for $\operatorname{Re}(z)>C_0$. Finally,
the operator
 $$-\ml{L}_{V,\nabla}^{(k)}:\ml{H}_k^m(M,\ml{E})\rightarrow \ml{H}_k^m(M,\ml{E}),$$
 has a discrete spectrum with finite multiplicity in the domain
 $$\operatorname{Re} (z)>-C_m+C_{\ml{E}},$$
 where $C_{\ml{E}}>0$ depends only 
 on the choice of the metric $g$, $\la,\ra_{\ml{E}}$ and $C_m>0$ is the constant from Lemma~\ref{l:escape-function}.
\end{prop}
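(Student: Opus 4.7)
The plan is to follow the microlocal strategy of~\cite{FaSj11}, now that Lemma~\ref{l:escape-function} provides the appropriate escape function in the Morse-Smale setting. Since $\ml{H}_k^{m}(M,\ml{E})=(\mathbf{A}_m^{(k)})^{-1}L^2(M,\Lambda^k(T^*M)\otimes\ml{E})$ by construction, the spectrum of $-\ml{L}_{V,\nabla}^{(k)}$ on $\ml{H}_k^m$ coincides with that of the conjugated operator
$$\tP \;:=\; \mathbf{A}_m^{(k)} \circ \bigl(-\ml{L}_{V,\nabla}^{(k)}\bigr) \circ (\mathbf{A}_m^{(k)})^{-1}$$
on the standard $L^2$ space, and I would work with $\tP$ throughout. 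The preliminary statements (maximal closed extension, identification of the domain, density of $\Omega^k(M,\ml{E})$ in the graph norm) are standard for first-order differential operators and follow from a Friedrichs mollification carried out inside the variable-order calculus as in~\cite{FaSj11}. The emptiness of the spectrum for $\operatorname{Re}(z)>C_0$ is a Hille--Yosida type bound on the semigroup generated by $-\ml{L}_{V,\nabla}^{(k)}$, which I would obtain by combining the boundedness of $\varphi^{-t*}$ on smooth sections with a Gr\"onwall argument on the conjugated semigroup, using the bound $X_H G_m\leq 0$ for $\|\xi\|_x\geq R$ from point (2) of Lemma~\ref{l:escape-function}.

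The heart of the argument is the discreteness statement. A direct computation in the variable-order calculus shows that the full symbol of $\tP$ has the form
$$\sigma(\tP)(x;\xi) \;=\; -iH(x;\xi)\,\mathbf{Id} \;+\; X_H G_m(x;\xi)\,\mathbf{Id} \;+\; q(x;\xi) \;+\; r(x;\xi),$$
where $H(x;\xi)=\xi(V(x))$ is the Hamiltonian~\eqref{e:hamiltonian-lie}, $q$ is a zeroth order endomorphism-valued symbol arising from the action of $\ml{L}_{V,\nabla}^{(k)}$ on the fibre $\Lambda^k(T^*M)\otimes\ml{E}$ (built from $\nabla V$, the connection $\nabla$, and the Hodge data of $g$ and $\la\cdot,\cdot\ra_{\ml{E}}$), and $r$ has negative order. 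Taking the Hermitian real part with respect to the fibre metric kills the imaginary term $-iH\,\mathbf{Id}$ and gives the pointwise upper bound
$$\operatorname{Re}\,\sigma(\tP)(x;\xi) \;\le\; X_H G_m(x;\xi)\,\mathbf{Id} \;+\; C_{\ml{E}}\,\mathbf{Id} \;+\; \ml{O}(\langle\xi\rangle^{-1}),$$
with $C_{\ml{E}}:=\sup_{(x;\xi)}\|\operatorname{Re} q(x;\xi)\|_{\operatorname{End}}$ depending only on $g$, $\la\cdot,\cdot\ra_{\ml{E}}$, $\nabla$ and $V$. Combined with points (2)--(3) of Lemma~\ref{l:escape-function}, this tells me that the real part of the symbol of $\tP-z$ is uniformly positive outside a compact subset of $T^*M$ together with an arbitrarily thin conic neighborhood $N_0$ of $\Gamma_0$, provided $\operatorname{Re}(z) > -C_m+C_{\ml{E}}$.

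From here, a sharp G\r{a}rding inequality in the variable-order class produces an a priori estimate $\|u\|_{L^2}\le C\,\|(\tP-z)u\|_{L^2}+C\,\|Ku\|_{L^2}$ with $K$ compact, so that $\tP-z$ is Fredholm of index zero on $L^2$ (the index is identified by deforming $z$ to the half-plane $\operatorname{Re}(z)>C_0$, where invertibility has already been established). Analytic Fredholm theory then delivers a meromorphic resolvent on $\{\operatorname{Re}(z)>-C_m+C_{\ml{E}}\}$ with poles of finite rank, which is exactly the discrete spectrum statement; conjugating back by $\mathbf{A}_m^{(k)}$ transports it to $-\ml{L}_{V,\nabla}^{(k)}$ on $\ml{H}_k^m(M,\ml{E})$. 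The main technical obstacle compared with the scalar Anosov case of~\cite{FaSj11} is the control of the bundle and form-valued endomorphism symbol $q$: one has to verify that the coupling to $\nabla$ and to the action on $\Lambda^k(T^*M)$ does not introduce any hidden transport in the cotangent variable, so that $q$ remains genuinely of order zero and its contribution enters only as the harmless shift $C_{\ml{E}}$, rather than interacting with the escape function at the principal symbol level.
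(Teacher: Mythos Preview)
Your overall strategy is the same as the paper's: conjugate by $\mathbf{A}_m^{(k)}$, expand the symbol of $\tP$ as $-iH\,\mathbf{Id}+X_HG_m\,\mathbf{Id}+q+r$, absorb the zeroth-order endomorphism $q$ into a shift $C_{\ml{E}}$, and conclude by analytic Fredholm theory. The paper makes exactly the same observation that the principal symbol of $\ml{L}_{V,\nabla}^{(k)}$ is the scalar $\xi(V(x))\,\mathbf{Id}$, so the bundle coupling enters only at order zero; in particular, what you flag as ``the main technical obstacle'' is in fact the easy part of the argument.

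There is, however, a genuine gap in your passage from the symbol bound to the Fredholm estimate. You assert that a sharp G\r{a}rding inequality produces $\|u\|_{L^2}\le C\|(\tP-z)u\|_{L^2}+C\|Ku\|_{L^2}$ with $K$ compact, but this does not follow from what you have set up: the region where $X_HG_m$ fails to be $\leq -C_m$ is not a compact subset of $T^*M$ but the \emph{conic} neighborhood $N_0$ of $\Gamma_0$, which is unbounded in the fibre. A G\r{a}rding bound on $\operatorname{Re}\sigma(\tP)$ alone only gets you down to $\operatorname{Re}(z)>C_{\ml{E}}$, not to $\operatorname{Re}(z)>-C_m+C_{\ml{E}}$; to reach the latter threshold you must explain how the contribution microlocalized in $N_0$ becomes compact. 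The paper (following~\cite{FaSj11}) handles this by a mechanism you do not mention: near $\Gamma_0$ the principal symbol $H(x;\xi)=\xi(V(x))$ is itself \emph{elliptic} of order one, since $V$ does not vanish along closed orbits and $\Gamma_0$ sits in the flow codirection. One subtracts an order-zero operator $\hat{\chi}_0$ microsupported in $N_0$ (together with a compactly supported $\hat{\chi}_1$ for $\|\xi\|_x\le R$), inverts the modified operator $P(z)=\tP-\hat{\chi}_0-\hat{\chi}_1-z$, and writes $\tP-z=(\operatorname{Id}+(\hat{\chi}_0+\hat{\chi}_1)r(z))P(z)$ with $r(z)=P(z)^{-1}$. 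Ellipticity of $H$ in $N_0$ forces $r(z)$ to be of order $-1$ there, so $\hat{\chi}_0 r(z)$ is compact on $L^2$, and the factorization exhibits $\tP-z$ as a compact perturbation of an invertible operator. Without this ellipticity argument for the neutral direction your Fredholm claim is unjustified.
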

\end{minipage}
}
\\
\\
The second part on the discrete spectrum is obtained by showing that the operator $(-\ml{L}_{V,\nabla}^{(k)}-z)$ is a 
Fredholm operator of index $0$ depending analytically on $z$ in the corresponding half plane~\cite{HeSj86, Zw12}. 
In the case of Anosov flows, the proof of this result was given by Faure-Sj\"ostrand in~\cite[Sect.~3]{FaSj11} 
for $k=0$ while the extension to the case of currents was done by Dyatlov-Zworski in~\cite[Sect.~3]{DyZw13}. Note that the proofs in both 
references are of slightly different nature but they both crucially rely on the properties of the escape function used to define the 
anisotropic space $\ml{H}_k^m(M,\ml{E})$. The proof of this Proposition was given in great details in~\cite[Th.~1.4]{FaSj11} in the 
case $k=0$ and $\ml{E}=M\times\IC$. As was already mentioned, the extension to the case where $0\leq k\leq n$ and where $\ml{E}$ is 
an arbitrary vector bundle can be adapted almost verbatim except that we have to deal with pseudodifferential operators 
with symbols in $\text{Hom}(\Lambda^k(T^*M)\otimes\ml{E})$. As was already observed in~\cite{DyZw13, DaRi16}, the main point to adapt to the vector bundle 
framework is that the (pseudodifferential) operators under consideration have a \emph{scalar symbol}. In fact, given any local basis $(e_j)_{j=1,\ldots J_k}$ 
of $\Lambda^k(T^*M)\otimes\ml{E}$ and any family $(u_j)_{j=1,\ldots J_k}$ of smooth functions $\ml{C}^{\infty}(M)$, one has
$$\ml{L}_{V,\nabla}^{(k)}\left(\sum_{j=1}^{J_k}u_je_j\right)=\sum_{j=1}^{J_k}\ml{L}_{V}(u_j)e_j+\sum_{j=1}^{J_k}\ml{L}_{V,\nabla}^{(k)}(e_j)u_j,$$
where each term in the second part of the sum on the right-hand side is of order $0$ as a differential operator acting on the $u_j$. In other words, the principal symbol of 
$\ml{L}_{V,\nabla}^{(k)}$ is $\xi(V(x))\mathbf{Id}_{\Lambda^k(T^*M)_x\otimes\ml{E}_x}$. 
This scalar form allows to adapt the 
proofs of~\cite{FaSj11} to this vector bundle framework -- see~\cite{FaSj11} for a detailed proof.

For the sake of completeness, we refer to paragraph~\ref{ss:FS-proof} where we give a brief picture of the strategy developed by 
Faure and Sj\"ostrand to prove Proposition~\ref{p:eigenvalues} starting from Lemma~\ref{l:escape-function}. Note that, in this reference, the authors 
use the convention $-i\ml{L}_V$ instead of $-\ml{L}_V$.

\begin{rema}\label{r:resolvent-estimate}
 We also note that they implicitely show~\cite[Lemma~3.3]{FaSj11} that, 
for every $z$ in $\mathbb{C}$ satisfying $\text{Im} z>C_0$, one has
\begin{equation}\label{e:norm-resolvent}
 \left\|\left(\ml{L}_{V,\nabla}^{(k)}+z\right)^{-1}\right\|_{\ml{H}^m_k(M,\ml{E})\rightarrow\ml{H}_k^m(M,\ml{E})}\leq\frac{1}{\text{Re}(z)-C_0}.
\end{equation}
In particular, combining Proposition~\ref{p:eigenvalues} to the Hille-Yosida 
Theorem~\cite[Cor.~3.6, p.~76]{EnNa00}, one knows that
\begin{equation}\label{e:semigroup}(\varphi^{-t})^*:\ml{H}^m_k(M,\ml{E})\rightarrow \ml{H}_k^m(M,\ml{E}),\end{equation}
generates a strongly continuous semigroup which is defined for every $t\geq 0$ and whose norm is bounded by $e^{tC_0}.$
\end{rema}

We now list some properties of this spectrum:
\begin{itemize}
 \item As in~\cite[Th.~1.5]{FaSj11}, we can show that the eigenvalues (counted with their algebraic multiplicity) 
and the eigenspaces 
of $-\ml{L}_{V,\nabla}^{(k)}:\ml{H}_k^m(M,\ml{E})\rightarrow \ml{H}_k^m(M,\ml{E})$ are in fact independent of the choice of escape function. 
For every $0\leq k\leq n$, we call the eigenvalues the \textbf{Pollicott-Ruelle resonances of index $k$} and we denote by $\ml{R}_k(V,\nabla)$ this set.
 \item  By duality, the same spectral properties holds for the dual operator
\begin{equation}\label{e:dual}(-\ml{L}_{V,\nabla}^{(k)})^{*}=-\ml{L}_{-V,\nabla}^{(n-k)}:\ml{H}^{-m}_{n-k}(M,\ml{E})\rightarrow\ml{H}^{-m}_{n-k}(M,\ml{E}).\end{equation}
 \item Given any $z_0$ in $\ml{R}_k(V,\nabla)$, the corresponding spectral projector $\pi_{z_0}^{(k)}$ is given by~\cite[Appendix]{HeSj86}:
\begin{equation}\label{e:spectral-proj}\pi_{z_0}^{(k)}:=\frac{1}{2i\pi}\int_{\gamma_{z_0}}(z+\ml{L}_{V,\nabla}^{(k)})^{-1}dz:\ml{H}_k^m(M,\ml{E})\rightarrow \ml{H}_k^m(M,\ml{E}),\end{equation}
 where $\gamma_{z_0}$ is a small contour around $z_0$ which only contains the eigenvalue $z_0$ in its interior.
 \item Given any $z_0$ in $\IC$ with $\operatorname{Re} (z_0)>-C_m+C_{\ml{E}}$, there exists $m_k(z_0)\geq 1$ such that, in a small neighborhood of $z_0$, one has
 \begin{equation}\label{e:resolvent}
  (z+\ml{L}_{V,\nabla}^{(k)})^{-1}=\sum_{l=1}^{m_k(z_0)}(-1)^{l-1}\frac{(\ml{L}_{V,\nabla}^{(k)}+z_0)^{l-1}\pi_{z_0}^{(k)}}{(z-z_0)^l}+R_{z_0,k}(z):\ml{H}_k^m(M,\ml{E})\rightarrow \ml{H}_k^m(M,\ml{E}),
 \end{equation}
with $R_{z_0,k}(z)$ a holomorphic function near $z_0$ and with $\pi_{z_0}^{(k)}=0$ whenever $z_0\notin\ml{R}_k(V,\nabla)$.
\end{itemize}

\subsection{Proof of Theorems~\ref{t:maintheo1} and~\ref{t:maintheo2}}\label{ss:proof-theo} We will now briefly deduce the proofs of our main Theorems in the case where $\ml{E}=M\times\IC$. 
We will in fact prove something slightly stronger 
as we will verify that these statements hold for any $\psi_1\in\ml{H}_k^m(M)$ and any $\psi_2\in\ml{H}_{n-k}^{-m}(M)$. Let $\psi_1$ be an element in $\ml{H}^m_k(M)$. Thanks to Remark~\ref{r:resolvent-estimate}, 
we know that
$$\int_0^{+\infty}e^{-t z}\varphi^{-t*}(\psi_1)dt=(\ml{L}_V^{(k)}+z)^{-1}(\psi_1)$$
holds in $\ml{H}_k^m(M)$ for $\text{Re}(z)>C_0$. From Proposition~\ref{p:eigenvalues}, we have a meromorphic extension of the right-hand side on 
$\text{Re}(z)>-C_m+C$ whose poles are included in the set of Pollicott-Ruelle resonances $\ml{R}_k(V)$. Moreover, from~\eqref{e:resolvent}, 
we also know that the following holds true in $\ml{H}_k^m(M)$:
$$(z+\ml{L}_V^{(k)})^{-1}(\psi_1)=\sum_{l=1}^{m_k(z_0)}(-1)^{l-1}\frac{(\ml{L}_V^{(k)}+z_0)^{l-1}\pi_{z_0}^{(k)}(\psi_1)}{(z-z_0)^l}+R_{z_0,k}(z)(\psi_1)$$
in a neighborhood of any $z_0$ in $\ml{R}_k(V)$ which satisfies $\text{Re}(z_0)>-C_m+C$. Finally, as all 
the results hold in $\ml{H}_k^m(M)$, we can always pair these equalities with some $\psi_2$ in $\ml{H}_{n-k}^{-m}(M)$.

\subsection{A few words on Faure-Sj\"ostrand's construction}\label{ss:FS-proof}

As was already explained, once we are given an escape function satisfying the properties of Lemma~\ref{l:escape-function}, we are in 
position to apply the strategy of reference~\cite{FaSj11}. This Lemma was indeed the only Lemma from this reference that used the dynamical 
properties of the flow. After that, the authors proceeded to a detailed analytic work based on microlocal techniques and Fredholm theory. For 
the sake of completeness, let us outline the strategy of their proof and we refer to this reference for more details. Their first observation 
is that studying the spectrum on the above anisotropic Sobolev spaces is equivalent to studying the conjugated operator
$$\mathbf{A}_m^{(k)}\circ\left(-\ml{L}_{V,\nabla}^{(k)}\right)\circ(\mathbf{A}_m^{(k)})^{-1}$$
on the more standard space $L^2(M,\Lambda^k(T^*M)\otimes\ml{E})$. This is a pseudodifferential operator in 
$\Psi^1(M,\Lambda^k(T^*M)\otimes\ml{E})$, and we can write
$$\mathbf{A}_m^{(k)}\circ\left(-\ml{L}_{V,\nabla}^{(k)}\right)\circ(\mathbf{A}_m^{(k)})^{-1}\approx
\Op\left(\left(-i\xi(V(x))+X_HG_m(x;\xi)\right)\textbf{Id}_{\Lambda^k(T^*M)_x\otimes\ml{E}_x}\right),$$
where $\approx$ should be understood as an equality up to an error belonging to $\ml{O}(\Psi^0)+\ml{O}_m(\Psi^{-1+0}).$ In order to explain the 
proof of~\cite{FaSj11}, we will focus ourselves on the ``principal term'' $\left(-i\xi(V(x))+X_HG_m(x;\xi)\right)\textbf{Id}_{\Lambda^k(T^*M)_x\otimes\ml{E}_x}$ even if some attention has to 
be paid to these remainder terms. The next step is that Faure and Sj\"ostrand verify for which $z$ in $\IC$, the operator
\begin{equation}\label{e:inverse}\mathbf{A}_m^{(k)}\circ\left(-\ml{L}_{V,\nabla}^{(k)}\right)\circ(\mathbf{A}_m^{(k)})^{-1}-z\end{equation}
is invertible in $L^2$. There are three regions that we may distinguish using the 
convention of Lemma~\ref{l:escape-function}. First, when $(x;\xi)\notin N_0$ with $\|\xi\|_x\geq R$, we 
can use the fact that $X_HG_m(x;\xi)\leq-c\min(|u|,s)$ in order to invert the symbol $X_HG_m$ in this region of phase space, at least for some 
$z$ verifying $\text{Re}(z)\geq C_{\ml{E}}-c\min(|u|,s)$. On the other hand, if $(x;\xi)\in N_0$ and $\|\xi\|_x\geq R$, then $X_HG_m$ is 
not anymore uniformly stricly negative. Yet, we may subtract to the operator defined by~\eqref{e:inverse} a (selfadjoint) pseudodifferential operator 
$\hat{\chi}_0$ in $\Psi^0(M,\Lambda^k(T^*M)\otimes\ml{E})$ whose principal symbol is proportional to 
$c\min(|u|,s)\textbf{Id}_{\Lambda^k(T^*M)\otimes\ml{E}}$ near $N_0$ and identically vanishes away from it. 
It remains to deal with the compact part of phase space $\|\xi\|_x\leq R$. Here, we can substract 
a compact operator $\hat{\chi}_1$ in order to make the real part invertible in this region too. In the end, the operator
$$P(z):=\mathbf{A}_m^{(k)}\circ\left(-\ml{L}_{V,\nabla}^{(k)}\right)\circ(\mathbf{A}_m^{(k)})^{-1}-\hat{\chi}_1-\hat{\chi}_0-z$$
is invertible in $L^2$ for $\text{Re}(z)\geq C_{\ml{E}}-c\min(|u|,s)$. 
We shall denote its inverse by $r(z)$. This rough picture can be made rigorous and we refer to~\cite[p.340--345]{FaSj11} 
for a detailed proof. Then Faure and Sj\"ostrand conclude using arguments from analytic Fredholm theory. Precisely, they write that for 
$\text{Re}(z)\geq C_{\ml{E}}-c\min(|u|,s)$,
$$\mathbf{A}_m^{(k)}\circ\left(-\ml{L}_{V,\nabla}^{(k)}\right)\circ(\mathbf{A}_m^{(k)})^{-1}-z=(\text{Id}+(\hat{\chi}_0+\hat{\chi}_1)r(z))P(z).$$
Using the ellipticity of $\xi(V(x))$ near $N_0$, they show that $(\text{Id}+(\hat{\chi}_0+\hat{\chi}_1)r(z))$ is a Fredholm operator of 
index $0$~\cite[Lemma 3.4]{FaSj11} as $P(z)$ is. Hence, on $\text{Re}(z)\geq C_{\ml{E}}-c\min(|u|,s)$, we have a holomorphic family of Fredholm operators of index $0$ which are invertible for $\text{Re}(z)$ 
large enough. Then, the conclusion follows from classical theorems of analytic Fredholm theory -- see e.g.~\cite[Th.~D.4]{Zw12}.

\appendix

\section{Hyperbolic critical elements}\label{a:hyperbolic}

In the definition of Morse-Smale flows, we implicitely assumed some results on hyperbolic fixed points and hyperbolic 
closed orbits that we will briefly review in this appendix. For more details, 
we invite the reader to look at the classical textbook of Palis and de Melo~\cite[Ch.~2,3]{PaDeMe82} -- 
see also~\cite{HiPuSh77} for general results on (partially) hyperbolic invariant subsets.

\subsection{Limit sets}

We start with some terminology from the theory of dynamical systems. We say that a point $x$ in $M$ is wandering if there exist some open neighborhood $U$ of $x$ and some $t_0>0$ such that
$$U\cap\left(\cup_{|t|>t_0}\varphi^t(U)\right)=\emptyset.$$
The \emph{nonwandering set} of the flow is given by the points which are not wandering. The set of nonwandering points is denoted by $\operatorname{NW}(\varphi^t)$. Given any 
$x\in M$, we define 
$$\alpha(x):=\cap_{T\leq 0}\overline{\{\varphi^t(x):t\leq T\}},$$
and
$$\omega(x):=\cap_{T\geq 0}\overline{\{\varphi^t(x):t\geq T\}}.$$
We note that for every $x$ in $M$, $\alpha(x)$ and $\omega(x)$ are contained in $\operatorname{NW}(\varphi^t)$. For any invariant closed subset $\Lambda$ 
of $M$, we define the \emph{unstable and stable manifolds} of $\Lambda$:
$$W^u(\Lambda):=\{x\in M:\alpha(x)\subset\Lambda\},$$
and 
$$W^s(\Lambda):=\{x\in M:\omega(x)\subset\Lambda\}.$$

\subsection{Hyperbolic fixed points} We say that a point $x_0$ in $M$ is a 
\emph{hyperbolic fixed point} of $(\varphi^t)_{t\in\IR}$ if 
$V(x_0)=0$ and $d_{x_0}V:T_{x_0}M\rightarrow T_{x_0}M$ has no eigenvalue on the imaginary axis. Equivalently, 
it means that for every $t\neq 0$, $x_0$ is a fixed point of the smooth diffeomorphism $\varphi^{t}$ and that $d_{x_0}\varphi^{t}:T_{x_0}M\rightarrow T_{x_0}M$ has no eigenvalue of modulus one.

Consider now a hyperbolic fixed point $x_0$ and some small enough $\delta>0$. We can define the local unstable and stable manifolds as follows:
$$W^u_{\delta}(x_0):=\{x\in B(x_0,\delta):\ \forall t\leq 0,\ \varphi^t(x)\in B(x_0,\delta)\},$$
and
$$W^s_{\delta}(x_0):=\{x\in B(x_0,\delta):\ \forall t\geq 0,\ \varphi^t(x)\in B(x_0,\delta)\}.$$
Then, one has~\cite[Ch.~2, Prop.~6.1 and Th.~6.2]{PaDeMe82}
$$W^u(x_0)=\bigcup_{t\geq 0}\varphi^t(W^u_{\delta}(x_0)),\ \text{and}\ W^s(x_0)=\bigcup_{t\geq 0}\varphi^{-t}(W^s_{\delta}(x_0)).$$
Moreover, $W^u_{\delta}(x_0)$ (resp. $W^{s}_{\delta}(x_0)$) is a smooth embedded disk whose dimension is that of the unstable (resp. stable) space of $d_{x_0}V$ while $W^u(x_0)$ (resp. $W^s(x_0)$) is a smooth injectively immersed manifold in $M$ whose tangent space at $x_0$ is the unstable (resp. stable) space of $d_{x_0}V:T_{x_0}M\rightarrow T_{x_0}M$.

\subsection{Hyperbolic closed orbits} We say that a 
point $x_0$ 
in $M$ is a hyperbolic periodic point if $V(x_0)\neq 0$, 
there exists $T_0>0$ such that $\varphi^{T_0}(x_0)=x_0$ and 
$d_{x_0}\varphi^{T_0}:T_{x_0}M\rightarrow T_{x_0}M$ has $1$ 
as a simple eigenvalue and no other eigenvalue of modulus $1$. 
Equivalently, we will say that $\{\varphi^t(x_0):0\leq t\leq T_0\}$ is 
a hyperbolic closed orbit.

This can also be defined in terms of Poincar\'e sections 
which allows to make the connection with the case of hyperbolic fixed 
points. Let $\Sigma$ be a smooth hypersurface 
containing $x_0$ which is transversal to the vector field $V$. We denote by 
$P_{\Sigma}:O\subset\Sigma\rightarrow\Sigma$ the 
corresponding Poincar\'e map. Then, the point $x_0$ 
is said to be a hyperbolic periodic point of the 
flow if $x_0$ is a hyperbolic point for 
the Poincar\'e map. Note that this definition 
does not depend on the choice of  
Poincar\'e 
section $\Sigma$. Fix now a neighborhood $\tilde{O}$ of the closed orbit $\Lambda$ generated by the point 
$x_0$. We can define the (local) unstable and stable manifolds:
$$W^u_{\tilde{O}}(x_0):=\{x\in\tilde{O}:\ \forall t\leq 0,\ \varphi^{t}(x)\in\tilde{O}\},$$
and
$$W^s_{\tilde{O}}(x_0):=\{x\in\tilde{O}:\ \forall t\geq 0,\ \varphi^{t}(x)\in\tilde{O}\}.$$
As in the case of fixed points, the following holds:
$$W^u(\Lambda)=\bigcup_{t\geq 0}\varphi^{t}(W^u_{\tilde{O}}(x_0)),\ \text{and}\ W^s(\Lambda)=\bigcup_{t\leq 0}\varphi^{t}(W^s_{\tilde{O}}(x_0)).$$
Moreover, $W^u_{\tilde{O}}(x_0)$ and $W^s_{\tilde{O}}(x_0)$ are smooth submanifolds of $M$ which 
are transverse~\cite[Ch.~3, Prop.~1.5]{PaDeMe82} and $W^u_{\tilde{O}}(x_0)\cap W^s_{\tilde{O}}(x_0)=\Lambda$. 
In fact, for a given Poincar\'e section $\Sigma$, one can define $W^u_{\delta}(x_0)$ and $W^s_{\delta}(x_0)$ for the induced 
Poincar\'e map $P_{\Sigma}$ and $W^u_{\tilde{O}}(x_0)$ (resp. $W^s_{\tilde{O}}(x_0)$) is an open neighborhood of $\Lambda$ 
inside $\cup_{t\in(0,2T_0)}\varphi^{-t}(W^u_{\delta}(x_0))$ (resp. $\cup_{t\in(0,2T_0)}\varphi^{t}(W^u_{\delta}(x_0))$). 
Also, $W^u(\Lambda)$ and $W^s(\Lambda)$ are smooth immersed submanifolds of $M$~\cite[Ch.~3, Coro.~1.6]{PaDeMe82}. 

Finally, $W^u(\Lambda)$ and $W^s(\Lambda)$ are invariantly fibered by smooth submanifolds 
$(W^{uu}(x_0))_{x_0\in\Lambda}$ (resp. $(W^{ss}(x_0))_{x_0\in\Lambda}$) tangent to the unstable (resp. stable) space at $\varphi^{T_0}(x_0)$~\cite[Th.~4.1]{HiPuSh77}. Points of these submanifolds are characterized as follows, for every $x_0$ in $\Lambda$,
$$W^{uu}(x_0):=\left\{x\in M:\ \lim_{n\rightarrow+\infty}\varphi^{-nT_0}(x)=x_0\right\},$$
and
$$W^{ss}(x_0):=\left\{x\in M:\ \lim_{n\rightarrow+\infty}\varphi^{nT_0}(x)=x_0\right\}.$$

\subsection{$\lambda$-Lemma}\label{aa:invariant-cone} In the previous paragraphs, we saw that understanding the dynamics 
near a critical element is related to understanding the dynamics of a diffeomorphism $f$ near a hyperbolic point $\Lambda:=\{x_0\}$. This reduction can be 
done either by considering the time one map of the flow in the case of fixed points, or by looking at the Poincar\'e map associated with a certain transversal to the orbit. 
In this paragraph, we would like to give some quantitative features of the dynamics for a diffeomorphism $f:M\rightarrow M$ near a hyperbolic point $x_0$. 
In particular, we would like to recall the $\lambda$-Lemma~\cite{Sm60, Pa68} (sometimes called the inclination lemma). We follow closely the 
presentation of~\cite[Ch.~2]{PaDeMe82} -- see also~\cite[Ch.~5]{BrSt02} in the case of more general hyperbolic subsets. Before that, recall that two 
smooth subamnifolds $S$ and $S'$ of $M$ are $\epsilon$-$\ml{C}^1$ close if there exists a $\ml{C}^1$ diffeomorphism 
$h:S\rightarrow S'$ such that $i'\circ h$ is $\epsilon$-close to $i$ in the $\ml{C}^1$ topology\footnote{Here $i:S\rightarrow M$ and $i':S'\rightarrow M$ 
denote the inclusion maps.}.

We fix some local coordinates $(x,y)\in \IR^{n'}=E_s\oplus E_u$ around the hyperbolic point $x_0$ (with $n'=n$ or $n-1$). 
The local stable (resp. unstable) manifold is then the graph of a smooth function $\kappa_s:B_s(0,r_1)\rightarrow E_u$ (resp. $\kappa_u:B_u(0,r_1)\rightarrow E_s$) 
where $B_s(0,r_1)$ (resp. $B_u(0,r_1)$) is the ball of radius $r_1$ centered at $0$ inside $E_s$ (resp. $E_u$). 
Moreover, we have that $\kappa_u(0)=\kappa_s(0)=0$ and $d_0\kappa_s=d_0\kappa_u=0$. We then introduce the following change of coordinates:
$$\kappa: B_s(0,r_1)\oplus B_u(0,r_1)\rightarrow E_s\oplus E_u,\ (x,y)\mapsto(x-\kappa_u(y),y-\kappa_s(x)).$$
From our construction, one can verify that for $r_1>0$ small enough, $\kappa$ induces a diffeomorphism near the origin. Hence, 
one can assume without loss of generality that we are working in a local chart where the local stable (resp. unstable) manifold 
is represented by the stable (resp. unstable) linear space. The $\lambda$-Lemma can then be formulated as follows~\cite[Ch.~2, Lemma~7.1]{PaDeMe82} (see also~\cite[Th.~5.7.2]{BrSt02}):

\begin{theo}[$\lambda$-Lemma]\label{t:palis} We use the above conventions. Let $O=B_s(x_0,r)\times B_u(x_0,r)$, let $x$ be an element in 
$W^s(x_0)$ and let $D^u$ be a small disk of dimension $\operatorname{dim} (E^u(x_0))$ which is transversal to $W^s(x_0)$ at $x$.

If we denote by $D^u_N$ the connected component of $f^N(D^u)\cap O$ containing $f^N(x)$, 
then for every $\eps>0$, 
there exists $N_0$ such that, for every $N\geq N_0$, $D^u_N$ is $\eps$ $\ml{C}^{1}$-close to $W^u(x_0)$.
 
\end{theo}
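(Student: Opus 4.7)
The plan is to reduce to a normal form near $x_0$ and then run a graph transform argument. After the change of coordinates discussed just before the theorem, we may assume that $W^s_{loc}(x_0)=E_s\cap O$ and $W^u_{loc}(x_0)=E_u\cap O$, and that in these coordinates $f(x_s,x_u)=(A_sx_s+\phi_s(x_s,x_u),A_ux_u+\phi_u(x_s,x_u))$ with $A_s$ a linear contraction ($\|A_s\|\le\mu<1$) and $A_u$ a linear expansion ($\|A_u^{-1}\|\le\mu$), while $\phi_s,\phi_u$ together with their first derivatives vanish at the origin. In particular, by shrinking $O=B_s(x_0,r)\times B_u(x_0,r)$ we can make $\|d\phi_s\|,\|d\phi_u\|$ as small as we wish on $O$. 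By the transversality hypothesis $T_xD^u\oplus E_s=\IR^{n'}$, hence shrinking $D^u$ if necessary we may write it as a graph $D^u=\{(g(x_u),x_u):x_u\in B_u(0,\delta)\}$ with $g(0)=x_s^0$ (the stable coordinate of $x$) and arbitrary but bounded differential $dg$.

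Next I would set up the graph transform. The key point is that any "horizontal" graph $\Gamma=\{(\gamma(x_u),x_u):x_u\in B_u(0,\delta')\}$ with $\|d\gamma\|_\infty\le \alpha$ for $\alpha$ small has an image under $f$ which is again a horizontal graph $\{(\tilde\gamma(y_u),y_u)\}$ over a larger ball, whose domain is obtained by applying the (nonlinear) expansion $y_u=A_ux_u+\phi_u(\gamma(x_u),x_u)$. An implicit function argument, valid because $A_u$ is invertible and the nonlinear corrections are small on $O$, shows that the domain of $\tilde\gamma$ contains $B_u(0,\mu^{-1}(1-c\eps)\delta')$ where $\eps$ controls $\|\phi\|_{C^1(O)}$, and that
\[
\|d\tilde\gamma\|_\infty \le \frac{\|A_s\|\,\|d\gamma\|_\infty+O(\eps)(1+\|d\gamma\|_\infty)}{\|A_u^{-1}\|^{-1}-O(\eps)(1+\|d\gamma\|_\infty)}\le \kappa \|d\gamma\|_\infty+O(\eps)
\]
with some $\kappa<1$ uniform on a cone $\{\|d\gamma\|_\infty\le\alpha\}$ chosen small enough. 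Thus the graph transform contracts the $C^1$ size of $d\gamma$ at a geometric rate, and enlarges its domain by a factor $\geq \mu^{-1}-O(\eps)>1$.

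Now I would run the induction. Because $x\in W^s(x_0)$, iterates $f^n(x)$ enter $O$ and converge to $x_0$; in particular the base point $(g_n(0),0)$ of the graph $f^n(D^u)$ converges to $0$. By continuity we can fix an integer $n_1$ so large that $g_{n_1}$ is defined on some small $B_u(0,\delta_1)$, has $\|dg_{n_1}\|_\infty\le\alpha$, and $|g_{n_1}(0)|$ is arbitrarily small. Iterating the graph transform from this step on, the $C^1$ size of $g_{n_1+k}$ decays like $\kappa^k\alpha+O(\eps)$ and its domain eventually covers the entire $B_u(0,r)$; moreover the $C^0$ size is bounded by $|g_{n_1+k}(0)|+r\,\|dg_{n_1+k}\|_\infty$, and both terms can be made smaller than $\eps$ by taking $k$ large. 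The connected component $D^u_N$ of $f^N(D^u)\cap O$ containing $f^N(x)$ is then exactly this graph, and is $\eps$-close to $E_u\cap O=W^u(x_0)\cap O$ in the $C^1$ topology, which is the desired conclusion.

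The main obstacle is the cone-invariance/contraction estimate for $\|d\tilde\gamma\|_\infty$: one must choose the size of $O$ (equivalently, the bound $\eps$ on $\|\phi\|_{C^1(O)}$) and the cone aperture $\alpha$ in a compatible way so that the graph transform is genuinely contracting on the set $\{\|d\gamma\|_\infty\le\alpha\}$ throughout the whole domain, even as this domain grows out to cover $B_u(0,r)$ where the linear terms $A_s,A_u$ still dominate the nonlinearities. This is standard but requires careful bookkeeping; the argument is essentially the Hadamard--Perron graph transform specialized to our setting, and once the contraction and expansion estimates above are established, the conclusion of the Lemma follows immediately.
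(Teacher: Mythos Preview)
The paper does not actually prove this theorem: it is stated in the appendix as a classical result, with a citation to \cite[Ch.~2, Lemma~7.1]{PaDeMe82} (Palis--de~Melo), and is used as a black box throughout the arguments of Appendix~\ref{a:proof-smale}. So there is no ``paper's own proof'' to compare against.

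That said, your outline is essentially the standard graph-transform proof of the $\lambda$-Lemma as presented in Palis--de~Melo, and the strategy is correct. A couple of points deserve care if you want to turn this into a complete argument. First, you need an initial step to get the slope of the iterated disk into the cone $\{\|d\gamma\|_\infty\le\alpha\}$: the original disk $D^u$ is only transverse to $W^s(x_0)$, so its slope can be large, and you must argue that after finitely many iterates the slope enters the invariant cone (this is where transversality is used, together with the fact that $A_u^{-1}A_s$ contracts arbitrary slopes). You allude to this with the integer $n_1$ but do not justify why $\|dg_{n_1}\|_\infty\le\alpha$ can be achieved. Second, the claim that the domain of the graph eventually covers all of $B_u(0,r)$ requires that the expansion factor exceeds $1$ uniformly on all of $O$, which forces you to fix $r$ once and for all at the outset so that the nonlinear terms are dominated on the whole box; your phrasing ``even as this domain grows out to cover $B_u(0,r)$'' suggests you are aware of this, but it should be made explicit that $r$ is chosen first, before $\alpha$ and $n_1$.
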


 \subsection{Sternberg-Chen's Theorem}\label{aa:Sternberg} In this paragraph, we collect a few results on the linearization of 
 vector fields near hyperbolic critical elements in order to illustrate that 
 our assumption of being $\ml{C}^1$ linearizable is in some sense generic.

 \subsubsection{The case of a fixed point} Recall Sternberg-Chen's Theorem on the linearization of vector fields near hyperbolic critical points~\cite{Ch63} (see also~\cite[Th.~9, p.50]{Ne69}):
\begin{theo}[Sternberg-Chen]\label{t:Sternberg-Chen} Let $V(x)=\sum_ja_j(x)\partial_{x_j}$ be a smooth vector field defined in a neighborhood of $0$ in 
$\IR^n$. Suppose that $V(0)=0$ and that $0$ is a hyperbolic 
fixed point. Denote by $(\mu_j)$ the eigenvalues of $A:=(\partial_{x_k}a_j(0))_{k,j}$. Suppose that the 
eigenvalues satisfy the \textbf{non resonant assumption},
$$\forall\ \alpha_1,\ldots,\alpha_n\in\IN\ \text{s.t}\ \alpha_1+\ldots +\alpha_n\geq 2,\ \forall\ 1\leq j\leq n,\ \mu_j\neq \sum_{i=1}^d\alpha_i\mu_i.$$
Then, there exists a smooth diffeomorphism $h$ which is defined in a neighborhood of $0$ such that $h(0)=0$ and such that
$$h^*(V)(x)=D_xh(L(x))$$
where $L(x)=Ax.\partial_x$.
\end{theo}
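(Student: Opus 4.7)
The plan is to reduce the theorem to two steps: (i) constructing a formal power series linearization, then (ii) realizing that formal linearization by a genuine smooth diffeomorphism. Both steps rely crucially on the hyperbolicity of the fixed point, but the non-resonance assumption enters in a precise way only in step (i).

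For the formal step, I would seek a formal diffeomorphism $\hat h(x) = x + \hat\phi(x)$, with $\hat\phi$ vanishing to order $2$ at the origin, such that $\hat h^* V = L$ as an identity of formal power series. Writing $V = L + R$ with $R = O(|x|^2)$ and expanding the conjugation relation, one gets $L + [L,\hat\phi] + (\text{quadratic in }\hat\phi) = L + R$, so matching homogeneous components of degree $k\geq 2$ yields a cohomological equation of the form
\[
[L,\phi_k] \;=\; R_k + (\text{terms built from } \phi_2,\ldots,\phi_{k-1}),
\]
where $\phi_k$ is the degree-$k$ homogeneous component of the sought correction. The operator $\phi \mapsto [L,\phi]$ on the finite-dimensional space of $\IR^n$-valued homogeneous polynomial vector fields of degree $k$ is diagonalizable in the monomial basis, with eigenvalues $\mu_j - \langle \alpha,\mu\rangle$ ranging over $|\alpha|=k$ and $1\leq j\leq n$. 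The non-resonance hypothesis is exactly the assertion that these eigenvalues are all nonzero, so the cohomological equation is uniquely solvable at each order, and iteration produces $\hat\phi$ as a formal series.

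For the smooth step, Borel's theorem realizes $\hat h$ as a genuine $\ml{C}^\infty$ map $h_0$ with $h_0(0)=0$ and $h_0^*V - L$ flat at the origin (vanishing to infinite order). It then suffices to find a smooth germ $k$ of diffeomorphism, with $k-\text{Id}$ flat at $0$, such that $(h_0 \circ k)^* V = L$ exactly. Following the Sternberg--Moser path method, interpolate $V_t = L + t\,(h_0^* V - L)$ and look for a family $k_t$ of diffeomorphisms with $k_t^* V_t = L$; differentiating gives a cohomological equation for the generating vector field $\dot k_t$ whose solution can be written as an integral of an adjoint cocycle along the linear flow $e^{sL}$. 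Hyperbolicity of $L$ ensures convergence of these integrals once the integrand is decomposed according to the stable/unstable splitting of $\IR^n$, the flatness of $h_0^*V - L$ at $0$ being what tames the exponential growth on the unstable side.

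The principal obstacle will be step (ii), and specifically preserving $\ml{C}^\infty$ regularity (rather than only $\ml{C}^k$ for every finite $k$) of the corrector $k$. Controlling all derivatives uniformly near the origin requires working in weighted function spaces adapted to the hyperbolic splitting, and trading flatness order against exponential rates in the stable and unstable directions. Because the non-resonance hypothesis has already ruled out any formal obstruction, no small-divisor issue appears, so the scheme avoids the Nash--Moser machinery needed in elliptic (KAM) analogues; the whole smoothness argument thus reduces to careful bookkeeping of Taylor remainders against the hyperbolic contraction and expansion rates.
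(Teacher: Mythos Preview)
The paper does not prove this theorem: it is stated in the appendix as a classical result and attributed to Chen~\cite{Ch63} (see also Nelson~\cite{Ne69}), with no argument given beyond the references. So there is no ``paper's own proof'' to compare against.

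That said, your outline is the standard route to Sternberg--Chen and is essentially correct. The formal step is exactly as you describe: the adjoint action of $L$ on degree-$k$ polynomial vector fields has eigenvalues $\mu_j-\langle\alpha,\mu\rangle$ with $|\alpha|=k$, and non-resonance makes the cohomological equation solvable at every order. For the smooth step, Borel plus ``kill a flat remainder at a hyperbolic fixed point'' is the right strategy. One caveat: the path/Moser scheme you sketch is not the most common way this is done in the literature; the usual argument (e.g.\ in Nelson) handles the flat remainder by explicitly integrating along the stable and unstable flows separately and using that flatness at $0$ beats any fixed exponential rate, which is morally what you say but is more direct than setting up a homotopy $V_t$. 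Either way, you have correctly identified that hyperbolicity (not non-resonance) is what drives step~(ii), and that no KAM-type small divisors arise. If you were to write this up in full, the main work would indeed be the bookkeeping you flag at the end: controlling all $\ml{C}^k$ norms of the corrector uniformly in $k$ on a fixed neighborhood of the origin.
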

Here $\IN$ denotes the set of \emph{nonnegative} integers. The classical Grobman-Hartman Theorem ensures the existence of a conjugating homeomorphism. The crucial observation for us 
is that the conjugating map is smooth provided some non resonance assumption is made. 
 
\begin{rema}
 We stated here a version of the Theorem which gives conditions to have a smooth diffeomorphism. Yet, if we only search a $\ml{C}^k$ conjugating 
 diffeomorphism (with $k\geq 1$), we can restrict our assumptions by imposing only a \emph{finite} number of conditions on the 
 eigenvalues -- see~\cite[Th.~10, p.~52]{Ne69} for the precise statement.
\end{rema}

 \subsubsection{The case of a closed orbit}
 Sternberg-Chen's Theorem can be generalized in the case of a closed hyperbolic orbit~\cite{WWL08}. Consider a 
 smooth vector field $V(x,\theta)$ defined on $B_{n-1}(0,r)\times(\IR/\ml{P}_{\Lambda}\IZ)$ where $B_{n-1}(0,r)$ is a small ball of radius $r>0$ centered at 
 $0$ in $\IR^{n-1}$. We make the assumption that
 \begin{equation}\label{e:linearized-vector-field}V(x,\theta)=(1+g( x,\theta))\partial_{\theta}+(A(\theta)x+f(x,\theta)).\partial_x,\end{equation}
 with $f(x,\theta)=\ml{O}(\|x\|^2)$, $A(\theta)$ smooth and $g(x,\theta)=\ml{O}(\|x\|)$.

\textbf{Simplifying coordinates near the periodic orbit.} 
 Let us first explain how the vector field can be put under the form~\eqref{e:linearized-vector-field}. Recall that the 
tubular neighborhood Theorem states that some neighborhood of $\Lambda\subset M$ is $C^\infty$ diffeomorphic to some neighborhood of the 
zero section of the normal bundle $N(\Lambda\subset M)$ induced by some Riemannian metric on $M$. But $\Lambda$ is a circle (with global coordinates
$\theta$ in $\IS^1:=\IR/(2\pi\IZ)$) hence $N(\Lambda\subset M)$ is an oriented real vector bundle over $\mathbb{S}^1$. Hence, $N(\Lambda)$ 
is trivial and diffeomorphic to a cartesian product $\mathbb{S}^1\times \mathbb{R}^{d-1}$. Equivalently, triviality and the tubular neighborhood 
Theorem guarantee that we have some germ of coordinate system $(x,\theta)$ near $\Lambda$ where $\Lambda$ is defined by the global equation
$\{x=0\}$. In these coordinates, the vector field $V$ associated with $\varphi^t$ can be written
\begin{equation}
V(x,\theta)=\tilde{g}(x,\theta)\partial_{\theta}+\tilde{f}(x,\theta)\partial_x,
\end{equation}
where $(x,\theta)$ belongs to $O_{\Lambda}\times\IS^1$ for some small neighborhood $O_{\Lambda}\subset\IR^{n-1}$ of $0$. 
As $\Lambda$ is a closed orbit, we can suppose that $\tilde{f}(0,\theta)=0$ for all $\theta\in\mathbb{S}^1$ and $\tilde{g}(0,\theta)>0$ for all $\theta\in \IS^1$. Before explaining the analogue of the Sternberg-Chen's Theorem for 
closed hyperbolic orbit, let us first reparametrize the $\theta$ variable. Set
$\tilde{\theta}(\theta)=\int_0^\theta \tilde{g}^{-1}(0,s) ds$ and $\ml{P}_{\Lambda}:=\int_0^{2\pi} \tilde{g}^{-1}(0,s) ds$. In these new coordinates, one has
$$d\tilde{\theta}=d \int_0^\theta \tilde{g}^{-1}(0,s) ds=\tilde{g}^{-1}(0,\theta)d \theta \implies d\tilde{\theta}(\tilde{g}(0,\theta)\partial_\theta)=1.$$
The fact that this is a diffeomorphism relies on the fact that $\tilde{g}(0,s)>0$ for all $s\in[0,2\pi]$. Therefore up to doing this reparametrization, we may assume that
\begin{equation}
V(x,\theta)=\tilde{f}(x,\theta)\partial_x+\tilde{g}(x,\theta)\partial_\theta
\end{equation}
with $\theta\in \IR/(\ml{P}_{\Lambda}\IZ)$, 
$$\tilde{f}(x,\theta)=A(\theta)x+f(x,\theta),\ \text{and}\ \tilde{g}(x,\theta)=1+g(x,\theta),$$
Moreover, $f(x,\theta)=\ml{O}(\|x\|^2)$ and $g(x,\theta)=\ml{O}(\|x\|)$ uniformly in a small neighborhood of $\{x=0\}$.

Following~\cite{WWL08} and using the conventions of paragraph~\ref{ss:floquet}, we introduce the following nonresonance conditions:
\begin{itemize}
 \item $(\mu_j(\Lambda))_{j=1,\ldots,n-1}$ are \textbf{nonresonant in space} if, for every $\alpha\in \IN^{n-1}$ with $\sum_{k=1}^{n-1}\alpha_k\geq 2$ 
 and for every $1\leq j\leq n-1$,
 $$\mu_j(\Lambda)-\sum_{k=1}^{n-1}\alpha_k\mu_k(\Lambda)\notin \frac{2i\pi}{\ml{P}_{\Lambda}}\IZ,$$
 \item $(\mu_j(\Lambda))_{j=1,\ldots,n-1}$ are \textbf{nonresonant in time} if, for every $\alpha\in \IN^{n-1}$ with $\sum_{k=1}^{n-1}\alpha_k\geq 1$,
 $$\sum_{k=1}^{n-1}\alpha_k\mu_k(\Lambda)\notin \frac{2i\pi}{\ml{P}_{\Lambda}}\IZ.$$
\end{itemize}
We can now state the analogue of Sternberg-Chen's Theorem in the case of a hyperbolic closed orbit~\cite[Th.~3]{WWL08}:
 \begin{theo}\label{t:sternberg-closed-orbit} Let $V(x,\theta)$ be a smooth vector field of the form~\eqref{e:linearized-vector-field}. 
 Suppose that $\{0\}\times\IR/(\ml{P}_{\Lambda}\IZ)$ is a hyperbolic 
 closed orbit for the flow generated by $V$ and that $(\mu_j(\Lambda))_{j=1,\ldots,n-1}$ are both nonresonant in time and space. 
 Then, there exists a smooth diffeomorphism $h$ which is defined in a neighborhood of
   $\{0\}\times(\IR/\ml{P}_{\Lambda}\IZ)$ such that
   $$Vh^*(V)(x,\theta)=D_xh(L(x,\theta)),$$
with 
$$L(x,\theta)=\partial_{\theta}+A(\theta)x.\partial_x.$$
\end{theo}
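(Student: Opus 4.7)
I plan to follow an order-by-order normal-form reduction adapted from the classical proof of Theorem~\ref{t:Sternberg-Chen}, but carried out on the cylinder $\IR^{n-1}\times\IR/\ml{P}_\Lambda\IZ$. The Taylor expansion runs in the $x$ variable and the Fourier expansion in the $\theta$ variable, and the two non-resonance conditions correspond to the solvability of two families of cohomological equations that appear naturally. I split $V=L+R$ with $L=\partial_\theta+A(\theta)x\cdot\partial_x$ and $R=f(x,\theta)\partial_x+g(x,\theta)\partial_\theta$. The goal is a smooth diffeomorphism $h$ fixing the orbit $\{x=0\}$ pointwise with $d_{(0,\theta)}h=\operatorname{Id}$ and satisfying $V\circ h=dh\cdot L$. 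I will build $h$ as an infinite composition of time-one flows $\ldots\circ e^{H_3}\circ e^{H_2}$ of vector fields $H_N$ that successively kill the Taylor orders of $R$ at $\{x=0\}$.

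Writing $H=v(x,\theta)\partial_x+u(x,\theta)\partial_\theta$, the leading-order conjugacy condition is of the form $[L,H]=\pm R$, which splits into
\begin{align*}
L(v)-A(\theta)\,v &= f(x,\theta),\\
L(u) &= g(x,\theta),
\end{align*}
up to a harmless lower-order $(\partial_\theta A)\,u$ correction. Taylor-expanding in $x$ (multi-index $\alpha$) and Fourier-expanding in $\theta$ (mode $n$), and after diagonalising $A(\theta)$ via the Floquet reduction of Proposition~\ref{p:floquet} (on the double cover of period $2\ml{P}_\Lambda$ if necessary), each Fourier-Taylor coefficient is prescribed by an algebraic equation whose divisor is respectively
\[
\mu_j(\Lambda)-\sum_k\alpha_k\mu_k(\Lambda)-\frac{2i\pi n}{\ml{P}_\Lambda}\qquad\text{or}\qquad\sum_k\alpha_k\mu_k(\Lambda)+\frac{2i\pi n}{\ml{P}_\Lambda}.
\]
The space non-resonance hypothesis makes the first divisor nonzero for $|\alpha|\geq 2$, which is precisely the range needed to eliminate $f=\mathcal{O}(\|x\|^2)$. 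The time non-resonance hypothesis makes the second divisor nonzero for $|\alpha|\geq 1$, precisely the range needed to eliminate $g=\mathcal{O}(\|x\|)$. Each coefficient is therefore uniquely determined.

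Iterating this procedure produces a formal series $h_\infty=\ldots\circ e^{H_3}\circ e^{H_2}$ in $x$, with coefficients smooth in $\theta$, that formally linearises $V$. To realise it as an actual smooth diffeomorphism I appeal to Borel's theorem to construct a smooth $\widehat{h}$ whose Taylor jet at $\{x=0\}$ coincides with that of $h_\infty$, so that $\widehat{h}^{\,*}V-L$ is flat along the orbit. The remaining flat error is then removed by a standard contraction-mapping argument in a suitable weighted $\ml{C}^k$ space (as in the smooth Sternberg-Chen theorem and its extensions). Invertibility of $h$ in a neighbourhood of the orbit follows from $d_{(0,\theta)}h=\operatorname{Id}$ via the inverse function theorem, giving the desired smooth linearising chart.

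The formal step is essentially algebra, driven cleanly by the two non-resonance conditions, which is why the statement of the theorem carries both of them. The real analytic work sits in the smooth realisation: the formal Taylor series is expected to diverge in general, and upgrading it to a genuine smooth germ requires Borel summation combined with a fixed-point absorption of the flat remainder, which is the main obstacle. A secondary bookkeeping point is that Floquet reduction may force passage to a double cover of the angular variable, but the original $\ml{P}_\Lambda$-periodicity of $A$, $f$ and $g$ ensures that only even Fourier modes on the double cover actually arise, so the stated non-resonance conditions (formulated with respect to $\ml{P}_\Lambda$) remain exactly what is needed to rule out the relevant divisors.
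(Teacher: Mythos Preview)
The paper does not actually prove this theorem: it is stated in the appendix as a quotation of \cite[Th.~3]{WWL08} (Wang--Wu--Li), and is used as a black box to justify the $\ml{C}^1$-linearizability hypothesis under generic non-resonance conditions. There is therefore no ``paper's own proof'' to compare your proposal against.

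That said, your outline is the standard route to such a result and matches the structure of the proof in \cite{WWL08}: order-by-order elimination of the Taylor coefficients of $R$ via cohomological equations whose divisors are precisely those controlled by the space and time non-resonance hypotheses, followed by Borel realisation of the formal series and a contraction-mapping argument to absorb the flat remainder. Your identification of the two families of divisors with the two non-resonance conditions is correct, as is the observation that Floquet reduction may require passing to the double cover. As a sketch this is fine; the genuinely delicate step you correctly flag is the flat-remainder argument, which in the periodic-orbit setting requires some care (uniform estimates in $\theta$) but is carried out in the cited reference.
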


\begin{rema} Fix $k\geq 1$. As in~\cite{Ne69}, the proof of~\cite{WWL08} could be adapted to ensure that
under a \emph{finite} number of non resonant conditions in space and times, $h$ can be chosen to be $\ml{C}^k$. However, we are not 
aware of a place in the literature where this condition is explicitely written.
 
\end{rema}

\section{Proof of the dynamical statements from Section~\ref{s:smale}}\label{a:proof-smale} 

In this appendix, for the sake of completeness, we briefly review the instructive proofs of some dynamical results due to Smale~\cite{Sm60}.

\subsection{Proof of Lemma~\ref{l:index-increase}}

Let $x$ be an element in $W^u(\Lambda_i)\cap W^s(\Lambda_j)$ which does not belong to $\Lambda_i$ (otherwise $i=j$ and the conclusion is trivial).
Then, the flow line $t\mapsto \varphi^t(x)$ is contained in the intersection $W^u(\Lambda_i)\cap W^s(\Lambda_j)$ which should have 
dimension at least $1$. Also, one knows
$$\text{dim}(T_xW^u(\Lambda_i))+\text{dim}(T_xW^s(\Lambda_j))=\text{dim}(T_xW^u(\Lambda_i)+T_xW^s(\Lambda_j))+\text{dim}(T_xW^u(\Lambda_i)\cap T_xW^s(\Lambda_j)).$$
From the transversality assumption $T_xW^u(\Lambda_i)+T_xW^s(\Lambda_j)=T_xM$ which, combined to the previous observation, implies
$$\text{dim}(W^u(\Lambda_i))+\text{dim}(W^s(\Lambda_j))\geq n+ 1.$$
We now distinguish two cases. On the one hand, if $\Lambda_j$ is a critical point, 
then $\text{dim}(W^s(\Lambda_j))=n-\text{dim}(W^u(\Lambda_j))$ (by transversality at $\Lambda_j$). 
Hence, $\text{dim}(W^u(\Lambda_i))\geq 
1+ \text{dim}(W^u(\Lambda_j))$ as expected. On the other hand, if $\Lambda_j$ is a closed orbit, one has, by 
transversality at $\Lambda_j$, $\text{dim}(W^s(\Lambda_j))=n-\text{dim}(W^u(\Lambda_j))+1$ 
from which one can conclude 
that $\operatorname{dim}(W^u(\Lambda_i))\geq \operatorname{dim}(W^u(\Lambda_j))$.

\subsection{Proof of Lemma~\ref{l:nocycle}}

From Lemma~\ref{l:index-increase}, we note that, 
if $W^u(\Lambda_j)\cap W^s(\Lambda_j)\neq\Lambda_j$, then $\Lambda_j$ 
is necessarily a closed orbit. In fact, it means that there exists $x$ that belongs to $W^u(\Lambda_j)\cap W^s(\Lambda_j)$ but not to $\Lambda_j$. From the case of equality in Lemma~\ref{l:index-increase}, it follows that $\Lambda_j$ is a closed orbit.

Suppose now that there exists such a point $x_0$, i.e. $x_0$ belongs to $W^u(\Lambda_j)\cap W^s(\Lambda_j)$ but not to $\Lambda_j$. Note that $x_0$ does not belong to $\Lambda_i$ for every $i\neq j$ -- see Lemma~\ref{l:partition}. Let $U$ be a small open set containing $x_0$. We would like to prove that, 
for every $t_0>0$, there exists $t\geq t_0$ such that $\varphi^{t}(U)\cap U\neq\emptyset$, which would contradict the fact that $x_0\notin NW(\varphi^t)$.

For that purpose, fix $\Sigma$ a small Poincar\'e section associated with the closed orbit $\Lambda_j$ and centered at the point $y_0\in\Lambda_j$ -- see appendix~\ref{a:hyperbolic}. 
 We also set $D^u$ to be a small open disk 
containing $x_0$ inside $W^u(\Lambda_j)\cap U$ and $D^s$ to be a small open disk containing $x_0$ 
inside $W^s(\Lambda_j)\cap U$. Fix now $t_0>0$ and let us show the expected contradicition. As $x_0$ belongs to $W^s(\Lambda_j)$, 
we know that there exists $t\geq t_0$ such that $\varphi^t(x_0)$ belongs to $\Sigma$. We denote by $D^u(t,\Sigma)$ the 
connected component of $\varphi^t(D^u)\cap\Sigma$ containing $\varphi^t(x_0)$. From the $\lambda$-Lemma~\ref{t:palis}, 
we know that, for $t$ large enough, $D^u(t,\Sigma)$ is $\eps$ $\ml{C}^1$-close to the unstable manifold of the Poincar\'e map 
near $y_0$ (for some small $\eps$). Similarly, we can work in negative times and construct $D^s(-t',\Sigma)$ which 
is $\eps$ $\ml{C}^1$-close to the stable manifold of the 
Poincar\'e map near $y_0$. As $\text{dim} W^u(\Lambda_j)+\text{dim} W^s(\Lambda_j)=  n+1$, 
$D^u(t,\Sigma)$ (resp. $D^s(-t',\Sigma)$) have the same dimension as the unstable (resp. stable) manifolds of the induced Poincar\'e map on $\Sigma$. 
Therefore, there exists a point $y_1$ which lies in the intersection of $D^u(t,\Sigma)$ and $D^s(-t',\Sigma)$. Then, we set $y_2=\varphi^{-t}(y_1)=\varphi^{-(t+t')}(y_3)$ which belongs to $U\cap \varphi^{-(t+t')}(U)$. This gives the contradiction

\subsection{Proof of Theorem~\ref{t:smale}}

The proof of Theorem~\ref{t:smale} was given by Smale in~\cite{Sm60}. The proof of this classical result contains 
important ideas that enlightens our general strategy. Hence, it seems useful to recall Smale's argument.
The proof starts with the following Lemma~\cite[Lemma~3.3]{Sm60}:
\begin{lemm}\label{l:boundary} Suppose $W^u(\Lambda_i)\cap W^s(\Lambda_j)\neq\emptyset$. Then, $W^u(\Lambda_j)\subset\overline{W^u(\Lambda_i)}$.
 
\end{lemm}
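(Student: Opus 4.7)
Fix some $x_0\in W^u(\Lambda_i)\cap W^s(\Lambda_j)$; such a point exists by hypothesis. The strategy is to use the $\lambda$-Lemma (Theorem~\ref{t:palis}) to show that iterates of a small disk in $W^u(\Lambda_i)$ passing through $x_0$ accumulate $\ml{C}^1$-closely on the local unstable manifold of $\Lambda_j$, and then to spread this accumulation out by invariance under the flow.

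First I would produce, inside $W^u(\Lambda_i)$, a small embedded disk $D^u$ of dimension $d_j=\dim W^u(\Lambda_j)$, containing $x_0$, and transverse to $W^s(\Lambda_j)$ at $x_0$. To see that such a disk exists, note that from Lemma~\ref{l:index-increase}, $\dim W^u(\Lambda_i)\geq d_j$, and the Morse--Smale transversality condition $T_{x_0}W^u(\Lambda_i)+T_{x_0}W^s(\Lambda_j)=T_{x_0}M$ means the projection $T_{x_0}W^u(\Lambda_i)\to T_{x_0}M/T_{x_0}W^s(\Lambda_j)$ is surjective. Choosing $E\subset T_{x_0}W^u(\Lambda_i)$ of dimension $d_j$ mapping isomorphically onto the quotient, any small $\ml{C}^1$-disk tangent to $E$ at $x_0$ and contained in $W^u(\Lambda_i)$ has the required properties (when $\Lambda_j$ is a closed orbit, we pick the dimension of $D^u$ to be $d_j-1$ and argue on a transversal section as below, using that $x_0$ then already lies in $W^s(\Lambda_j)$ which has dimension $n-d_j+1$).

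If $\Lambda_j$ is a \emph{fixed point}, the $\lambda$-Lemma applied to the time-one diffeomorphism $f=\varphi^{1}$ at the hyperbolic point $\Lambda_j$ implies that, for any $\eps>0$, there is $t_\eps>0$ such that the connected component $D^u(t)$ of $\varphi^{t}(D^u)\cap O$ containing $\varphi^{t}(x_0)$ is $\eps$-$\ml{C}^1$-close to $W^{u}_{\mathrm{loc}}(\Lambda_j)$ for every $t\geq t_\eps$; here $O$ is a small neighbourhood of $\Lambda_j$. If $\Lambda_j$ is a \emph{closed orbit}, one chooses a Poincaré section $\Sigma$ through some point $y_0\in\Lambda_j$, follows $x_0$ by the flow until $\varphi^{t_0}(x_0)\in\Sigma$, and applies Theorem~\ref{t:palis} to the Poincaré return map $P_\Sigma$ (after intersecting $\varphi^{t_0}(D^u)$ with $\Sigma$ to produce the transversal disk inside $\Sigma$). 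In either case, the accumulation entails that every point of $W^{u}_{\mathrm{loc}}(\Lambda_j)$ is a limit of points of $W^u(\Lambda_i)$, since $\varphi^{t}(D^u)\subset W^u(\Lambda_i)$ by flow-invariance of unstable manifolds.

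To finish, fix any $y\in W^u(\Lambda_j)$. By definition of $W^u(\Lambda_j)$, there is $s\geq 0$ such that $y=\varphi^{s}(z)$ for some $z\in W^{u}_{\mathrm{loc}}(\Lambda_j)$ (in the closed orbit case, $z$ belongs to a suitable local unstable set saturated by the flow around $\Lambda_j$). By the previous step, pick a sequence $(z_m)\subset W^u(\Lambda_i)$ with $z_m\to z$. Then $\varphi^{s}(z_m)\to y$ by continuity, and $\varphi^{s}(z_m)\in W^u(\Lambda_i)$, hence $y\in\overline{W^u(\Lambda_i)}$. The main obstacle in implementing this plan cleanly is the book-keeping around closed orbits: one must verify that the transversal disk produced inside the Poincaré section has the right dimension and the right transversality with the local stable set of $P_\Sigma$, so that Theorem~\ref{t:palis} applies and, after re-saturating by the flow, one recovers the whole of $W^u(\Lambda_j)$ (and not merely a single strong unstable leaf).
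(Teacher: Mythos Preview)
Your proposal is correct and follows essentially the same line as the paper's proof: build a small disk $D^u\subset W^u(\Lambda_i)$ through a point of $W^u(\Lambda_i)\cap W^s(\Lambda_j)$, transverse to $W^s(\Lambda_j)$ and of the right dimension, apply the $\lambda$-Lemma (to $\varphi^1$ for a fixed point, to the Poincar\'e map for a closed orbit), and then propagate by the flow. The only noticeable difference is in the final step: the paper first pulls the target point of $W^u(\Lambda_j)$ back near $\Lambda_j$, applies the $\lambda$-Lemma with a quantitative $\eps e^{-CT_0}$ margin, and then pushes forward controlling the expansion rate; you instead show $W^u_{\mathrm{loc}}(\Lambda_j)\subset\overline{W^u(\Lambda_i)}$ and conclude by continuity of $\varphi^s$ and flow-invariance of $W^u(\Lambda_i)$, which is a slightly cleaner (non-quantitative) way to run the same argument.
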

This Lemma means the following. If $\varphi^t(x)$ is repelled by $\Lambda_i$ and attracted by $\Lambda_j$, then
the unstable manifold $W^u(\Lambda_j)$ of the set $\Lambda_j$ at the arrival is contained in the closure of the bigger 
stratum $W^u(\Lambda_i)$.

\begin{proof} Let $x_0$ be a point in $W^u(\Lambda_j)$ and let $\eps>0$. We aim at constructing a point $y\in W^u(\Lambda_i)$ which is at a distance $\leq\eps$ 
of $x_0$. 

We start with the case where $\Lambda_j$ is a hyperbolic fixed point. 
We first note that there exists $T_0>0$ such that $\varphi^{-T_0}(x_0)$ belongs to an $\eps$-neighborhood $O$ of $\Lambda_j$ inside 
$W^u(\Lambda_j)$. We also fix a small neighborhood $O'$ of $\Lambda_j$ inside $M$ (containing $x_0$). In order to construct $y$, we will make use of a 
point lying inside $W^u(\Lambda_i)\cap W^s(\Lambda_j)$. Such a point $y_0$ exists from our assumption. Let $D^u(y_0)$ 
be a small disk contained in $W^u(\Lambda_i)$ which is of dimension $\text{dim}(W^u(\Lambda_j))$, which contains $y_0$ in its interior 
and which is transversal to $W^s(\Lambda_j)$. The existence of such a disk is provided by the dimension bound 
$\text{dim}(W^u(\Lambda_j))\leqslant \text{dim}(W^u(\Lambda_i))$ from Lemma~\ref{l:index-increase}. Recall that, for a fixed point, 
$\dim(W^u(\Lambda_j))+\dim(W^s(\Lambda_j))=n$.

Using the map $f=\varphi^1$ in the $\lambda$-Lemma, we can find $m$ large enough such 
that the connected component of $\varphi^m(D^u(y_0))\cap O'$ containing $\varphi^m(y_0)$ is $\eps e^{-C T_0}$ $\ml{C}^1$-close to $W^u(\Lambda_j)$ near 
$\Lambda_j$ (for some $C>0$ larger than the maximal expansion rate of $\varphi^t$).  In particular, we 
can find a point $y_1\in W^u(\Lambda_i)$ which is at a distance $\eps e^{-C T_0}$ of $\varphi^{-T_0}(x_0)$ -- see Figure~\ref{f:Smaletheo1}. 
\begin{figure}[ht]\label{f:Smaletheo1}
\includegraphics[width=12cm, height=6cm]{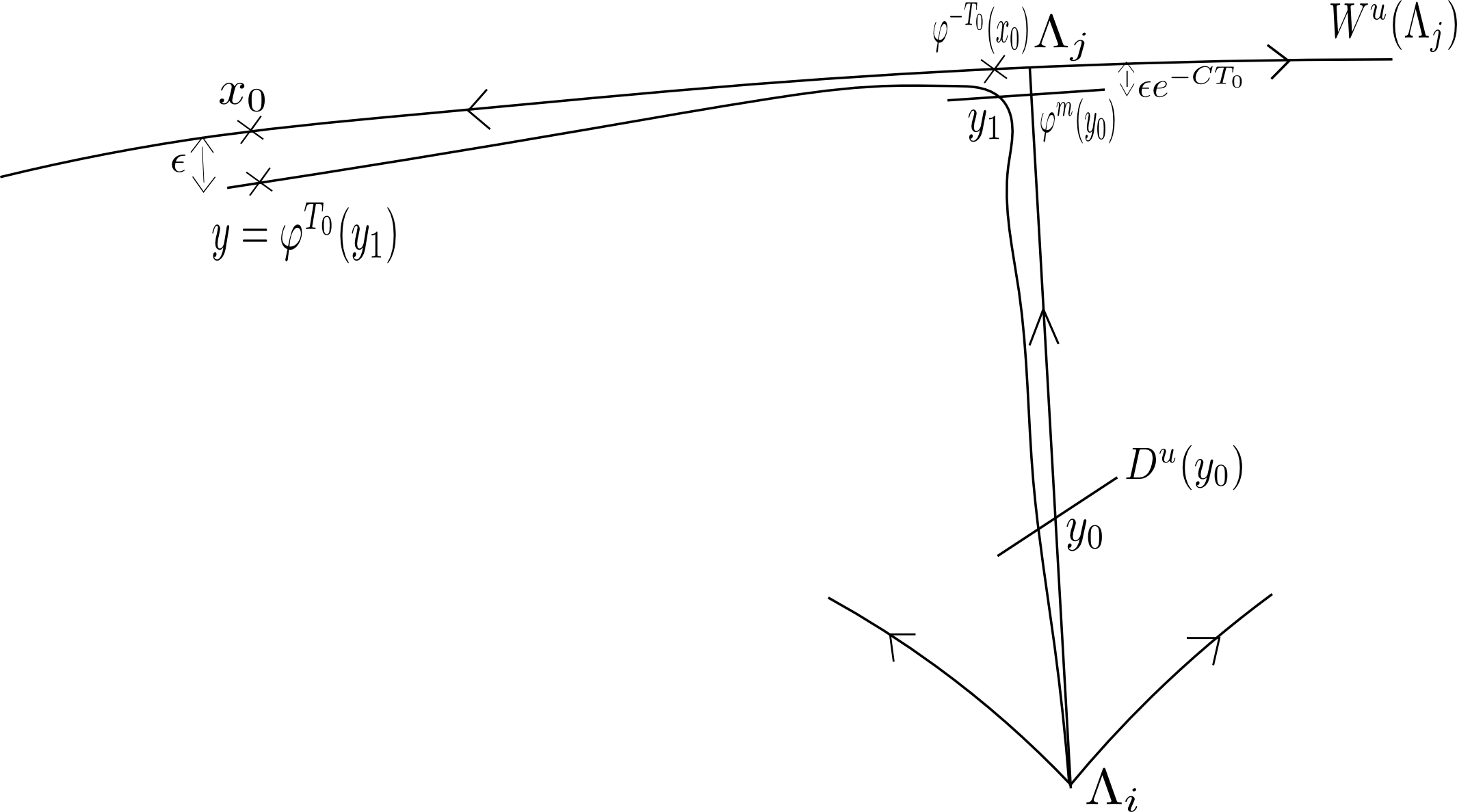}
\centering
\caption{Proof of Lemma~\ref{l:boundary}.}
\end{figure}
Recall that we said that the maximal expansion rate of 
the flow $\varphi^t$ is less than $C$. Hence, we have constructed a point $y=\varphi^{T_0}(y_1)\in W^u(\Lambda_i)$ which is 
$\eps$-close to $x_0$ as expected.

In the case where $\Lambda_j$ is a closed orbit, we shall fix $\Sigma$ to be a Poincar\'e section transversal to $\Lambda_j$. For every $T_0>0$ large enough, there will 
exist $T>T_0$ such that $\varphi^{-T}(x_0)$ belongs to $\Sigma$. As in the proof of Lemma~\ref{l:nocycle}, we have to use the $\lambda$-lemma 
for the induced Poincar\'e map and we deduce the results following the same lines as for a fixed point.
\end{proof}

We continue with the following Lemma~\cite[Lemma~3.5]{Sm60}:

\begin{lemm}\label{l:partialorder} Suppose $W^u(\Lambda_{i_1})\cap W^s(\Lambda_{i_2})\neq\emptyset$ and $W^u(\Lambda_{i_2})\cap W^s(\Lambda_{i_3})\neq\emptyset$. Then, one has
 $$W^u(\Lambda_{i_1})\cap W^s(\Lambda_{i_3})\neq\emptyset.$$
 Moreover, if $i_1=i_3$, then $i_1=i_2=i_3$. In particular, from Lemma~\ref{l:nocycle}, all these intersections are reduced to $\Lambda_{i_1}$.
\end{lemm}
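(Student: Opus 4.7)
The plan is to establish the transitivity assertion by combining Lemma~\ref{l:boundary}, the $\lambda$-Lemma (Theorem~\ref{t:palis}) and the Smale transversality axiom through the stability of transverse intersections, and then to derive the second statement from Lemma~\ref{l:nocycle}. The overall strategy mimics the proof of Lemma~\ref{l:boundary}, except that instead of merely producing points of $W^u(\Lambda_{i_1})$ accumulating on an arbitrary target point of $W^u(\Lambda_{i_2})$, we produce them close to a point where $W^u(\Lambda_{i_2})$ already hits $W^s(\Lambda_{i_3})$, and keep the intersection alive by transversality.

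First, I would pick $y_0 \in W^u(\Lambda_{i_2}) \cap W^s(\Lambda_{i_3})$ and set $d_2 := \dim W^u(\Lambda_{i_2})$. By the Smale transversality hypothesis~\eqref{e:transversality0}, $W^u(\Lambda_{i_2})$ and $W^s(\Lambda_{i_3})$ meet transversally at $y_0$ inside $M$. I would then fix a small open disk $D \subset W^u(\Lambda_{i_2})$ of dimension $d_2$ containing $y_0$ and transversal to $W^s(\Lambda_{i_3})$ at $y_0$. By persistence of transverse intersections, any sufficiently small $\ml{C}^1$-perturbation of $D$ still meets $W^s(\Lambda_{i_3})$ in a point close to $y_0$.

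Second, I would produce such a $\ml{C}^1$-perturbation inside $W^u(\Lambda_{i_1})$, copying the argument of Lemma~\ref{l:boundary}. Flow $y_0$ backwards by the flow $\varphi^{-t}$ until $\varphi^{-T}(y_0)$ lies inside a fixed linearizing (or tubular) chart $\ml{U}$ of $\Lambda_{i_2}$. Starting from any $x_0 \in W^u(\Lambda_{i_1}) \cap W^s(\Lambda_{i_2})$ (nonempty by hypothesis) and using Lemma~\ref{l:index-increase} to guarantee that $W^u(\Lambda_{i_1})$ has dimension at least $d_2$, fix a small disk $\tilde D \subset W^u(\Lambda_{i_1})$ of dimension $d_2$ through $x_0$ that is transversal to $W^s(\Lambda_{i_2})$. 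Applying the $\lambda$-Lemma to $\tilde D$ (using $f=\varphi^1$ when $\Lambda_{i_2}$ is a fixed point and the Poincar\'e return map when $\Lambda_{i_2}$ is a closed orbit, exactly as in the proof of Lemma~\ref{l:boundary}), there exists an integer/time $N$ such that the relevant connected component of $\varphi^{N}(\tilde D) \cap \ml{U}$ is arbitrarily $\ml{C}^1$-close to $W^u(\Lambda_{i_2}) \cap \ml{U}$ near $\varphi^{-T}(y_0)$. Now flowing forward by $\varphi^{T}$ translates this closeness, with a finite $\ml{C}^1$-distortion controlled by $T$, to a disk $D' \subset W^u(\Lambda_{i_1})$ that is $\ml{C}^1$-close to $D$ near $y_0$. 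By the first step, $D' \cap W^s(\Lambda_{i_3}) \neq \emptyset$, whence $W^u(\Lambda_{i_1}) \cap W^s(\Lambda_{i_3}) \neq \emptyset$.

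Finally, for the second statement, assume $i_1 = i_3$ and argue by contradiction supposing $i_2 \neq i_1$. Then $y_0 \in W^u(\Lambda_{i_2})$ cannot lie in $\Lambda_{i_1}$ (since $\Lambda_{i_1}$ and $\Lambda_{i_2}$ are disjoint as distinct hyperbolic critical elements and $\alpha(y_0)=\Lambda_{i_2}$). By construction, the point $y$ produced in the previous step lies arbitrarily close to $y_0$, so one can choose it outside $\Lambda_{i_1}$; yet $y \in W^u(\Lambda_{i_1}) \cap W^s(\Lambda_{i_1})$, contradicting Lemma~\ref{l:nocycle}. Hence $i_1 = i_2$, and symmetrically $i_2 = i_3$, and the last clause follows by a direct application of Lemma~\ref{l:nocycle} to each intersection. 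The main technical obstacle lies in Step two, namely in passing the $\ml{C}^1$-approximation provided by the $\lambda$-Lemma (which a priori is localized near $\Lambda_{i_2}$) to a neighborhood of the distant point $y_0$ while keeping enough $\ml{C}^1$-control to invoke the openness of transverse intersections; this is handled by conjugating with $\varphi^{T}$ on a compact orbit segment, where the differential is bounded.
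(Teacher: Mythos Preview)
Your argument is correct and complete, but it takes a somewhat different route from the paper's proof. The paper proceeds symmetrically: it pushes a disk $\tilde D^s\subset W^s(\Lambda_{i_3})$ (of dimension $\dim W^s(\Lambda_{i_2})$) backward and a disk $\tilde D^u\subset W^u(\Lambda_{i_1})$ (of dimension $\dim W^u(\Lambda_{i_2})$) forward, applying the $\lambda$-Lemma to both so that near $\Lambda_{i_2}$ they are $\ml{C}^1$-close to the local stable and unstable manifolds respectively; the desired intersection point $y_0$ is then obtained directly by dimension count inside the linearizing chart (or Poincar\'e section) of $\Lambda_{i_2}$. In contrast, you apply the $\lambda$-Lemma only once (on the $W^u(\Lambda_{i_1})$ side, as in Lemma~\ref{l:boundary}) and replace the second application by the stability of transverse intersections near the distant point $y_0\in W^u(\Lambda_{i_2})\cap W^s(\Lambda_{i_3})$, propagating the $\ml{C}^1$-closeness from the chart to $y_0$ via $\varphi^{T}$.

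Both arguments are standard; the paper's symmetric version has the minor advantage that the intersection is produced inside an arbitrarily small neighborhood of $\Lambda_{i_2}$, which makes the second assertion slightly more immediate (if $i_1=i_3\neq i_2$, one simply shrinks this neighborhood to miss $\Lambda_{i_1}$ and contradicts Lemma~\ref{l:nocycle}). Your version instead places the intersection near $y_0$ and uses that $y_0\notin\Lambda_{i_1}$ because $\alpha(y_0)=\Lambda_{i_2}$; this is equally valid. Your approach economizes on one use of the $\lambda$-Lemma at the cost of invoking the openness of transversality and the bounded $\ml{C}^1$-distortion of $\varphi^{T}$ on a compact orbit segment, which you correctly flag as the main technical point.
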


\begin{proof} The argument looks very much like the proof of the no-cycle Lemma~\ref{l:nocycle}. Let $x_2$ be an element in $W^u(\Lambda_{i_2})\cap W^s(\Lambda_{i_3})$. Let 
$\tilde{D}^s$ be a small disk inside $W^s(\Lambda_{i_3})$ which is of dimension $\text{dim}(W^s(\Lambda_{i_2}))$ and which contains $x_2$. Again, this is possible according to Lemma~\ref{l:index-increase}. 
By applying $\varphi^{-t}$ with $t>0$ large enough, we have two options which follows from the $\lambda$-Lemma:
\begin{itemize}
 \item $\Lambda_{i_2}$ is a hyperbolic fixed point -- see figure~\ref{f:Smaletheo2}. If we set $D^s(-t)$ to be connected component of $\varphi^{-t}(\tilde{D}^s)\cap O$ (where $O$ 
 is a neighborhood of $\Lambda_{i_2}$) 
 containing $\varphi^{-t}(x_2)$ in its interior, then $D^s(-t)$ is $\eps$ $\ml{C}^1$-close to the local stable manifold of $\Lambda_{i_2}$. Note that any element in 
  $D^s(-t)$ belongs to $W^s(\Lambda_{i_3})$ and that this small piece of disk has dimension $\text{dim}(W^s(\Lambda_{i_2}))$.
 \begin{figure}[ht]\label{f:Smaletheo2}
\includegraphics[width=12cm, height=6cm]{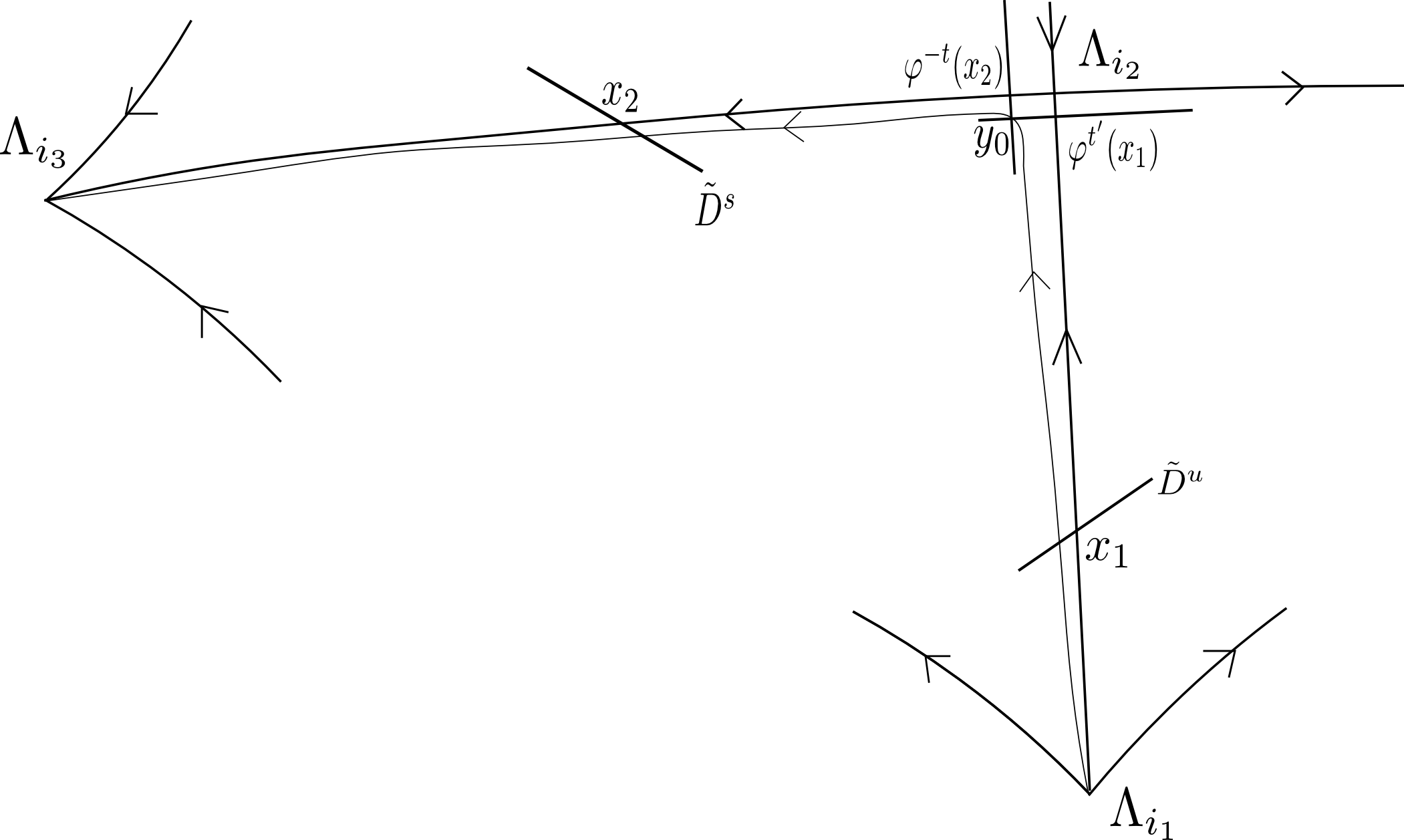}
\centering
\caption{Proof of Lemma~\ref{l:partialorder}}
\end{figure}
 \item $\Lambda_{i_2}$ is a hyperbolic closed orbit. As in the proof of Lemma~\ref{l:nocycle}, we fix a Poincar\'e section $\Sigma$ and we define a small disk $D^s(-t,\Sigma)$ inside $W^s(\Lambda_{i_3})\cap\Sigma$ 
 which is $\eps$ close to the stable manifold of the induced Poincar\'e map on $\Sigma$. Again, any point in $D^s(-t,\Sigma)$ is contained in the stable manifold of $\Lambda_{i_3}$. Note that this small disk has dimension $\text{dim}(W^s(\Lambda_{i_2}))-1$.
\end{itemize}
Then, we fix $x_1$ be an element in $W^u(\Lambda_{i_1})\cap W^s(\Lambda_{i_2})$. Working with positive times, we can similarly define either $D^u(t')$ (which is of dimension $\text{dim}(W^u(\Lambda_{i_2}))$) or $D^u(t',\Sigma)$ (which is of dimension $\text{dim}(W^u(\Lambda_{i_2}))-1$) inside 
$W^u(\Lambda_{i_1})$. Using the relations on the dimension as in the proof of Lemma~\ref{l:nocycle}, we know 
that these small pieces of disks have at least one point of intersection $y_0$. Hence, the point $y_0$ belongs to $W^u(\Lambda_{i_1})\cap W^s(\Lambda_{i_3})$ which 
concludes the proof of the first part.

Now, if $i_1=i_3$, then we have constructed a point $y_0$ in a neighborhood of $\Lambda_{i_2}$ which is contained in 
$W^u(\Lambda_{i_1})\cap W^s(\Lambda_{i_1})$. Then, the no-cycle Lemma implies finally $i_2=i_1$ as expected.
\end{proof}

The combination of Lemmas~\ref{l:boundary},~\ref{l:partialorder} and~\ref{l:boundary2} gives the proof of Smale's Theorem~\ref{t:smale} as we shall now explain. First, we note that, according to Lemma~\ref{l:partition}, 
one can find, for every $x\in M$, a unique $1\leq j(x)\leq K$ such that $x\in W^u(\Lambda_{j(x)})$. To prove the first part of Smale's Theorem, we can verify that, for every $1\leq i\leq K$,
$$\overline{W^u(\Lambda_i)}=\bigcup_{x\in \overline{W^u(\Lambda_i)}}W^u(\Lambda_{j(x)}).$$
Note that that the first inclusion $\subset$ is obvious by definition while the other one follows from Lemma~\ref{l:boundary2}. It now remains to show the partial order 
relation on the unstable manifolds inside $M$. We start with 
the transitivity property. Suppose that $W^u(\Lambda_{i_1})\preceq W^u(\Lambda_{i_2})$ and that $W^u(\Lambda_{i_2}) \preceq W^u(\Lambda_{i_3})$. This exactly means that $W^u(\Lambda_{i_1})\subset \overline{W^u(\Lambda_{i_2})}$ 
and $W^u(\Lambda_{i_2})\subset \overline{W^u(\Lambda_{i_3})}.$ Applying Lemma~\ref{l:boundary2} to $(i_1,i_2)$ and to $(i_2,i_3)$, we find a sequence $i_3=j_1,\ldots, j_m=i_1$ such that, for every $1\leq p\leq m-1$, 
$W^u(\Lambda_{j_{p}})\cap W^s(\Lambda_{j_{p+1}})\neq \emptyset$. 
Then, a combination of Lemmas~\ref{l:boundary} and~\ref{l:partialorder} shows that $W^u(\Lambda_{i_1})\subset \overline{W^u(\Lambda_{i_3})}$ which implies that $W^u(\Lambda_{i_1}) \preceq W^u(\Lambda_{i_3})$.
For the reflexivity, we suppose that both $W^u(\Lambda_i) \preceq W^u(\Lambda_j)$ and $W^u(\Lambda_j) \preceq W^u(\Lambda_i)$ hold true for 
some couple $(i,j)$ with $i\neq j$. Then, applying Lemmas~\ref{l:boundary2} and~\ref{l:partialorder} in this order, and as $i\neq j$, it provides the existence of a point $x$ accumulating to $\Lambda_i$ in the past and to $\Lambda_j$ in the future. Similarly, we can find a 
point $y$ such that the roles of $i$ and $j$ are reversed. Proceeding as in the proof of Lemma~\ref{l:partialorder}, we can find a point $z$ outside $\Lambda_i$ which accumulate to $\Lambda_i$ in the past and in the 
future. This contradicts Lemma~\ref{l:nocycle} and concludes the proof of the reflexivity.

\end{document}